\begin{document}

%%%%%%%%%%%%%%%%%%%%%%%%%%%%%%%%%%%%%%%%%%%%%%%%%%%%%%%%%%%%%%%%%%%%%%%%%%%%%%%%%%%%%%%%%%%%%%%%%%%%%%%%%%%%%%%%%%%%%%%%%%%%
%%%%%%%%%%%%%%%%%%%%%%%%%%%%%%%%%%%%%%%%%%%%%%%%%%%%%%%%%%%%%%%%%%%%%%%%%%%%%%%%%%%%%%%%%%%%%%%%%%%%%%%%%%%%%%%%%%%%%%%%%%%%

\renewcommand{\baselinestretch}{2}

\markright{ \hbox{\footnotesize\rm
%{\footnotesize\bf 24} (201?), 000-000
}\hfill\\[-13pt]
\hbox{\footnotesize\rm
%\href{http://dx.doi.org/10.5705/ss.20??.???}{doi:http://dx.doi.org/10.5705/ss.20??.???}
}\hfill }

\markboth{\hfill{\footnotesize\rm FIRSTNAME1 LASTNAME1 AND FIRSTNAME2 LASTNAME2} \hfill}
{\hfill {\footnotesize\rm FILL IN A SHORT RUNNING TITLE} \hfill}

\renewcommand{\thefootnote}{}
$\ $\par

%%%%%%%%%%%%%%%%%%%%%%%%%%%%%%%%%%%%%%%%%%%%%%%%%%%%%%%%%%%%%%%%%%%%%%%%%%%%%%%%%%%%%%%%%%%%%%%%%%%%%%%%%%%%%%%%%%%%%%%%%%%%

\fontsize{12}{14pt plus.8pt minus .6pt}\selectfont \vspace{0.8pc}
\centerline{\large\bf 
{ Sufficient and Necessary Conditions for the Identifiability of   }}
\centerline{\large\bf 
{ DINA  Models with Polytomous Responses }}
 %\vspace{2pt} 
%\centerline{\large\bf HERE IF A SECOND LINE IS NEEDED}
\vspace{.4cm} 
\centerline{Mengqi Lin and Gongjun Xu} 
\vspace{.4cm} 
\centerline{\it University of Michigan}
 \vspace{.55cm} \fontsize{9}{11.5pt plus.8pt minus.6pt}\selectfont

%%%%%%%%%%%%%%%%%%%%%%%%%%%%%%%%%%%%%%%%%%%%%%%%%%%%%%%%%%%%%%%%%%%%%%%%%%%%%%%%%%%%%%%%%%%%%%%%%%%%%%%%%%%%%%%%%%%%%%%%%%%%
\begin{abstract}
Cognitive Diagnosis Models (CDMs) provide a powerful statistical and psychometric tool for researchers and practitioners to learn fine-grained diagnostic information about respondents' latent attributes. 
There has been a growing interest in the use of CDMs for polytomous response data, as more and more items with multiple response options become widely used. Similar to many latent variable models, the identifiability of CDMs is critical for accurate parameter estimation and valid statistical inference. However, the existing identifiability results are primarily focused on binary response models and have not adequately addressed the identifiability of CDMs with polytomous responses.
This paper addresses this gap by presenting sufficient and necessary conditions for the identifiability of the widely used DINA model with polytomous responses, with the aim to provide a comprehensive understanding of the identifiability of CDMs with polytomous responses and to inform future research in this field.
%Specifically, there is increasing attention being paid to polytomous responses data as more and more polytomous items become popular. As many other latent variable models, identifiability is crucial for consistent estimation of the model parameters and valid statistical inference. However, existing identifiability results are mostly focused on binary responses, and identifiability for models with polytomous responses has been scarcely considered.
%This paper fills this gap and provides sufficient and necessary conditions for the identifiability of the basic and popular DINA assumptions with polytomous responses. 
% Specifically, our study focuses on the identifiability of two widely used polytomous model: the GPDINA \citep{GPDINA} and the Sequential DINA model \citep{sequential}. 
%{\bf Contents of the Abstract.}

\vspace{9pt}
\noindent {\it Keywords:}
identifiability, polytomous responses, $\QQ$-matrix, cognitive diagnosis models, DINA model
\end{abstract}
\par

\def\thefigure{\arabic{figure}}
\def\thetable{\arabic{table}}

\fontsize{12}{14pt plus.8pt minus .6pt}\selectfont

\newpage 
\afterpage{\cfoot{\thepage}}
\section{Introduction}

%%%%%%%%%%%%%%%%%%%%%%%%%
%%%%%%%%%%%%%%%%%%%%%%%%%

Cognitive Diagnosis Models (CDMs), which serve as a powerful tool to infer subjects' latent attributes such as skills, knowledge, or psychological disorders based on their responses to some designed diagnostic items in the cognitive diagnosis assessment, have drawn increasing attention over the years.
As a family of  discrete latent variable models, its popularity is not limited to educational assessments 
\citep*{Junker,davier2008general,henson2009defining,rupp2010diagnostic,de2011generalized,wang2018tracking}, psychiatric diagnosis of mental disorders \citep*{Templin,dela2018}, and epidemiological and medical measurement studies \citep*{wu2017nested,o2019causes}.

Various CDMs have been developed with different diagnostic assumptions and modeling goals, among which the Deterministic Input Noisy output ``And" gate model \citep[DINA;][]{Junker}, which assumes that subjects are expected to complete an item correctly only when they possess all required attributes, is one of the most popular ones. Furthermore, the DINA model also serves as a basis for a larger range of more general CDMs, including the general diagnostic model \citep{davier2008general}, the log linear CDM \citep[LCDM;][]{henson2009defining}, and the generalized DINA model \citep[GDINA;][]{de2011generalized}. As tests with polytomous responses appear more frequently in practice, the study of CDMs with polytomous responses has also grown in popularity \citep{culpepper2021inferring}.
Specifically, 
several models concerning polytomous responses were proposed, such as General Diagnostic Models \citep[GDM;][]{davier2008general}, General Polytomous Diagnosis Models \citep[GPDM;][]{GPDINA}, and Sequential Cognitive Diagnosis Models \citep[Sequential CDM;][]{sequential}.

As is the case with many statistical methods, ensuring the models applied in the cognitive diagnosis are statistically \emph{identifiable} is fundamental
to achieve reliable and valid diagnostic assessment. Additionally, this is also a necessity for consistent estimation of the model parameters of interest and  valid statistical inferences.  
 The study of identifiability issue for CDMs has long been considered, such as \citet{DiBello}, \citet{MarisBechger}, \citet{TatsuokaC09}, \citet{deCarlo2011}, and \citet{davier2014dina}. Considerable identifiability developments have been added to the CDM literature, such as DINA model and its generalizations in recent years. For instance, \citet{xu2016} and \citet{id-dina} discussed the sufficient and necessary condition for DINA model with binary responses. \citet{xu2017},  \citet{slam,partial}, \citet{chen2020sparse} and \cite{culpepper2022note} discussed identifiability for more generally restricted latent class models. However,
these results are targeted for dichotomous responses specifically, and the requirements for the identifiability of models with polytomous responses have sparingly been taken into consideration. For instance, \citet{Culpepper2019} and
\citet{fang2019}
discussed the sufficient condition for the identifiability of general CDMs with polytomous responses, while the necessity of those conditions remains an open problem.
%necessary condition is unknown.

Our paper fills this gap by providing sufficient and necessary conditions for the identifiability of CDMs with polytomous responses. 
In particular, we focus on two commonly used polytomous responses models under the DINA model setting: the GPDM \citep{GPDINA} under the DINA model, which we refer as \emph{GPDINA}, and the sequential CDM \citep{sequential} under the DINA model, which we refer as \emph{Sequential DINA model}. There are several challenges in developing the identifiability of the polytomous responses models. Firstly,
in binary responses DINA models, 
the uncertainty of each item is charaterized by two item parameters, whereas in polytomous responses models, each item generally involves more than two parameters. Therefore, polytomous responses models have more parameters to identify, which makes its identifiability more challenging. 
What is more intricate is that the dependency structure between these parameters is different from that of the binary response models. This is because, in addition to accounting for dependencies across items, polytomous models must also consider the dependency of parameters within a single item.
Moreover, the technical tool, $\TT$-matrix \citep{JLGXZY2011,xu2017}, which has been widely used in the identifiability literature, is restricted to binary responses models currently, to our knowledge. 

To address these challenges, we generalize the $\TT$-matrix framework to the more complex polytomous model settings. 
Based on different dependency structure of the parameters of the two models, the generalizations of the $\TT$-matrix for the two considered models (i.e., GPDINA and sequential settings) are also different. In particular, there is a significant difference in the structure of the $\TT$-matrix for Sequential DINA model, as compared to the $\TT$-matrix for binary DINA models, since the sequential modeling introduces more complex and challenging structure than the binary DINA case.
With this powerful tool,
we establish   sufficient and necessary conditions for the identifiability of the GPDINA and the Sequential DINA models.
%, whereas for the Sequential DINA model, we present the sufficient condition and necessary condition separately. 
Our proposed conditions   ensure  the identifiability and also specify the practical requirements that the two models need to process to be identifiable. 
Through the duality of the DINA and DINO models \citep{chen2015statistical}, the identifiability finding can be immediately applied to the two models under the  DINO setting.
Moreover, our results not only extend many existing results aimed at binary data to the polytomous case, but also shed light on the study of more general polytomous CDMs, which cover the considered  DINA models as submodels. 
%provide a guideline for designing statistically valid and estimable tests, 
Practically, the sufficient and necessary condition solely depends on the $\QQ$-matrix structure, and this easily verifiable requirement would serve as a practical guideline for developing cognitive tests that are both statistically valid and estimable. 

The rest of the paper is organized as follows. Section 2 introduces the model setup and brings up the definition of identifiability. Section 3 introduces a powerful tool $\TT$-matrix, specific to the polytomous responses models and develops the identifiability results, examples are also provided for illustration. Section 4 gives further discussion, and the supplementary material provides the proofs for the main results.

\section{Model Setup}
Before we present our results, we first introduce some notations. 
Let $\ee_j = (0, \ldots, 1, 0, \ldots, 0)^\top$ denote the vector where only the $j$-th entry is 1. Let $\one = (1,\ldots,1)^\top$ denote the vector of all ones and $\zero = (0, \ldots, 0)^\top$ denote the vector of all zeros. Let $\cI_K$ denote the $K$-dimensional identity matrix. For a positive integer $m$, we denote $[m] = \{1, \ldots, m\}$. Let $\circ$ denote the Hadamard product (element-wise product) of vectors. For instance, for $\aa = (a_1, \ldots, a_m)^\top$ and $\bb = (b_1, \ldots, b_m)^\top$, $\aa \circ \bb = (a_1b_1, \ldots, a_m b_m)^\top$.
Let $\otimes$ denote the Kronecker product between matrices. For example, for $\cc = (c_1, \ldots, c_n)^\top \in \R^n$, 
\begin{equation*}
    \aa \otimes \cc = \begin{pmatrix}
        a_1\cc\\
        a_2\cc\\
        \vdots\\
        a_m\cc
    \end{pmatrix} \in\R^{mn\times1},\;\;\aa \otimes \cI_K = \begin{pmatrix}
    a_1\cI_K\\
    a_2\cI_K\\
    \vdots\\
    a_m\cI_K
\end{pmatrix} \in \R^{mK \times K}.
\end{equation*}

Assume we have $J$ polytomous items to measure $K$ unobserved binary latent attributes, 
and a binary latent attribute profile can be written as $\aaa=(\alpha_1,\ldots,\alpha_K)^\top$, where $\alpha_k \in \{0, 1\}$. So there are $2^K$ attribute profiles in total.
For $j \in [J]$, 
define positive integer $H_j$ to be the number of non-zero categories (levels) 
the $j$-th polytomous item has, therefore, item $j$ has $H_j+1$ categories in total, i.e., 0, 1, $\ldots, H_j$. 
Accordingly, we define the observed random variable response $\RR =(R_1,\ldots,R_J)^\top$, with $R_j \in \{0, 1, \ldots, H_j\}$, and denote the set of all possible responses as   $\mcs = \{\rr = (r_1, \ldots r_J): r_j \in \{0, 1, \ldots, H_j\}\}$.

In the CDM literature, the relationships  between attributes and  items are characterized by the $\QQ$-matrix, which was proposed by \citet{Tatsuoka1983}. Different from CDMs with binary responses, for polytomous responses, 
the
interpretations of the entries in the $\QQ$-matrix differ according to different modelings. In the following, we focus on two popular models under the DINA assumption, the \emph{general polytomous diagnosis model} (GPDINA) by \citet{GPDINA} and the \emph{Sequential DINA model} by \citet{sequential} separately and introduce different ways of specifying the $\QQ$-matrix for polytomous CDMs. 

\subsection{The GPDINA model}\label{sec-setup-GPDINA}
In GPDINA \citep{GPDINA} (the GPDM under the DINA assumption), for models with $J$ items and $K$ attributes, we define a $J \times K$ binary $\QQ$-matrix. %\todo{revised: R or binary? }
The entry $q_{jk}$ of the $\QQ$-matrix is interpreted as follows: $q_{jk} = 1$ means completing (responding) any non-zero category of item $j$ requires attribute $k$, and $q_{jk} = 0$ means completing any non-zero categories does not require attribute $k$. So the $j$-th row of the $\QQ$-matrix, $\qq_j$, denotes the attributes required to complete any non-zero categories for item $j$. Therefore, any non-zero category of the same item requires the same attributes and shares the same $\qq$-vector. In other words, non-zero categories of an item are indistinguishable, can be exchanged.

We consider the DINA assumption  under the GPDINA framework. As in the DINA model for binary data, we denote the ideal response $\xi_{j, \aaa} = I(\aaa \succeq \qq_{j})$. To further quantify the  uncertainty of the responses, define the item parameters as:
\begin{align}\label{param:GPDINA}
    \theta_{j,l}^+ :=  P(R_{j}=l\mid \xi_{j,\aaa} =1),\;\;l \in [H_j],\\
    \theta_{j,l}^- :=  P(R_{j}=l\mid \xi_{j,\aaa} =0),\;\;l \in [H_j],
\end{align}
where $\theta_{j,l}^+$ means the probability of completing category $l$ of item $j$ given the attribute profile $\aaa$ is capable of completing it and $\theta_{j,l}^-$ means the probability of completing category $l$ of item $j$ given the attribute profile $\aaa$ is not able to complete it.
Then $1-\theta_{j,l}^+$ can be interpreted as slipping parameter and $ \theta_{j,l}^-$ interpreted as the guessing parameter \citep{Junker}, and we assume that $\theta_{j,l}^+ > \theta_{j,l}^-$ for 
$l \in [H_j]$ and $j \in [J]$. As we can see,
although the attributes required by different categories of the same item are the same, here we allow the response uncertainty to be heterogeneous, i.e., $\theta_{j,l}^+$ and $\theta_{j,l}^-$ can be different across $l$. 
So in total we have $2\sum_{j=1}^J H_j$ item parameters, and the multiplicity of the item parameters is one of the aspects that makes polytomous responses models different from the binary DINA models.
For notation convenience, we also let
\begin{align}
    P(R_{j}=0\mid \xi_{j,\aaa} =1) = 1-\sum_{l=1}^{H_j}\theta_{j,l}^+\;:= \theta_{j,0}^+,\\
    P(R_{j}=0\mid \xi_{j,\aaa} =0) = 1-\sum_{l=1}^{H_j}\theta_{j,l}^- \;:= \theta_{j,0}^-.
\end{align}

When $\qq_{j} = \zero$, $\xi_{j,\aaa} \equiv 1$ for all $\aaa$, then $\theta_{j,l}^-$ is not defined for all $l \in [H_j]$. In the following Proposition~\ref{propGPDINA}, we  will show that excluding these zero $\qq$-vectors does not affect our analysis. 
%Therefore, for simplicity, we may assume that $\qq_j \neq \zero$ for all $j \in [J]$. 
Let 
\[
\ttt_{j}^+ = (\theta_{j,1}^+,\; \theta_{j,2}^+, \;\ldots \theta_{j,H_{j}}^+)^\top \; \text{and} \;\;\ttt_{j}^- = (\theta_{j,1}^-,\; \theta_{j,2}^-, \;\ldots \theta_{j,H_{j}}^-)^\top,
\]
$\ttt^+ = (\ttt_{j}^+)_{j=1}^J$ and  $\ttt^- = (\ttt_{j}^-)_{j=1}^J$, where there are $\sum_{j=1}^J H_j$ entries in both $\ttt^+$ and $\ttt^-$. Denote $p_{\aaa}$ as the proportion of attribute profile $\aaa$ in the population and $\pp:= (p_{\aaa}:\aaa\in\{0, 1\}^K)^\top$, which satisfies  $\sum_{\aaa \in \{0, 1\}^K}p_{\aaa}=1$, and we assume that $p_{\aaa}>0$ for all $\aaa$.
Given the attribute profile $\aaa$, assume that a subject's responses to the $J$ items are independent. For $\rr= (r_1,\ldots,r_J)^\top \in \mcs$, we have
 \begin{equation}\label{prob}
 P(\RR=\rr\mid \QQ,\ttt^+,\ttt^-,\pp) 
 =  \sum_{\aaa\in\{0, 1\}^K}p_{\aaa} \prod_{j=1}^J (\theta_{j, r_j}^+)^{\xi_{{j},\aaa} } (\theta_{j, r_j}^-)^{(1-\xi_{{j},\aaa})} .
\end{equation}
We use the following example to further the illustration of the model setup.
\begin{example}\label{ex1}
Suppose there are two polytomous items, each with two non-zero categories, so then $J = 2$ and $H_1 = H_2 = 2$. Suppose only two attributes $\alpha_1$ and $\alpha_2$ are involved, and the $\QQ$-matrix takes the following formula:
\[
\QQ = 
\begin{pmatrix}
item \;1 \begin{cases}
 \;\;\;\qq_{1} = [1\;\; 0] \\
\end{cases}\\
    \hdashline
item \;2 \begin{cases}
\;\;\;\qq_{2} = [0\;\; 1]\\
\end{cases}\\
\end{pmatrix}.
\]
The dashline ``- - -" is used to separate different items.
Therefore, the first and the second categories of the first item both require solely $\alpha_1$, and the first and the second categories of the second item both require solely $\alpha_2$. In particular, attribute profile $\aaa = (1, 0)$ has $\xi_{1,\aaa} = 1$ and $\xi_{2,\aaa} = 0$. Thus,
\[
P(R_1=1\mid \aaa) = \theta_{1,1}^+;\;\;
P(R_1=2\mid \aaa) = \theta_{1,2}^+;\;\;P(R_2=1\mid \aaa) = \theta_{2,1}^-;\;\;P(R_2=2\mid \aaa) = \theta_{2,2}^-,
\]
whereas for attribute profile $\aaa = (0, 1)$,  $\xi_{1,\aaa} = 0$,  $\xi_{2,\aaa} = 1$, and
\[
P(R_1=1\mid \aaa) = \theta_{1,1}^-;\;\;P(R_1=2\mid \aaa) = \theta_{1,2}^-;\;\;P(R_2=1\mid \aaa) = \theta_{2,1}^+;\;\;P(R_2=2\mid \aaa) = \theta_{2,2}^+.
\]
Therefore, attribute profile $\aaa = (1, 0)$ has higher probability of completing the two non-zero categories of the
first item but lower probability of completing the two non-zero categories of the second item. Distributions for profiles with $\aaa = (1, 1)$ and $\aaa = (0, 0)$ can be similarly obtained as well. 
\end{example}
Under the above GPDINA model setup, the model parameters include $(\ttt^-, \ttt^+, \pp)$. To study the identifiability of these parameters, we formally introduce the definition in the following, and we defer the identifiability result in Section~\ref{sec-id}.

\paragraph{Identifiability.}
We say that the GPDINA parameters are identifiable if there  is no $(\bar{\ttt}^+,\bar{\ttt}^-,\bar{\pp})$ $ \neq (\ttt^+,\ttt^-,\pp)$ such that
\begin{equation}\label{eq-orig}
P(\RR=\rr\mid \QQ,\ttt^+,\ttt^-,\pp) = P(\RR=\rr\mid \QQ,\bar{\ttt}^+,\bar{\ttt}^-,\bar{\pp}) \mbox{ for all }  \rr\in \mcs. 
\end{equation}

To simplify our discussion of the identifiability issue, we 
assume that $\qq_{j} \neq 0$ for all $j \in [J]$ without compromising the validity of the analysis, thanks to  the following proposition.
\begin{restatable}{proposition}{propGPDINA}\label{propGPDINA}
    Let $\Delta = \{j \in [J]: \qq_j = \zero\}$ denote the set of items whose $\qq$-vectors are zero, then the GPDINA model parameters with $\QQ$-matrix are identifiable if and only if the GPDINA model parameters with $\QQ_{-\Delta}$-matrix are identifiable, where $\QQ_{-\Delta}$ is obtained by removing the $\qq$-vectors in $\QQ$ corresponding to the
    items in $\Delta$.
\end{restatable}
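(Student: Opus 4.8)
The plan is to reduce the whole claim to a single factorization. When $\qq_j=\zero$ we have $\xi_{j,\aaa}\equiv 1$ for every $\aaa\in\{0,1\}^K$, so in \eqref{prob} item $j$ contributes only the factor $\theta_{j,r_j}^+$, which is free of $\aaa$ and pulls out of the sum over attribute profiles. Writing $\rr_{-\Delta}$, $\ttt^+_{-\Delta}$, $\ttt^-_{-\Delta}$ for the sub-vectors indexed by $[J]\setminus\Delta$, I would first record that for every $\rr\in\mcs$,
\begin{equation*}
P(\RR=\rr\mid\QQ,\ttt^+,\ttt^-,\pp)=\Bigl(\prod_{j\in\Delta}\theta_{j,r_j}^+\Bigr)\,P\bigl(\RR_{-\Delta}=\rr_{-\Delta}\mid\QQ_{-\Delta},\ttt^+_{-\Delta},\ttt^-_{-\Delta},\pp\bigr).
\end{equation*}
This is immediate from $\xi_{j,\aaa}=I(\aaa\succeq\qq_j)$: deleting zero $\qq$-vectors changes neither the attribute space $\{0,1\}^K$, nor the proportions $\pp$, nor the ideal responses $\xi_{j,\aaa}$ for $j\notin\Delta$, so the bracketed sum is exactly the $\QQ_{-\Delta}$-model probability.

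For the ``if'' direction, assume the $\QQ_{-\Delta}$-model is identifiable and that $(\bar\ttt^+,\bar\ttt^-,\bar\pp)$ gives the same response distribution as $(\ttt^+,\ttt^-,\pp)$ under $\QQ$. Summing the factorized identity over all responses to the items in $[J]\setminus\Delta$ (each marginal summing to one) yields $\prod_{j\in\Delta}\theta_{j,r_j}^+=\prod_{j\in\Delta}\bar\theta_{j,r_j}^+$ for every $(r_j)_{j\in\Delta}$. To extract the individual parameters I would fix $j_0\in\Delta$, sum over $r_{j_0}\in\{0,\dots,H_{j_0}\}$ and use $\sum_{l=0}^{H_{j_0}}\theta_{j_0,l}^+=1$ to cancel the $j_0$-factor, then pick for each remaining $j\in\Delta$ a category $r_j$ with $\theta_{j,r_j}^+>0$ (possible since $\sum_{l}\theta_{j,l}^+=1$) and divide, obtaining $\theta_{j_0,l}^+=\bar\theta_{j_0,l}^+$ for all $l$ and all $j_0\in\Delta$. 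Substituting this back and cancelling the now common, strictly positive prefactor reduces the hypothesis to equality of the two $\QQ_{-\Delta}$-model distributions, so identifiability of the $\QQ_{-\Delta}$-model forces $\ttt^+_{-\Delta}=\bar\ttt^+_{-\Delta}$, $\ttt^-_{-\Delta}=\bar\ttt^-_{-\Delta}$, $\pp=\bar\pp$; together with the just-established $\Delta$-coordinates this is $(\ttt^+,\ttt^-,\pp)=(\bar\ttt^+,\bar\ttt^-,\bar\pp)$.

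For the ``only if'' direction, assume the $\QQ$-model is identifiable and $(\bar\ttt^+,\bar\ttt^-,\bar\pp)$ matches $(\ttt^+,\ttt^-,\pp)$ for the $\QQ_{-\Delta}$-model. Here I would \emph{lift}: append one fixed admissible block $\ttt^+_\Delta$ for the deleted items (e.g.\ $\theta_{j,l}^+=1/(H_j+1)$, $l=0,\dots,H_j$) to \emph{both} parameter vectors; by the factorization the two lifted parameters induce identical distributions under $\QQ$, so identifiability of the $\QQ$-model forces $\ttt^+_{-\Delta}=\bar\ttt^+_{-\Delta}$, $\ttt^-_{-\Delta}=\bar\ttt^-_{-\Delta}$, $\pp=\bar\pp$ --- exactly what is needed, since the $\QQ_{-\Delta}$-model carries no $\theta^-$ parameters for the deleted items.

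The step requiring the most care is not deep but organizational: keeping the two parametrizations consistent, in particular that for the $\QQ$-matrix the vector $\ttt^-$ has no entries for items in $\Delta$ (the $\theta^-_{j,l}$ being undefined there), so the ``lift'' above compares genuinely admissible parameters; and, in the ``if'' direction, the elementary but fiddly extraction of each $\theta^+_{j,l}$ with $j\in\Delta$ from a product of such terms, handled by the marginalize-and-divide argument, which uses only that every item has at least one strictly positive $\theta^+_{j,l}$ (forced by normalization). If $\Delta=[J]$ the equivalence holds vacuously, since then $\pp$ drops out of \eqref{prob} and neither model is identifiable.
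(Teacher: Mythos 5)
Your proof is correct and is essentially the paper's argument: the paper phrases the same key factorization through the $\TT$-matrix (the rows $\TT_{l\ee_j}$ for $j\in\Delta$ give $\theta_{j,l}^+=\bar\theta_{j,l}^+$ directly, and every row with $\rr_\Delta\neq\zero$ factors as $\prod_{j\in\Delta}\theta_{j,r_j}^+$ times a row of the reduced system), whereas you work with the raw response probabilities and make the ``only if'' lifting step explicit. The substance --- pull out the $\aaa$-free factors for zero $\qq$-vectors, identify them from marginals, and reduce to the $\QQ_{-\Delta}$ system --- is identical.
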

\subsection{The Seqential DINA model}\label{sec:seq}
Another popular modeling approach for polytomous responses is the Sequential DINA model, proposed by \citet{sequential}.
In the Sequential DINA model with $J$ items and $K$ attributes, we assume that the subject's response to item $j \in [J]$, 
with $R_j = 0$ indicates that the subject fails to complete the first category, and $R_j = r_j >0$ indicates that the subject
has completed categories $1, \ldots, r_j$ successfully and failed to complete category $r_j+1$. $R_j=H_j$ simply means the subject successfully completed all the categories.  
Therefore, categories within one item are not exchangeable, and such ordered categories make it different from the previous GPDINA model setup.
%, where categories of an item greater than 1 are indistinguishable. 

Due to the sequential hierarchy of the categories, different categories could require different attributes. What's worth noticing is that, though response categories are assumed to be attained sequentially, there is no required structure for the attributes required by different categories. 
For each item $j$, its different categories should have their corresponding $\qq$-vectors. In \citet{sequential}, they refer such $\QQ$-matrix as \emph{restricted} $\QQ$-matrix.
As defined, the polytomous item $j$ has $H_j$ non-zero categories, so for the associations between the attributes and the polytomous item $j$, we have $H_j$ rows in the $\QQ$-matrix to characterize such information. With each row having $K$ entries indicating which attributes are required by the category, the $\QQ$-matrix can be summarized as a $(\sum_{j=1}^J H_j) \times K$ binary matrix.
Specifically,  we index the $\QQ$-matrix in the following way: for $l \in [H_j]$, we define the $(j,l)$-th row of the matrix, $\qq_{j,l}$, as a $K$ dimensional binary vector indicating the association between the  
 category $l$ of
item $j$ and the $K$ attributes. 
% For notation convenience, we may also define $\qq_{j, 0} := \zero$. 
According to our model construction, the $\qq_{j, l}$ vector indicates the attributes required to complete category $l$ of item $j$, \emph{given that} the subject has successfully completed the previous categories $1, \ldots, l-1$. To further illustrate the model setup, we present an example in the following.

\begin{example}\label{ex2}
Suppose there are two polytomous items with $H_1 = H_2 = 2$ and  two attributes $\alpha_1$ and $\alpha_2$, and
\begin{equation}
\QQ = 
\begin{pmatrix}
item \;1 \begin{cases}
category \;1 \;\;\;\qq_{1,1} = [1\;\; 0] \\
category \;2\;\;\;\qq_{1,2} = [0\;\; 1]\\
\end{cases}\\
\hdashline
item \;2 \begin{cases}
category \;1 \;\;\;\qq_{2,1} = [0\;\; 1] \\
category \;2\;\;\;\qq_{2,2} = [1\;\; 1]\\
\end{cases}
\end{pmatrix}.
\end{equation}
Therefore, to complete the first category of the first item, a subject needs to require the first attribute, and given that the subject has completed the first category, he/she needs to require the second attribute to complete the second category of the first item.
\end{example}
Since different categories require different attributes, the ideal response needs to be specified accordingly to different categories. We define the ideal response as $\xi_{j, l, \aaa} = I(\aaa \succeq \qq_{j,l})$ for category $l$ of item $j$. This is also different from the setup in GPDINA, for which  we only need to define item-wise ideal response.
To quantify the uncertainty of the response to different categories, we define the item parameters specific to the Sequential DINA model as:
\begin{align}\label{seq:param}
    \beta_{j,l}^+ &:=  P(R_{j}\geq l\mid R_{j}\geq l-1,\; \xi_{j, l, \aaa} =  1),\;\;\;\; l \in [H_j],\\
    \beta_{j,l}^- &:=  P(R_{j}\geq l\mid R_{j}\geq l-1,\; \xi_{j, l, \aaa} = 0),\;\;\;\; l \in [H_j],
\end{align}
and we assume that $0 \leq \beta_{j,l}^- < \beta_{j,l}^+ \leq 1$. Note that the inequality $\beta_{j,l}^- <\beta_{j,l}^+$ is assumed to respect the monotonicity assumption of the latent attributes \citep{xu2016}, which is also needed to avoid the label switching issue of the DINA model.  Consequently, $\beta_{j,l}^-$ is permitted to take on values within the range $[0, 1)$ while  $\beta_{j,l}^+$ can take on values within the range $(0, 1]$.
These parameters characterize the probability of completing category $l$ of item $j$ \emph{given} a subject with attributes $\aaa$ has completed the previous categories. 
Furthermore, $1- \beta_{j,l}^+$ can be interpreted as the slipping parameter and $ \beta_{j,l}^-$  interpreted as the guessing parameter \citep{Junker}.
Also notice that
\begin{align*}
    P(R_{j}\geq 0\mid \xi_{j, l, \aaa} =  1) = 1,\;\;\;\; l \in [H_j],\\
      P(R_{j}\geq 0\mid \xi_{j, l, \aaa} = 0)  = 1,\;\;\;\; l \in [H_j],
\end{align*}
and we let $\beta_{j,H_j+1}^+ = \beta_{j,H_j+1}^- = 0$.

To see how these item parameters are related to the model setup in \citet{sequential}, we formulate several concepts in their paper as the following.
The \emph{processing function} 
$S_j(l\,|\,\aaa)$
in \citet{sequential}, which denotes the probability of completing category $l$ of item $j$ provided that they have already completed category $l-1$ successfully, given the attribute profile $\aaa$, can be written as
\[
S_j(l\,|\,\aaa) = (\beta_{j,l}^+)^{\xi_{j,l,\aaa}}
(\beta_{j,l}^-)^{1-\xi_{j,l,\aaa}}= P(R_{j}\geq l\mid R_{j}\geq l-1,\; \aaa), \;\;\; l \in [H_j].
\]
Let $S_j(0\,|\,\aaa) = P(R_j \geq 0\; |\; \aaa)= 1$ and $S_j(H_j+1\,|\,\aaa) = 0$.
Then noticing that 
\begin{align*}
    P(R_j \geq r_{j}\, |\; \aaa) &= \prod_{l=1}^{r_j} P(R_j \geq l\; |\; R_j \geq l-1,\; \aaa)\cdot P(R_j \geq 0\; |\; \aaa)\\
    &= \prod_{l=1}^{r_j} S_j(l\,|\,\aaa)\\
    &= \prod_{l=1}^{r_j} (\beta_{j,l}^+)^{\xi_{j,l,\aaa}}
(\beta_{j,l}^-)^{1-\xi_{j,l,\aaa}},
\end{align*}
given the attribute profile $\aaa$, the probability of $R_j = r_j$ can be written as
\begin{align*}
    P(R_j = r_j\, |\; \aaa) &= P(R_j \geq r_{j}\, |\; \aaa) - P(R_j \geq r_j+1 \,|\; \aaa)\\
    &= \left[1-S_j(r_j+1\,|\,\aaa)\right] \prod_{l=0}^{r_j} S_j(l\,|\,\aaa).
\end{align*}

Similar to   GPDINA, when $\qq_{j,l} = \zero$, $\xi_{j,l,\aaa} \equiv 1$ for all $\aaa$, and then $\beta_{j,l}^-$ is not defined. We will show later in Proposition~\ref{propseq} that excluding these categories with $\qq_{j,l} = \zero$ does not affect our analysis. Note that when $\beta_{j,l}^- = 0$ ($\qq_{j,l}$ is not necessarily $\zero$), some model parameters may not be well-defined. 
Suppose category $l^*$ is the first category in item $j$ which appears to have $\beta_{j,l^*}^- = 0$, i.e., $\beta_{j,l}^- > 0$ for $l < l^*$.
If we denote $\Gamma_{j,l^*}^- := \{\aaa: \xi_{j,l^*,\aaa} = 0\}$ as the set of attribute profiles that are not able to complete the $l^*$-th category of item $j$ ideally, and if the probability of  guessing correctly category $l^*$ of item $j$ is also 0, then there's no way for the subject to complete  higher categories of item $j$. So we define for $\aaa \in \Gamma_{j,l^*}^-$,
\begin{equation}\label{eq:beta zero}
    \beta_{j,l}^+ = \beta_{j,l}^- = 0,\;\;\text{for}\;\; l > l^* .
\end{equation}

Assume that a subject’s responses
to the $J$ items are conditionally independent given the attribute profiles. We let \[\bbb_j^+ = \left(\beta_{j,1}^+,\; \beta_{j,1}^+\beta_{j,2}^+,\; \ldots \;\prod_{l=1}^{H_j}\, \beta_{j,l}^+\right) \;\text{and}\; \bbb_j^- = \left(\beta_{j,1}^-,\; \beta_{j,1}^-\beta_{j,2}^-,\; \ldots \;\prod_{l=1}^{H_j}\, \beta_{j,l}^-\right),\; \text{for}\; j \in [J] \]
and $\bbb^+ = (\bbb_1^+, \bbb_2^+, \ldots, \bbb_J^+)$, $\bbb^- = (\bbb_1^-, \bbb_2^-, \ldots, \bbb_J^-)$, then 
 \begin{equation}
 P(\RR=\rr\mid \QQ,\bbb^+,\bbb^-,\pp) 
 =  \sum_{\aaa\in\{0, 1\}^K}p_{\aaa} \prod_{j=1}^J P(R_j = r_j |\; \aaa).
\end{equation} 
The Sequential DINA model parameters consist of $(\bbb^+, \bbb^-, \pp)$. Following the  literature, we formally define the identifiability for the Sequential DINA model in the following.
\paragraph{Identifiability.} We say that the Sequential DINA model parameters are identifiable if there is no $(\bar{\bbb}^+,\bar{\bbb}^-,\bar{\pp})\neq (\bbb^+,\bbb^-,\pp)$ such that
\begin{equation}\label{eq-orig-seq}
P(\RR=\rr\mid \QQ,\bbb^+,\bbb^-,\pp) = P(\RR=\rr\mid \QQ,\bar{\bbb}^+,\bar{\bbb}^-,\bar{\pp}) \mbox{ for all }  \rr\in \mcs. 
\end{equation}

Similar to   GPDINA, in the following proposition, we show that excluding categories with $\qq_{j,l} = \zero$ does not influence our analysis of the identifiability.  Therefore, for simplicity, we  
assume that $\qq_{j,l} \neq 0$ for all $l \in [H_j],\; j \in [J]$ in this paper. 

\begin{restatable}{proposition}{propseq}\label{propseq}
    Let $\Delta^s = \left\{(j, l) :  \qq_{j,l} = \zero\right\}$ denote the set of categories whose $\qq$-vectors are zero, then the Sequential DINA model parameters with $\QQ$-matrix are identifiable if and only if the Sequential DINA model  parameters with $\QQ_{-\Delta^s}$-matrix are identifiable, where $\QQ_{-\Delta^s}$ is obtained by removing the $\qq$-vectors in $\QQ$ corresponding to the
    categories in $\Delta^s$.
\end{restatable}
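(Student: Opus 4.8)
The plan is to remove the zero categories one at a time and show that each removal preserves identifiability in both directions. Since deleting one row of $\QQ$ leaves every other row -- hence every other zero $\qq$-vector -- unchanged, an induction on $|\Delta^s|$ reduces the proposition to the following single-category statement: if $\qq_{j_0,l_0}=\zero$ and $\QQ'$ is obtained from $\QQ$ by deleting the row $\qq_{j_0,l_0}$ (write $\RR'$ for the corresponding response vector, whose $j_0$-th coordinate ranges over $\{0,\dots,H_{j_0}-1\}$), then the Sequential DINA model is identifiable under $\QQ$ if and only if it is identifiable under $\QQ'$. The underlying idea is that the deleted category carries only the single extra parameter $\beta_{j_0,l_0}^+$, which is an attribute-free constant and can be read off from the observed law, so the two models are ``the same up to one free coordinate.''

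First I would record the effect of the deletion on item $j_0$. Because $\qq_{j_0,l_0}=\zero$ forces $\xi_{j_0,l_0,\aaa}\equiv 1$, the processing function $S_{j_0}(l_0\mid\aaa)\equiv\beta_{j_0,l_0}^+=:c\in(0,1]$ is a constant free of $\aaa$, and $c$ is the only parameter attached to this category; in $\QQ'$ the item $j_0$ has $H_{j_0}-1$ categories with $\qq'_{j_0,m}=\qq_{j_0,m}$ for $m<l_0$, $\qq'_{j_0,m}=\qq_{j_0,m+1}$ for $m\ge l_0$, while the other $\beta$-parameters, the remaining items, and the proportions $\pp$ are carried over verbatim; in particular the $\QQ$-parameters stand in an explicit one-to-one correspondence with the pair consisting of $c$ and the $\QQ'$-parameters. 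Using $S_{j_0}(0\mid\aaa)=1$ together with $P(R_{j_0}=r\mid\aaa)=[1-S_{j_0}(r+1\mid\aaa)]\prod_{l=0}^r S_{j_0}(l\mid\aaa)$ and its $\QQ'$ analogue, a short computation gives
\[
P(R_{j_0}=r\mid\aaa)=\sum_{r'}P(R'_{j_0}=r'\mid\aaa)\,\mathcal K_c(r',r)\qquad\text{for all }\aaa,
\]
where $\mathcal K_c$ is the $\aaa$-free Markov kernel that fixes every $r'<l_0-1$ and maps each $r'\ge l_0-1$ to $l_0-1$ with probability $1-c$ and to $r'+1$ with probability $c$; intuitively, $R_{j_0}$ is obtained from $R'_{j_0}$ by, once the free category is reached, tossing an independent $\mathrm{Bernoulli}(c)$ that either halts the subject at $l_0-1$ or lets the already-determined continuation play out. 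Since items are conditionally independent given $\aaa$ and every item $j\neq j_0$ has the same conditional law under $\QQ$ and $\QQ'$, the same relation holds for the joint laws, acting on coordinate $j_0$ only.

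The identification bookkeeping then follows. Marginalizing over $\pp$, the constant $c$ is recovered from the marginal law of $R_{j_0}$ alone via $c=P(R_{j_0}\ge l_0)/P(R_{j_0}\ge l_0-1)$, the denominator being at least $p_{\one}\prod_{l=1}^{l_0-1}\beta_{j_0,l}^+>0$ (by $p_{\aaa}>0$ and $\beta_{j,l}^+>0$), so the ratio is well defined; and once $c$ is known, $\mathcal K_c$ has the explicit left inverse $P(R'_{j_0}=r')=P(R_{j_0}=r')$ for $r'<l_0-1$ and $P(R'_{j_0}=r')=P(R_{j_0}=r'+1)/c$ for $r'\ge l_0-1$. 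Hence the joint law of $\RR'$ under $\QQ'$ is an explicit function of the joint law of $\RR$ under $\QQ$, and conversely the latter is the explicit $\mathcal K_c$-image of the former together with $c$. Consequently: if $\QQ'$ is identifiable and two parameter sets for $\QQ$ induce the same law of $\RR$, they induce the same $c$ and the same law of $\RR'$, hence agree on the $\QQ'$-parameters and on $c$, so by the correspondence above they coincide; and if $\QQ$ is identifiable and two parameter sets for $\QQ'$ induce the same law of $\RR'$, lifting both with one common arbitrary $c\in(0,1]$ yields the same law of $\RR$ under $\QQ$, forcing the $\QQ'$-parameters to coincide. Plugging this into the induction on $|\Delta^s|$ proves the proposition; the boundary cases $l_0=1$, $l_0=H_{j_0}$, and $H_{j_0}=1$ (the last of which deletes the entire item) are the same computation, the last reducing exactly as in Proposition~\ref{propGPDINA}.

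The main obstacle I anticipate is bookkeeping rather than anything conceptual: verifying the kernel identity at the response values near $l_0-1$ and across the re-indexing of item $j_0$, and checking that the special specification in \eqref{eq:beta zero} for categories following a $\beta_{j,l}^-=0$ does not interfere -- it does not, as it concerns only non-zero-$\qq$ categories and is inherited verbatim by the carried-over block. Everything else is finite-dimensional linear algebra on probability vectors.
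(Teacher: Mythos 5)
Your proposal is correct and follows essentially the same route as the paper: both arguments rest on (i) recovering the attribute-free constant $\beta_{j,l}^+$ of a zero-$\qq$ category from the observed ratio $P(R_j\ge l)/P(R_j\ge l-1)$, and (ii) showing that, once these constants are fixed, the joint law under $\QQ$ and the joint law under $\QQ_{-\Delta^s}$ determine each other, so the two identifiability problems coincide. The paper carries out step (ii) directly on the rows of the $\TT^s$-matrix (survival probabilities) for all zero categories at once, while you re-express the same fact as an invertible, parameter-free Markov kernel on the pmf of item $j_0$ and peel off one category at a time by induction; these are equivalent bookkeeping devices.
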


\subsection{Relationship between  GPDINA and  Sequential DINA models}
In this section, we briefly discuss the relation between the GPDINA model and the Sequential DINA model.

Fundamentally, GPDINA and   Sequential DINA models differ by the hierarchy of the categories of items. In GPDINA, different non-zero categories of the same item can be exchanged and share the same $\qq$-vector. Whereas in Sequential DINA model, different non-zero categories are generally not exchangeable and need to be completed sequentially, and different non-zero categories are allowed to have arbitrarily different $\qq$-vectors. However, when all the non-zero categories of an item share the same $\qq$-vector, the Sequential DINA model becomes equivalent to GPDINA.

Formally, in Sequential DINA model, when $\qq_{j,1} = \ldots = \qq_{j, H_j}$, such $\QQ$-matrix is referred to as \emph{unrestricted} $\QQ$-matrix \citep{sequential}, we have
 $\xi_{j,1,\aaa} = \ldots = \xi_{j,H_j,\aaa}$ for all $\aaa$ and $j \in [J]$. Under this $\QQ$-matrix, attribute profile $\aaa$ is either capable of completing all the non-zero categories of an item or unable to complete any non-zero category. 
With these constraints, such $\QQ$-matrix is also applicable to GPDINA, and we show that the two models are equivalent by presenting a bijective mapping from the item parameters of GPDINA to the parameters of the Sequential DINA model when the parameters are well-defined. Specifically, for each item $j \in [J]$, we have the following relation between the two models' parameters.\\
\textbf{From Sequential DINA model to GPDINA}: 
% From a given sequential DINA model with parameters $\bbb^+$ and $\bbb^-$,
%  we can define the parameters in GPDINA $\ttt^+$ and $\ttt^-$, in the following way:
\begin{align*}
\begin{cases}
P(R_j = l \,|\,\xi_{j,\aaa} = 1) = \theta_{j,l}^+ = (1-\beta_{j,l+1}^+) \prod_{h=1}^l\beta_{j,h}^+,\; \;\textnormal{for}\;
    l \geq 1;\\
P(R_j = l \,|\,\xi_{j,\aaa} = 0) = \theta_{j,l}^- = (1-\beta_{j,l+1}^-) \prod_{h=1}^l\beta_{j,h}^-,\; \;\textnormal{for}\;
    l \geq 1.\\
   %  \theta_{j,1}^+ = (1-\beta_{j,2}^+) \beta_{j,1}^+,\;\;
   % \theta_{j,2}^+ = (1- \beta_{j,3}^+)\beta_{j,1}^+\beta_{j,2}^+,\;\;\ldots,\;\;
   % \theta_{j,H_j}^+ = \prod_{l=1}^{H_j}\beta_{j,l}^+,\;\;j \in [J],\\
   % \theta_{j,1}^- = (1-\beta_{j,2}^-) \beta_{j,1}^-,\;\;
   % \theta_{j,2}^- = (1- \beta_{j,3}^-)\beta_{j,1}^-\beta_{j,2}^-,\;\;\ldots,\;\;
   % \theta_{j,H_j}^- = \prod_{l=1}^{H_j}\beta_{j,l}^-,\;\;
\end{cases}
\end{align*}
\textbf{From GPDINA to Sequential DINA model}:
\begin{align*}
\begin{cases}
        \displaystyle 
        P(R_j \geq l \,|\, R_j \geq l-1, \xi_{j,\aaa} = 1) =  \beta_{j,l}^+ = \frac{\sum_{h = l}^{H_j} \theta_{j,h}^+}{\sum_{h = l-1}^{H_j} \theta_{j,h}^+},\; \;\textnormal{for}\;
    l \geq 1;\\
        \displaystyle P(R_j \geq l \,|\, R_j \geq l-1, \xi_{j,\aaa} = 0) = \beta_{j,l}^- = \frac{\sum_{h = l}^{H_j} \theta_{j,h}^-}{\sum_{h = l-1}^{H_j} \theta_{j,h}^-},\; \;\textnormal{for}\;
    l \geq 1.
\end{cases}
\end{align*}
By examining the above equations, it becomes apparent that there is a one-to-one correspondence between the parameters of the two models, demonstrating the equivalence of the two models under the considered $\QQ$-matrix constraints.

\section{Identifiability}\label{sec-id}
This section introduces our identifiability results for the GPDINA model and the Sequential DINA model. To provide a foundation for these results, we first generalize the $\TT$-matrix, a powerful tool in the literature to establish the identifiability of CDMs with binary responses \citep{JLGXZY2011,xu2016,xu2017}, to polytomous models in Section 3.1. Since the structure of the two polytomous models differ, the $\TT$-matrix generalizations also differ, and we provide examples to illustrate this. We then formally present our identifiability results for the two models in Sections 3.2 and 3.3, respectively. 
%and include  examples to further demonstrate their practical applications.

\subsection{Generalized $\TT$-matrix for CDMs with polytomous responses}\label{sec:tmat}
Directly working on the equations (\ref{eq-orig}) and (\ref{eq-orig-seq}) from the definitions of identifiability is challenging. Alternatively, we work on the marginal probability matrix, the $\TT$-matrix, firstly introduced by \citet{JLGXZY2011}, which sets up a linear dependence between attribute distribution and the response distribution. However, under the DINA model setting, most existing literature only focuses on the $\TT$-matrix for binary responses. 
For polytomous response DINA models, there are more parameters involved for each item, and these parameters can not be naively treated separately.
Our aim in this section is to generalize this powerful $\TT$-matrix tool to polytomous response models adjusted accordingly to the model setup.
%\todo{revised}

\subsubsection{$\TT$-matrix for GPDINA}
The $\TT$-matrix for GPDINA $\TT(\ttt^+,\ttt^-)$ is a $\prod_{j=1}^J(H_j+1) \, \times \, 2^K $ matrix, where the entries are indexed by row index $\rr \in \mcs$ with $r_j \in \{0, 1, \ldots, H_j\}$ and column index $\aaa \in \{0, 1\}^K$. 
The $\rr$-th row and $\aaa$-th column entry of $\TT(\ttt^+,\ttt^-)$, denoted by $t_{\rr,\aaa}(\ttt^+,\ttt^-)$, is defined as
\[t_{\rr,\aaa}(\ttt^+,\ttt^-) = P\left(\underset{j: r_j \neq 0}{\bigcap} \{R_j = r_j\}\mid \QQ,\ttt^+,\ttt^-,\aaa\right) = \underset{j: r_j \neq 0}{\prod} P\left( R_j = r_j\mid \QQ,\ttt^+,\ttt^-,\aaa\right). 
\]
When $\rr=\mathbf 0$, 
$t_{\mz,\aaa}(\ttt^+,\ttt^-) = 1 \mbox{ for any } \aaa.$
When $\rr= r_j \cdot \ee_j$, $$t_{\rr,\aaa}(\ttt^+,\ttt^-) =P(R_j=r_j \mid \QQ,\ttt^+,\ttt^-,\aaa).$$  
Let $\TT_{\rr}( \ttt^+,\ttt^-)$ be the row vector in the $\TT$-matrix corresponding to $\rr$.
Then for any $\rr \neq \mz$, we can write
$\TT_{\rr}( \ttt^+,\ttt^-) = 
\underset{j: r_j\neq0}{\circ}  \TT_{r_j \cdot \ee_j}( \ttt^+,\ttt^-),$ where $\circ$ is the element-wise product of the row vectors. 
Since there exists a one-to-one mapping between $\TT_{\rr}$ and $P(\RR=\rr\mid \QQ,\ttt^+,\ttt^-,\pp)$ for all $\rr \in \mcs$, we may substitute the original identifiability problem with an equivalent statement as follows.
\begin{lemma}\label{lem:t-matrix}
Following the definition in (\ref{eq-orig})  and letting the attribute $\aaa$ index of $\pp$ be consistent with the $\aaa$ index in $\TT$, 
the GPDINA parameters are identifiable if and only if there is no $(\bar{\ttt}^+,\bar{\ttt}^-,\bar{\pp})\neq (\ttt^+,\ttt^-,\pp)$ such that 
\begin{equation}\label{lem:t-matrix-eq}
    \TT \pp = \bar{\TT}\bar{\pp}.
\end{equation}
\end{lemma}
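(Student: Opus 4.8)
The plan is to establish the claimed equivalence by exhibiting an explicit bijection between the vector $(P(\RR=\rr\mid\QQ,\ttt^+,\ttt^-,\pp))_{\rr\in\mcs}$ of response probabilities and the vector $(\TT_\rr(\ttt^+,\ttt^-)\,\pp)_{\rr\in\mcs}$ of $\TT$-matrix row products, valid for \emph{every} choice of parameters. Once such a bijection $\Phi$ is in hand, the claim is immediate: two parameter sets give the same response distribution if and only if they give the same vector $\TT\pp$, so the non-existence of a distinct parameter set matching in one sense is equivalent to the non-existence of a distinct parameter set matching in the other.

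The key construction is to expand the product defining $P(\RR=\rr\mid\QQ,\ttt^+,\ttt^-,\aaa) = \prod_{j=1}^J P(R_j=r_j\mid\aaa)$ into the ``survival-type'' terms that appear in the $\TT$-matrix. Concretely, for each item $j$ write $P(R_j=0\mid\aaa) = 1 - \sum_{l=1}^{H_j}P(R_j=l\mid\aaa)$ (using the definitions of $\theta_{j,0}^+,\theta_{j,0}^-$), so that for a general response $\rr$ with zero-set $Z(\rr)=\{j:r_j=0\}$ we get
\begin{equation*}
  P(\RR=\rr\mid\QQ,\ttt^+,\ttt^-,\aaa) \;=\; \prod_{j\in Z(\rr)}\Bigl(1-\sum_{l=1}^{H_j}P(R_j=l\mid\aaa)\Bigr)\,\prod_{j\notin Z(\rr)}P(R_j=r_j\mid\aaa).
\end{equation*}
Multiplying out the parentheses over $j\in Z(\rr)$ expresses $P(\RR=\rr\mid\aaa)$ as a signed sum of terms of the form $\prod_{j\in A}P(R_j=s_j\mid\aaa)$ with $A\supseteq Z(\rr)^c$ and $s_j\in[H_j]$ for $j\in A$; each such term is exactly $t_{\rr',\aaa}(\ttt^+,\ttt^-)$ for the response $\rr'$ that agrees with $\rr$ off $Z(\rr)$ and equals $s_j$ on the chosen subset. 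Averaging over $\aaa$ with weights $p_\aaa$ turns this into a fixed linear relation $P(\RR=\rr\mid\cdots) = \sum_{\rr'} c_{\rr,\rr'}\,t_{\rr',\aaa}(\cdots)p_\aaa$ summed over $\aaa$, i.e.\ $\mathbf{P} = C\,(\TT\pp)$ for a constant integer matrix $C$ depending only on $(H_j)_{j=1}^J$ and the combinatorics of $\mcs$, not on the parameters. The matrix $C$ is unitriangular with respect to the partial order ``$\rr\preceq\rr'$ iff $Z(\rr')\subseteq Z(\rr)$ and they agree off $Z(\rr')$'' --- on the diagonal the coefficient is $1$ (take $A=Z(\rr)^c$, $s_j=r_j$) --- hence $C$ is invertible over $\Z$, giving the inverse map $\TT\pp = C^{-1}\mathbf{P}$. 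This is the ``one-to-one mapping between $\TT_\rr$ and $P(\RR=\rr\mid\cdots)$'' asserted in the text just before the lemma.

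With the bijection $\Phi = C$ established, the proof concludes mechanically: if $(\bar\ttt^+,\bar\ttt^-,\bar\pp)\neq(\ttt^+,\ttt^-,\pp)$ satisfies $\TT\pp=\bar\TT\bar\pp$, then applying $C$ gives $P(\RR=\rr\mid\QQ,\ttt^+,\ttt^-,\pp)=P(\RR=\rr\mid\QQ,\bar\ttt^+,\bar\ttt^-,\bar\pp)$ for all $\rr\in\mcs$, contradicting identifiability as defined in \eqref{eq-orig}; conversely, if such a parameter set violates \eqref{eq-orig}, applying $C^{-1}$ yields $\TT\pp=\bar\TT\bar\pp$. Note that the indexing convention in the statement --- that the $\aaa$-coordinate of $\pp$ matches the column index of $\TT$ --- is exactly what makes $\TT\pp$ the vector whose $\rr$-th entry is $\sum_\aaa p_\aaa t_{\rr,\aaa}(\ttt^+,\ttt^-)$, so the matrix product in \eqref{lem:t-matrix-eq} is literally the average of the row products over the population.

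The main obstacle, and the only step requiring genuine care rather than bookkeeping, is verifying that the constant matrix $C$ is invertible; everything else is routine expansion and the contrapositive. The cleanest route is the triangularity argument sketched above: order the responses $\rr$ by $|Z(\rr)|$ (number of zero coordinates), observe that expanding the product only produces terms $\rr'$ with $Z(\rr')\subsetneq Z(\rr)$ or $\rr'=\rr$, and read off that $C$ is triangular with unit diagonal in this ordering, hence $\det C = \pm1$. One should also double-check the boundary/degenerate cases --- $\rr=\mathbf 0$ (where $t_{\mathbf 0,\aaa}\equiv 1$ and $P(\RR=\mathbf 0\mid\aaa)$ is itself the full product that gets expanded), and responses supported on a single item (where the relation is the trivial identity $t_{r_j\ee_j,\aaa}=P(R_j=r_j\mid\aaa)$) --- to confirm the pattern holds uniformly across $\mcs$.
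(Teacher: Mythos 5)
Your proposal is correct and follows exactly the route the paper relies on: the paper asserts the ``one-to-one mapping between $\TT_{\rr}$ and $P(\RR=\rr\mid\cdot)$'' without proof, and you supply the standard justification by expanding $P(R_j=0\mid\aaa)=1-\sum_{l=1}^{H_j}P(R_j=l\mid\aaa)$ and observing that the resulting parameter-free integer matrix $C$ is unitriangular, hence invertible. (One cosmetic slip: your stated partial order should require agreement off $Z(\rr)$, not off $Z(\rr')$, but the invertibility argument you actually use --- ordering responses by $|Z(\rr)|$ --- is the correct one and is unaffected.)
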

To illustrate the construction of the $\TT$-matrix, we provide an example in the following.
\begin{example}
For the $\QQ$-matrix given in Example~\ref{ex1}, 
% suppose $\pp$ is written as
% \begin{equation*}
%     \pp = \begin{pmatrix}
%     p_{(0,0)}\\
%     p_{(1,0)}\\
%     p_{(0,1)}\\
%     p_{(1,1)}
% \end{pmatrix}
% \end{equation*}
the $\TT$-matrix for this $\QQ$-matrix is
\begin{equation*}
\TT = 
    \begin{pmatrix}
 \aaa\;: & (0,0) & (1,0) & (0,1) & (1,1)\\
    \TT_{\rr = (0,0)}\;& 1 & 1 & 1 &1\\
    \TT_{\rr = (1,0)}& \theta_{1,1}^- & \theta_{1,1}^+ & \theta_{1,1}^- &\theta_{1,1}^+\;\\
    \TT_{\rr = (2,0)}& \theta_{1,2}^- & \theta_{1,2}^+ & \theta_{1,2}^- &\theta_{1,2}^+\;\\
    \TT_{\rr = (0,1)}\;& \theta_{2,1}^- &  \theta_{2,1}^-& \theta_{2,1}^+ &\theta_{2,1}^+\\
    \TT_{\rr = (0,2)}\;& \theta_{2,2}^- &  \theta_{2,2}^-& \theta_{2,2}^+ &\theta_{2,2}^+\\
    \TT_{\rr = (1,1)} \;& \theta_{1,1}^-\theta_{2,1}^- &  \theta_{1,1}^+\theta_{2,1}^-& \theta_{1,1}^-\theta_{2,1}^+ &\theta_{1,1}^+\theta_{2,1}^+\\
    \TT_{\rr = (2,1)} \;& \theta_{1,2}^-\theta_{2,1}^- &  \theta_{1,2}^+\theta_{2,1}^-& \theta_{1,2}^-\theta_{2,1}^+ &\theta_{1,2}^+\theta_{2,1}^+\\
    \TT_{\rr = (1,2)} \;& \theta_{1,1}^-\theta_{2,2}^- &  \theta_{1,1}^+\theta_{2,2}^-& \theta_{1,1}^-\theta_{2,2}^+ &\theta_{1,1}^+\theta_{2,2}^+\\
    \TT_{\rr = (2,2)} \;& \theta_{1,2}^-\theta_{2,2}^- &  \theta_{1,2}^+\theta_{2,2}^-& \theta_{1,2}^-\theta_{2,2}^+ &\theta_{1,2}^+\theta_{2,2}^+
\end{pmatrix},
\end{equation*}
where $\TT_{\rr = (1,1)} = \TT_{\rr = (1,0)} \circ \TT_{\rr = (0,1)}$, $\TT_{\rr = (2,1)} = \TT_{\rr = (2,0)} \circ \TT_{\rr = (0,1)}$, $\TT_{\rr = (1,2)} = \TT_{\rr = (1,0)} \circ \TT_{\rr = (0,2)}$, $\TT_{\rr = (2,2)} = \TT_{\rr = (2,0)} \circ \TT_{\rr = (0,2)}$. We can see that the $\TT$-matrix's structure is the same as the classic $\TT$-matrix for binary DINA model, where the entries of the $\TT$-matrix involve at most two parameters.
\end{example}

\subsubsection{$\TT$-matrix for Sequential DINA model}
Similarly, we generalize the $\TT$-matrix for the Sequential DINA model. However, due to the special structure of the Sequential DINA model, the generalization of the $\TT$-matrix here is slightly different from the literature, which we denote as $\TT^s$-matrix, where the ``s" stands for Sequential DINA model.
Let the entries of $\TT^s$-matrix $\TT^s(\bbb^+,\bbb^-)$
% is a $\left(\prod_{j=1}^J(H_j+1) \, \times \, 2^K \right)$ matrix,
be indexed by row index $\rr \in \mcs$ and column index $\aaa \in \{0,1\}^K$. The $\rr$-th row and $\aaa$-th column entry of $\TT^s(\bbb^+,\bbb^-)$, denoted by $t_{\rr,\aaa}^s(\bbb^+,\bbb^-)$, is defined as
\begin{align*}
    t_{\rr,\aaa}^s(\bbb^+,\bbb^-) &= P\left(\underset{j: r_j \neq 0}{\bigcap} \{R_j \geq r_j \} \mid \QQ,\bbb^+,\bbb^-,\aaa\right)\\
    &= \underset{j: r_j \neq 0}{\prod} P\left( R_j \geq r_j\mid \QQ,\bbb^+,\bbb^-,\aaa\right) \\
    &= \underset{j: r_j \neq 0}{\prod} \prod_{l=1}^{r_j} (\beta_{j,l}^+)^{\xi_{j,l,\aaa}}(\beta_{j,l}^-)^{1-\xi_{j,l,\aaa}}.
\end{align*}
Apparently, $t_{\mz,\aaa}^s(\bbb^+,\bbb^-) = \one \mbox{ for any } \aaa.$
When $\rr= r_j \cdot \ee_j$,
\[\displaystyle t_{r_j \cdot \ee_j,\aaa}^s(\bbb^+,\bbb^-) =P(R_j \geq r_j \mid \QQ,\bbb^+,\bbb^-,\aaa) = \prod_{l=1}^{r_j} (\beta_{j,l}^+)^{\xi_{j,l,\aaa}}(\beta_{j,l}^-)^{1-\xi_{j,l,\aaa}} .\]
Let $\TT^s_{\rr}( \bbb^+,\bbb^-)$ be the row vector in the $\TT^s$-matrix corresponding to $\rr$.
Then for any $\rr \neq \mz$, we can write
$\TT^s_{\rr}( \bbb^+,\bbb^-) =
\underset{{j: r_j\neq0}}{\circ} \TT^s_{r_j \cdot \ee_j}( \bbb^+,\bbb^-)$. Similarly, due to the one-to-one mapping between $\TT_{\rr}^s$ and $P(\RR \geq \rr\mid \QQ,\ttt^+,\ttt^-,\pp)$ for all $\rr \in \mcs$, we may substitute the original identifiability problem using the $\TT^s$-matrix technique, we state this consequence in the following lemma.
\begin{lemma}\label{lem:ts-matrix}
Following the definition in (\ref{eq-orig-seq}) and  letting the attribute $\aaa$ index in $\pp$ be consistent with the $\aaa$ index in $\TT$, 
the Sequential DINA model parameters are identifiable if and only if there is no $(\bar{\bbb}^+,\bar{\bbb}^-,\bar{\pp})\neq (\bbb^+,\bbb^-,\pp)$ such that 
\begin{equation}\label{id-T}
    \TT^s \pp = \bar{\TT}^s\bar{\pp}.
\end{equation}
\end{lemma}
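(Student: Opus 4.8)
The plan is to reduce the matrix identity (\ref{id-T}) to the defining condition (\ref{eq-orig-seq}) by a parameter-free invertible change of basis, paralleling the argument behind Lemma~\ref{lem:t-matrix}. First I would introduce the \emph{response-probability matrix} $\mathcal{P}^{s}(\bbb^+,\bbb^-)$, the $\prod_{j=1}^{J}(H_j+1)\times 2^{K}$ matrix with rows indexed by $\rr\in\mcs$ and columns by $\aaa\in\{0,1\}^{K}$ and $(\rr,\aaa)$ entry $P(\RR=\rr\mid\aaa)=\prod_{j=1}^{J}P(R_j=r_j\mid\aaa)$. By conditional independence and the definition of the model probability, $(\mathcal{P}^{s}\pp)_{\rr}=P(\RR=\rr\mid\QQ,\bbb^+,\bbb^-,\pp)$, so (\ref{eq-orig-seq}) says exactly $\mathcal{P}^{s}\pp=\bar{\mathcal{P}}^{s}\bar{\pp}$. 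It therefore suffices to show that $\TT^{s}\pp=\bar{\TT}^{s}\bar{\pp}$ holds if and only if $\mathcal{P}^{s}\pp=\bar{\mathcal{P}}^{s}\bar{\pp}$ holds.

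The key step is to exhibit a fixed invertible matrix $M$, depending only on $(H_1,\dots,H_J)$ and not on the model parameters, with $\TT^{s}(\bbb^+,\bbb^-)=M\,\mathcal{P}^{s}(\bbb^+,\bbb^-)$ identically. I would take $M=\bigotimes_{j=1}^{J}M_j$, where $M_j$ is the $(H_j+1)\times(H_j+1)$ matrix with rows and columns indexed by $\{0,1,\dots,H_j\}$ and $(M_j)_{r,s}=1$ if $s\ge r$ and $0$ otherwise — upper triangular with $1$'s on and above the diagonal, so $\det M_j=1$ and $M$ is invertible. The identity $\TT^{s}=M\,\mathcal{P}^{s}$ is checked one item at a time: for fixed $j$ and $\aaa$, writing $f_j(\cdot\mid\aaa)=(P(R_j=r\mid\aaa))_{r=0}^{H_j}$ and $g_j(\cdot\mid\aaa)=(P(R_j\ge r\mid\aaa))_{r=0}^{H_j}$ with the convention $P(R_j\ge0\mid\aaa)=1$, one has $g_j(\cdot\mid\aaa)=M_j\,f_j(\cdot\mid\aaa)$, since for $r\ge1$ the $r$-th component is $\sum_{s\ge r}f_j(s\mid\aaa)=P(R_j\ge r\mid\aaa)$ and for $r=0$ it is $\sum_{s=0}^{H_j}f_j(s\mid\aaa)=1$ because the per-item response probabilities sum to one. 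Since the $\rr$-th row of $\TT^{s}$ at column $\aaa$ is $\prod_{j:\,r_j\neq0}P(R_j\ge r_j\mid\aaa)=\prod_{j=1}^{J}g_j(r_j\mid\aaa)$ (the $r_j=0$ factors equal $1$) and the $\rr$-th row of $\mathcal{P}^{s}$ at column $\aaa$ is $\prod_{j=1}^{J}f_j(r_j\mid\aaa)$, expanding $M=\bigotimes_j M_j$ gives $(M\mathcal{P}^{s})_{\rr,\aaa}=\sum_{\rr'\in\mcs}\prod_{j}(M_j)_{r_j,r'_j}\prod_{j}f_j(r'_j\mid\aaa)=\prod_{j}\big(\sum_{r'_j=0}^{H_j}(M_j)_{r_j,r'_j}f_j(r'_j\mid\aaa)\big)=\prod_{j}g_j(r_j\mid\aaa)$, i.e.\ $\TT^{s}=M\,\mathcal{P}^{s}$; and $\bar{\TT}^{s}=M\,\bar{\mathcal{P}}^{s}$ with the \emph{same} $M$, because $M$ does not depend on $(\bbb^+,\bbb^-)$. (Recall that by Proposition~\ref{propseq} we assume $\qq_{j,l}\neq\zero$ throughout, and the degenerate conventions when some $\beta_{j,l}^{-}=0$, cf.\ (\ref{eq:beta zero}), only affect which entries of $\mathcal{P}^{s}$ vanish, not this linear-algebraic identity.)

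Finally, since $M$ is invertible and parameter-free, $\TT^{s}\pp=\bar{\TT}^{s}\bar{\pp}\iff M\mathcal{P}^{s}\pp=M\bar{\mathcal{P}}^{s}\bar{\pp}\iff\mathcal{P}^{s}\pp=\bar{\mathcal{P}}^{s}\bar{\pp}$, and the latter is exactly condition (\ref{eq-orig-seq}); combining with the definition of identifiability for the Sequential DINA model yields the claimed equivalence. I expect the one delicate point — the ``main obstacle'' — to be reconciling the $\TT^{s}$-matrix convention of \emph{dropping} the factors with $r_j=0$ (so that a zero response seems to carry no information about item $j$) with the assertion that $M$ loses no information: the resolution is that the all-ones top row of each $M_j$ recovers $P(R_j=0\mid\aaa)$ from $1-\sum_{r\ge1}P(R_j=r\mid\aaa)$ via the normalization, so nothing is actually discarded. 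The remainder is Kronecker-product bookkeeping, and the same scheme — with a different but equally explicit parameter-free invertible $M$ — also proves Lemma~\ref{lem:t-matrix} for the GPDINA model.
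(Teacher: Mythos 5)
Your proposal is correct and is essentially the argument the paper relies on: the lemma rests on the observation that the collection $\{P(\RR\geq\rr)\}_{\rr\in\mcs}$ is an invertible, parameter-free linear transformation of the response distribution $\{P(\RR=\rr)\}_{\rr\in\mcs}$, which the paper states only informally as a ``one-to-one mapping'' while you make it explicit via $\TT^{s}=\bigl(\bigotimes_{j}M_{j}\bigr)\mathcal{P}^{s}$ with unit upper-triangular $M_{j}$. The construction, the handling of the $r_j=0$ factors through the all-ones top row of $M_j$, and the remark that the degenerate $\beta_{j,l}^{-}=0$ conventions do not disturb the identity are all sound.
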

In the following, we present the $\TT^s$-matrix for the model given in Example~\ref{ex2}. Due to the unique structure of the Sequential DINA model, the $\TT^s$-matrix is designed in a very different way from a standard $\TT$-matrix for  the DINA model. 
\begin{example}
For the $\QQ$-matrix given in Example~\ref{ex2}, 
% suppose $\pp$ is written as
% \begin{equation*}
%     \pp = \begin{pmatrix}
%     p_{(0,0)}\\
%     p_{(1,0)}\\
%     p_{(0,1)}\\
%     p_{(1,1)}
% \end{pmatrix}
% \end{equation*}
the $\TT^s$-matrix for this $\QQ$-matrix is
\[
\TT^s = 
\begin{pmatrix}
    \aaa\;: & (0,0) & (1,0) & (0,1) & (1,1)\\
    \TT^s_{\rr = (0,0)}\;& 1 & 1 & 1 &1\\
    \TT^s_{\rr = (1,0)}& \beta_{1,1}^- & \beta_{1,1}^+ & \beta_{1,1}^- &\beta_{1,1}^+\;\\
    \TT^s_{\rr = (2,0)}\;& \beta_{1,1}^-\beta_{1,2}^- & \beta_{1,1}^+ \beta_{1,2}^-& \beta_{1,1}^-\beta_{1,2}^+ &\beta_{1,1}^+\beta_{1,2}^+\\
    \TT^s_{\rr = (0,1)}\;& \beta_{2,1}^- &  \beta_{2,1}^-& \beta_{2,1}^+ &\beta_{2,1}^+\\
    \TT^s_{\rr = (0,2)}\;& \beta_{2,1}^- \beta_{2,2}^-&  \beta_{2,1}^-\beta_{2,2}^-& \beta_{2,1}^+\beta_{2,2}^- &\beta_{2,1}^+\beta_{2,2}^+\\
    \TT^s_{\rr = (1,1)}& \beta_{1,1}^-\beta_{2,1}^- & \beta_{1,1}^+\beta_{2,1}^- & \beta_{1,1}^-\beta_{2,1}^+ &\beta_{1,1}^+\beta_{2,1}^+\;\\
    \TT^s_{\rr = (2,1)}& \beta_{1,1}^-\beta_{1,2}^-\beta_{2,1}^- & \beta_{1,1}^+ \beta_{1,2}^-\beta_{2,1}^-& \beta_{1,1}^-\beta_{1,2}^+\beta_{2,1}^+ &\beta_{1,1}^+\beta_{1,2}^+\beta_{2,1}^+\\
    \TT^s_{\rr = (1,2)} & \beta_{1,1}^-\beta_{2,1}^- \beta_{2,2}^-&  \beta_{1,1}^+\beta_{2,1}^-\beta_{2,2}^-& \beta_{1,1}^-\beta_{2,1}^+\beta_{2,2}^- &\beta_{1,1}^+\beta_{2,1}^+\beta_{2,2}^+\\
    \TT^s_{\rr = (2,2)}\;& \beta_{1,1}^-\beta_{1,2}^-\beta_{2,1}^- \beta_{2,2}^-& \beta_{1,1}^+ \beta_{1,2}^-\beta_{2,1}^-\beta_{2,2}^-& \beta_{1,1}^-\beta_{1,2}^+\beta_{2,1}^+\beta_{2,2}^- &\beta_{1,1}^+\beta_{1,2}^+\beta_{2,1}^+\beta_{2,2}^+
\end{pmatrix}
\]
where $\TT^s_{\rr = (1,1)} = \TT^s_{\rr = (1,0)} \circ \TT^s_{\rr = (0,1)}$,\;\; $\TT^s_{\rr = (2,1)} = \TT^s_{\rr = (2,0)}\;\circ \TT^s_{\rr = (0,1)}$,
$\TT^s_{\rr = (1,2)} = \TT^s_{\rr = (1,0)} \circ \TT^s_{\rr = (0,2)}$, \;and \;$\TT^s_{\rr = (2,2)} = \TT^s_{\rr = (2,0)} \circ \TT^s_{\rr = (0,2)}$.
\end{example}
Unlike the $\TT$-matrix for GPDINA, the entries of the $\TT^s$-matrix for the Sequential DINA model usually involve more than two parameters, making identifying them  technically  more challenging. For instance, $\TT^s_{\rr = (2,2)}$ in the Sequential DINA model has four parameters in each entry, whereas $\TT_{\rr = (2,2)}$ in GPDINA only has two parameters in each entry. The following sections give  a more detailed discussion of the identifiability issue.

\subsection{Identifiability of the GPDINA model}
In this section, we develop the sufficient and necessary condition for the identifiability of the GPDINA model. 
To begin with, we introduce the terminology ``completeness" for $\QQ$-matrix, which was firstly proposed by \citet{chiu2009}. A $\QQ$-matrix is said to be complete if it can differentiate all latent attribute profiles. 
Under the DINA model with binary responses, it requires that for each attribute, there exists some item requiring solely that attribute, that is,  a complete $\QQ$-matrix must contain an identity matrix $\cI_K$ up to some row permutations, which can be written as 
\begin{equation}\label{eq-formq}
\QQ=\left(\begin{array}{c}
\cI_K \\
\QQ^*
\end{array}\right)_{J\times K}.
\end{equation}
Similar to the binary response case \citep{xu2016}, the completeness of the $\QQ$-matrix is necessary for the identifiability of the population proportion parameter $\pp$. Additionally, each attribute must be required by a certain amount of items, and formally we state these conditions as follows.
\renewcommand{\thecondition}{C\arabic{condition}}
\begin{condition}\label{C1}
    The $\QQ$-matrix must be complete, taking the form (\ref{eq-formq}).
\end{condition}
\begin{condition}\label{C2}
    Each of the $K$ attributes is required by at least three items.
\end{condition}
\begin{condition}\label{C3}
    Any two different columns of the sub-matrix $\QQ^*$ in (\ref{eq-formq}) are distinct.
\end{condition}

\begin{restatable}{theorem}{nesuGPDINA}\label{thm:nesu1}
Conditions C1-C3 are sufficient and necessary for the identifiability of the parameters of the GPDINA model.
\end{restatable}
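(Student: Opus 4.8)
## Proof Proposal for Theorem~\ref{thm:nesu1}

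The plan is to prove sufficiency and necessity separately, both working through the $\TT$-matrix characterization of Lemma~\ref{lem:t-matrix}: the parameters are identifiable iff $\TT\pp = \bar\TT\bar\pp$ forces $(\bar\ttt^+,\bar\ttt^-,\bar\pp) = (\ttt^+,\ttt^-,\pp)$. Because the GPDINA $\TT$-matrix has the same ``at most two parameters per entry'' structure as the classical binary DINA $\TT$-matrix (each non-zero category $l$ of item $j$ behaves like an independent binary DINA item with parameters $\theta_{j,l}^+ > \theta_{j,l}^-$ sharing the $\qq$-vector $\qq_j$), I expect to be able to lean heavily on the machinery of \citet{xu2016} and \citet{id-dina}, adapting it to the polytomous bookkeeping.

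For \textbf{sufficiency}, assume C1--C3 and suppose $\TT\pp = \bar\TT\bar\pp$. First I would exploit Condition~C1: reorder so that the first $K$ items form $\cI_K$. For a single attribute $k$ and the corresponding identity item (say item $j_k$ with $\qq_{j_k} = \ee_k$), the rows $\TT_{l\cdot\ee_{j_k}}$ for $l \in [H_{j_k}]$ split the attribute profiles into the two groups $\{\aaa : \alpha_k = 1\}$ and $\{\aaa : \alpha_k = 0\}$. Taking suitable linear combinations (sums over $\rr$ supported on a fixed set of items, which corresponds to marginalizing) reduces the problem on that item to the binary two-parameter case, letting me peel off the identity block one attribute at a time; this is the standard argument that C1 alone recovers $\pp$ after the item parameters are known. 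The substantive work is showing the item parameters $\theta_{j,l}^{\pm}$ are pinned down. For an item $j$ with $\qq_j = \qq^*_j$ in the non-identity block, I would use C2 (attribute required by $\ge 3$ items) to find, for each attribute $k$ with $q_{jk}=1$, enough auxiliary items so that the set of profiles $\{\aaa : \aaa \succeq \qq_j\}$ can be separated from its complement by already-controlled rows, and then C3 (distinct columns of $\QQ^*$) to guarantee that item $j$'s $\qq$-vector is not ``confounded'' with another item's in a way that would let a nontrivial reshuffling of $\pp$ compensate. Concretely, for each $l$ I would isolate a $2\times 2$ subsystem of the form $\TT_{l\cdot\ee_j}\pp$ restricted appropriately and read off $\theta_{j,l}^+,\theta_{j,l}^-$ from the known marginal proportions of the $\xi_{j,\aaa}=1$ vs.\ $\xi_{j,\aaa}=0$ groups, using $\theta_{j,l}^+>\theta_{j,l}^-$ to break the labeling ambiguity.

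For \textbf{necessity}, I would show that violating any one of C1--C3 produces a genuine counterexample, i.e.\ two distinct parameter sets giving identical response distributions. If C1 fails, the $\QQ$-matrix is incomplete, so two latent profiles $\aaa \ne \aaa'$ have $\xi_{j,\aaa} = \xi_{j,\aaa'}$ for all $j$; then only $p_{\aaa}+p_{\aaa'}$ is identified, not each separately, exactly as in \citet{xu2016}. If C2 fails, some attribute $k$ is required by at most two items; I would construct an explicit perturbation of the item parameters of those (at most two) items together with a compensating perturbation of $\pp$ that leaves $\TT\pp$ invariant --- the count three is exactly the threshold at which the linear system relating the $\theta$'s of the $k$-requiring items to the identifiable marginals becomes over- rather than under-determined. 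If C3 fails, two columns of $\QQ^*$ coincide, meaning attributes $k$ and $k'$ are required by exactly the same non-identity items; then swapping the roles of $\alpha_k$ and $\alpha_{k'}$ in the population distribution while correspondingly permuting item parameters yields an observationally equivalent alternative. In each case I would write out the explicit alternative parameters and verify $P(\RR = \rr)$ is unchanged for all $\rr\in\mcs$ by checking equality of the $\TT$-matrix rows.

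The main obstacle I anticipate is the sufficiency direction, specifically handling the \emph{within-item multiplicity} cleanly: because the $H_j$ non-zero categories of item $j$ all share $\qq_j$, the $\TT$-matrix rows $\TT_{l\cdot\ee_j}$ for different $l$ are ``parallel'' (same partition of columns into $\xi=1$ and $\xi=0$ blocks), so one cannot treat each category as a fully independent binary DINA item --- the linear-algebraic rank arguments must be done on the aggregated item, and one has to be careful that the three-item condition C2 still delivers enough independent equations to resolve all $2H_j$ parameters of item $j$ rather than just two. I would address this by first reducing, via summation over categories, to an ``indicator'' $\TT$-matrix that records only whether $R_j$ is zero or non-zero (recovering the item-level $\sum_l \theta_{j,l}^\pm$ and hence the DINA-type structure from \citet{xu2016}), and then, once the profile partition and $\pp$ are known, recovering each individual $\theta_{j,l}^\pm$ from the single row $\TT_{l\cdot\ee_j}\pp = \bar\TT_{l\cdot\ee_j}\bar\pp$ as a linear equation in two unknowns with a known right-hand side and known coefficient pattern.
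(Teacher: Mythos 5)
Your overall skeleton for sufficiency --- reduce the polytomous model to binary DINA submodels and invoke the known binary result --- is the same as the paper's, which dichotomizes each item $j$ at an arbitrarily chosen category $c(j)$ via $I(R_j=c(j))$ and applies the binary theorem of \citet{id-dina} once for every such choice. However, your final disaggregation step does not work as stated: once $\pp$ is identified, the single row equation $\TT_{l\ee_j}\pp=\bar\TT_{l\ee_j}\bar\pp$ reads
\[
\bar\theta_{j,l}^{-}\sum_{\aaa:\,\xi_{j,\aaa}=0}p_{\aaa}+\bar\theta_{j,l}^{+}\sum_{\aaa:\,\xi_{j,\aaa}=1}p_{\aaa}
=\theta_{j,l}^{-}\sum_{\aaa:\,\xi_{j,\aaa}=0}p_{\aaa}+\theta_{j,l}^{+}\sum_{\aaa:\,\xi_{j,\aaa}=1}p_{\aaa},
\]
which is one scalar equation in the two unknowns $(\bar\theta_{j,l}^{-},\bar\theta_{j,l}^{+})$ and therefore cannot separate the slipping from the guessing parameter. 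Separation requires the product rows $\TT_{l\ee_j+r\ee_{j'}}$ with other items that split the groups $\{\xi_{j,\aaa}=0\}$ and $\{\xi_{j,\aaa}=1\}$, i.e.\ essentially rerunning the whole binary argument with item $j$ dichotomized at category $l$; that is exactly how the paper closes the loop, and it makes your preliminary aggregation to $I(R_j\neq 0)$ an unnecessary detour.

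The more serious problems are in the necessity direction. Your C1 argument is fine, but for C2 the counting heuristic from the binary model does not transfer: when only two polytomous items carry attribute $k$, matching the joint law of their responses imposes $(1+H_1)(1+H_2)$ equations, which for $H_j\ge 2$ already exceeds the handful of free quantities you propose to perturb, so a generic perturbation of the guessing parameters together with a compensating change in $\pp$ admits no solution at all. The missing idea is that strict identifiability only needs one non-identifiable point of the parameter space: the paper chooses true (and alternative) parameters with the proportional structure $\theta_{j,l}^{\pm}=\kappa^{l-1}\theta_{j,1}^{\pm}$, which collapses all category-level equations onto the four binary-type ones and restores the five-unknowns-versus-four-constraints count. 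For C3 your construction is incorrect: swapping the roles of $\alpha_k$ and $\alpha_{k'}$ cannot preserve the observed distribution, because completeness forces identity items $j_k$ and $j_{k'}$ with $\qq_{j_k}=\ee_k^\top$ and $\qq_{j_{k'}}=\ee_{k'}^\top$, and these are distinguishable observed coordinates --- for a profile with $\alpha_k=1$ and $\alpha_{k'}=0$, the swapped model assigns to $(R_{j_k},R_{j_{k'}})$ the law that the original assigns to $(R_{j_{k'}},R_{j_k})$, which differs whenever $r_{j_k}\neq r_{j_{k'}}$, even if the two items happen to share identical parameters. The paper instead perturbs the guessing parameters of the two identity items and exhibits a rank-deficient linear system on the proportions of the groups with $(\alpha_1,\alpha_2)\in\{(0,0),(1,0),(0,1)\}$ while holding the $(1,1)$-group fixed, again under the $\kappa$-proportional choice of true parameters.
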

%The proof is given in Supplementary Material. Appendix~\ref{app:GPDINA}.

\begin{remark}
When $H_j = 1$ for all $j \in [J]$, the model is reduced to binary DINA model, and the result we develop here is consistent with the result in \cite{id-dina}.
\end{remark}

\begin{remark}
 While the identifiability conditions are the same as those for the DINA model with binary responses \cite{id-dina}, we would like to emphasize several significant distinctions. 
In the case of the DINA model with binary responses, the uncertainty of each item is characterized by two parameters -- the slipping and guessing parameters. In contrast, the GPDINA model with polytomous responses introduces more than two parameters for each item, significantly complicating the models and rendering the study of identifiability more challenging.
In particular, as discussed in Section \ref{sec:tmat}, one crucial theoretical tool commonly employed in the literature to investigate the identifiability of the DINA model is the $\TT$-matrix, which is primarily designed for binary response models (Xu, 2027; Gu and Xu, 2019). However, when extending our focus to the polytomous response scenario such as the GPDINA model, it cannot be directly applied and a generalization of this tool becomes necessary. The first contribution of our work, detailed in Section 3.1, lies in this generalization, extending the applicability of these analytical techniques to a broader class of cognitive diagnosis models.
Moreover, with the newly developed $\TT$-matrix tool, significant efforts and new techniques are involved in the establishment of our new results. From the sufficient condition perspective, although conditions C1-C3 are also the counterparts of those of the DINA model, it is not immediately evident if these conditions, transposed from the binary model, are still capable of capturing the complexity and ensuring the identifiability of the more parameter-rich GPDINA model. Addtionally, from the necessary condition perspective, evaluating the necessity of conditions C1-C3 for the GPDINA model is more challenging than that of the DINA model with binary responses, due to the increased complexity of the GPDINA model, as illustrated in the following example and our proof of the theorem.  
\end{remark}

The  completeness of the  $\QQ$-matrix is necessary for the identifiability of the population proportion parameters, which follows from a similar argument as the binary DINA model \citep{id-dina}. See our proof in Supplementary Material for more details. To illustrate the necessity of the second condition C2 and the third condition C3, we consider a simple case when $K = 2$ in the following example. 
\begin{example}
We illustrate the necessity of the  conditions C2 and C3 with an example with $K = 2$. 
We first consider the necessity of the second condition.  
Suppose the $\QQ$-matrix is complete, but does not satisfy condition C2. i.e., there exists some attribute which is required by at most two items. Without loss of generality (WLOG), assume that this is the first attribute. According to Proposition~\ref{propGPDINA}, $\qq_{j} \neq \zero$ for all $j \in [J]$, so the $\QQ$-matrix can be written as one of the following: 
\begin{equation}\label{exC1}
\QQ =
\begin{pmatrix}
1 \quad 0  \\
\hdashline[2pt/2pt]
1\quad 0\\
\hdashline[2pt/2pt]
0\quad 1\\
\hdashline[2pt/2pt]
\vdots\quad \vdots\\
\hdashline[2pt/2pt]
0\quad 1\\
\end{pmatrix}_{J\times 2} \;\text{or}\quad
\QQ =
\begin{pmatrix}
1 \quad 0  \\
\hdashline[2pt/2pt]
1\quad 1\\
\hdashline[2pt/2pt]
0\quad 1\\
\hdashline[2pt/2pt]
\vdots\quad \vdots\\
\hdashline[2pt/2pt]
0\quad 1\\
\end{pmatrix}_{J\times 2},
 \end{equation}
 where the dashline ``- - -'' indicates the separation of different items.
For simplicity, we may assume that the $\QQ$-matrix takes the first formula. The case when the $\QQ$-matrix takes the other formula can be similarly obtained. So then only the first and the second item require $\alpha_1$.
Under this $\QQ$-matrix, we show that the model parameters $(\ttt^+,\ttt^-,\pp)$ are not identifiable by constructing a set of parameters   $(\bar\ttt^+,\bar\ttt^-,\bar\pp)\neq (\ttt^+,\ttt^-,\pp)$ which satisfy \eqref{eq-orig}. 
Take $\bar\ttt^+ = \ttt^+$ and $\bar\ttt^-_j = \ttt^-_j$ for $j>2$,
and $\bar p_{(11)} + \bar p_{(01)}=p_{(11)} + p_{(01)}$. Next we show that the remaining parameters $(\ttt^-_1, \ttt^-_2, p_{(00)}, p_{(01)}, p_{(10)})$ are not identifiable. 
Using the $\TT$-matrix tool, it can be shown that the non-identifiability occurs if the following equations hold:
$P\big((R_1,R_2)=(r_1,r_2)\mid \QQ, \bar\ttt^+,\bar\ttt^-,\bar\pp\big) =  P\big((R_1,R_2)=(r_1,r_2)\mid \QQ, \ttt^+,\ttt^-,\pp\big)$ for all $r_1 \in \{0, 1, \ldots H_1\}$, $r_2 \in \{0, 1, \ldots, H_2\}$,
where $(R_1,R_2)$ are the first two entries of the random response $\RR$.
These equations can be further expressed as the following equations: 
\begin{equation}\label{aa2}
\resizebox{1\hsize}{!}{$
\begin{cases}
\bar p_{(00)} + \bar p_{(10)} + p_{(01)} + p_{(11)}
    = p_{(00)} + p_{(10)} + p_{(01)}+ p_{(11)};\\
\bar \theta_{1,l_1}^-[\bar p_{(00)} +  \bar p_{(01)}] + \theta_{1,l_1}^+[\bar p_{(10)} +   \bar p_{(11)}] 
    = \theta_{1,l_1}^-[ p_{(00)} +   p_{(01)}] + \theta_{1,l_1}^+[ p_{(10)} +   p_{(11)}];\\
\bar \theta_{2,l_2}^-[\bar p_{(00)} +  \bar p_{(01)}] + \theta_{2,l_2}^+[\bar p_{(10)} +  \bar p_{(11)}] 
    = \theta_{2,l_2}^-[ p_{(00)} +   p_{(01)}] + \theta_{2,l_2}^+[ p_{(10)} +   p_{(11)}];\\
\bar\theta_{1,l_1}^-\bar \theta_{2,l_2}^-[\bar p_{(00)} +  \bar p_{(01)}] +  \theta_{1,l_1}^+\theta_{2,l_2}^+[\bar p_{(10)} +  \bar p_{(11)}]
    =  \theta_{1,l_1}^-\theta_{2,l_2}^-[ p_{(00)} +   p_{(01)}] +  \theta_{1,l_1}^+\theta_{2,l_2}^+[ p_{(10)} +   p_{(11)}];
\end{cases}$}
\end{equation}
where $l_1 \in [H_1]$, $l_2 \in [H_2]$. Then there are $(1+H_1+H_2+H_1H_2)$ equations above in total.
If we further let some $\kappa \in (0,1)$ s.t.
\begin{align}\label{kappa}
\begin{pmatrix}
    \theta_{1, l_1}^-\\
    \bar{\theta}_{1, l_1}^-\\
    \theta_{1, l_1}^+\\
    \bar{\theta}_{1, l_1}^+
\end{pmatrix} = \kappa^{l_1-1}
\begin{pmatrix}
    \theta_{1, 1}^-\\
    \bar{\theta}_{1, 1}^-\\
    \theta_{1, 1}^+\\
    \bar{\theta}_{1, 1}^+
\end{pmatrix}, \;\text{and}\;
\begin{pmatrix}
    \theta_{2, l_2}^-\\
    \bar{\theta}_{2, l_2}^-\\
    \theta_{2, l_2}^+\\
    \bar{\theta}_{2, l_2}^+
\end{pmatrix} = \kappa^{l_2-1}
\begin{pmatrix}
    \theta_{2, 1}^-\\
    \bar{\theta}_{2, 1}^-\\
    \theta_{2, 1}^+\\
    \bar{\theta}_{2, 1}^+
\end{pmatrix} \;\;\;\text{for}\;\;l_1 \in [H_1],\;l_2 \in [H_2],
\end{align} 
then equations (\ref{aa2})  can be reduced to four equations
\begin{equation}\label{aa3}
\resizebox{1\hsize}{!}{$
\begin{cases}
 \bar p_{(00)} + \bar p_{(10)} + p_{(01)} + p_{(11)}
    = p_{(00)} + p_{(10)} + p_{(01)}+ p_{(11)};\\
\bar \theta_{1,1}^-[\bar p_{(00)} +  \bar p_{(01)}] + \theta_{1,1}^+[\bar p_{(10)} +   \bar p_{(11)}] 
    = \theta_{1,1}^-[ p_{(00)} +   p_{(01)}] + \theta_{1,1}^+[ p_{(10)} +   p_{(11)}];\\
\bar \theta_{2,1}^-[\bar p_{(00)} +  \bar p_{(01)}] + \theta_{2,1}^+[\bar p_{(10)} +  \bar p_{(11)}] 
    = \theta_{2,1}^-[ p_{(00)} +   p_{(01)}] + \theta_{2,1}^+[ p_{(10)} +   p_{(11)}];\\
\bar\theta_{1,1}^-\bar \theta_{2,1}^-[\bar p_{(00)} +  \bar p_{(01)}] +  \theta_{1,1}^+\theta_{2,1}^+[\bar p_{(10)} +  \bar p_{(11)}]
    =  \theta_{1,1}^-\theta_{2,1}^-[ p_{(00)} +   p_{(01)}] +  \theta_{1,1}^+\theta_{2,1}^+[ p_{(10)} +   p_{(11)}].\\
\end{cases}
$}
\end{equation}
\doublespacing
For any  $(\ttt^+,\ttt^-,\pp)$, there are 4 constraints in \eqref{aa3}
but   5 parameters 	$(\bar \theta_{1,1}^-,\bar \theta_{2,1}^-,\bar p_{(00)}, \bar p_{(10)},\bar p_{(01)})$ to solve. Therefore there are infinitely many solutions and $(\ttt^+,\ttt^-,\pp)$ are non-identifiable. 
As for the case when the $\QQ$-matrix takes the other formula, the proof can be easily obtained with minor change of notation.

Next we prove the necessity of the third condition C3. Suppose the $\QQ$-matrix is complete, according to Proposition~\ref{propGPDINA}, we may assume that
the $\QQ$-matrix has the following form up to some permutation:
\begin{equation}\label{exC1}
\QQ =\begin{pmatrix}
 {\cI}_2  \\
\one_{J-2} \quad \one_{J-2}
\end{pmatrix}_{J\times 2}.
 \end{equation}
 Take $\bar\ttt^+ = \ttt^+$ and $\bar\ttt^-_j = \ttt^-_j$ for $j>2$,
and $\bar p_{(11)}=p_{(11)}$. Next we show the remaining parameters $(\ttt^-_1, \ttt^-_2, p_{(00)}, p_{(10)},p_{(01)})$ are not identifiable. 
Using the $\TT$-matrix tool, again we can show that the non-identifiability occurs if the following equations hold:
$P\big((R_1,R_2)=(r_1,r_2)\mid \QQ, \bar\ttt^+,\bar\ttt^-,\bar\pp\big) =  P\big((R_1,R_2)=(r_1,r_2)\mid \QQ, \ttt^+,\ttt^-,\pp\big)$ for all $r_1 \in \{0, 1, \ldots H_1\}$, $r_2 \in \{0, 1, \ldots, H_2\}$,
where $(R_1,R_2)$ are the first two entries of the random response $\RR$.
These equations can be further expressed 
into $(1+H_1+H_2+H_1H_2)$ equations similar to equations~(\ref{aa2}) with minor notation modification.
Similarly, if we further let some $\kappa \in (0,1)$ s.t. equations~(\ref{kappa}) hold, then these equations can be reduced to only four equations.
\doublespacing

For any  $(\ttt^+,\ttt^-,\pp)$, there are four constraints 
but five parameters 	$(\bar \theta_{1,1}^-,\bar \theta_{2,1}^-,\bar p_{(00)}, \bar p_{(10)},\bar p_{(01)})$ to solve. Therefore there are infinitely many solutions and $(\ttt^+,\ttt^-,\pp)$ are non-identifiable.
Thus we have shown that that the conditions C2 and C3 are indeed necessary.
For the proofs of more general cases and the sufficiency of the conditions, see   Supplementary Material for more details.
\end{example}

\subsection{Identifiability of the Sequential DINA model}\label{sec:seq-id}
To study the identifiability of the Sequential DINA model, different techniques need to be developed.
From the discussion in  Sections 3.1 and 3.2, the structure of the $\TT^s$-matrix for the Sequential DINA model is different from the $\TT$-matrix defined for the GPDINA model, since the rows of the $\TT^s$-matrix of Sequential DINA corresponding to higher response categories generally involve more than two item parameters, making it different from the usual DINA model structure. 

To address this issue,  note that
\begin{equation}\label{eq:T-ej}
    \TT_{\ee_j}^s = P( R_j \geq 1\mid \QQ,\bbb^+,\bbb^-,\aaa) = (\beta_{j,1}^+)^{\xi_{j,1,\aaa}} (\beta_{j,1}^-)^{1-\xi_{j,1,\aaa}}
\end{equation}
only involves two parameters $\beta_{j,1}^+$ and $\beta_{j,1}^-$, which has a similar algebraic structure to  that for the DINA model with binary responses, and thus working on these parameters firstly would be a good strategy to consider. 
The focus of these quantities can be interpreted as follows:
consider ``binary" responses  of the form $I(\text{item j} \geq 1)$, the Sequential DINA model is then reduced to a binary DINA model. According to equation (\ref{eq:T-ej}), the uncertainty parameters for this model are $\{\beta_{j,1}^+,\;\beta_{j,1}^-\}_{j \in [J]}$. The corresponding $\TT$-matrix for this reduced model consists of exactly vectors $\left(\TT^s_{\ee_j}\right)_{j \in [J]}$ and their element-wise products. That is, let $\TT^1$ denote the $\TT$-matrix for the reduced model (here we compress the notation ``s" in $\TT^s$), which is a submatrix of $\TT^s$-matrix, then
\[
\TT^1 = \Bigg(\overset{i_j}{ \underset{l=i_1}{\circ}}
\TT^s_{\ee_l}\Bigg) \;\text{ for }\;{i_1 < \ldots <i_j,\;j \in [J]}.
\]
Furthermore,   let $\QQ^1$ denote the submatrix of the $\QQ$-matrix for the first category of each item, i.e., $\QQ^1 = (\qq_{j,1})_{j \in J}$. Then the $\QQ$-matrix for the above reduced model is $\QQ^1$, as only the attributes required for completing the first categories are in scope.
For notation convenience, we  let $\QQ^1_{1:K}$ denote the submatrix of the $\QQ^1$-matrix that consists of the $\qq$-vectors for the first categories of the first $K$ items,  and $\QQ^1_{K+1:J}$ denote the submatrix of $\QQ^1$ that consists of the $\qq$-vectors for the first categories of items $(K+1), \ldots, J$, i.e., 
\[
\QQ^1 = 
\begin{pmatrix}
    \QQ^1_{1:K}\\
    \QQ^1_{K+1:J}
\end{pmatrix}.
\]
To better illustrate this idea, we present an example in the following.
\begin{example}
The $\QQ^1$-matrix and the $\TT^1$-matrix for the reduced model of Example~\ref{ex2} are:
\begin{equation*}
    \QQ^1 = 
    \begin{pmatrix}
        1 & 0\\
        \hdashline[2pt/2pt]
        0 &1 
    \end{pmatrix},\;\;\;
    \TT^1 = 
    \begin{pmatrix}
            \aaa\;: & (0,0) & (1,0) & (0,1) & (1,1)\\
    \TT^1_{\rr = (0,0)}\;& 1 & 1 & 1 &1\\
    \TT^1_{\rr = (1,0)}& \beta_{1,1}^- & \beta_{1,1}^+ & \beta_{1,1}^- &\beta_{1,1}^+\;\\
    \TT^1_{\rr = (0,1)}\;& \beta_{2,1}^- &  \beta_{2,1}^-& \beta_{2,1}^+ &\beta_{2,1}^+\\
    \TT^1_{\rr = (1,1)}& \beta_{1,1}^-\beta_{2,1}^- & \beta_{1,1}^+\beta_{2,1}^- & \beta_{1,1}^-\beta_{2,1}^+ &\beta_{1,1}^+\beta_{2,1}^+\;
    \end{pmatrix},
\end{equation*}
where $\TT^1_{\rr = (1,1)} = \TT^1_{\rr = (1,0)} \circ \TT^1_{\rr = (0,1)}$.
\end{example}
It turns out that the first category of each item plays a crucial role in the identifiability of the Sequential DINA model.
Provided the first categories of the items are informative enough, based on the identifiability results for  the DINA model with binary responses, we can identify the item parameters of the first categories and the population proportion parameters. 
More interestingly, we can show that the item parameters of the other categories can be identified based on these identified parameters without additional requirements.
Motivated by this, we have
the following sufficient condition for the identifiability of the Sequential DINA model.
%, whose proof is given in   Supplementary Material.
%, stated as the following theorem.
\begin{restatable}{theorem}{suSeq}\label{suSeq}
The Sequential DINA model parameters are identifiable if the $\QQ^1$ matrix satisfies the following conditions S1-S3.
\renewcommand{\thecondition}{S\arabic{condition}}
\begin{condition}\label{cond-s1}
    $\QQ^1$-matrix is complete, i.e., under some permutation, $\QQ^1_{1:K} = \cI_K$.
\end{condition}
\begin{condition}\label{cond-s2}
    Each of the K attributes is required by at least three items’ first categories.
\end{condition}
\begin{condition}\label{cond-s3}
    Suppose $\QQ^1_{1:K} = \cI_K$, then any two different columns of $\QQ^1_{K+1:J}$  are distinct.
\end{condition}
\end{restatable}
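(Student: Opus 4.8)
\medskip

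The plan is to leverage the reduction described immediately before the statement: pass first to the ``binary'' DINA model obtained by collapsing each item $j$ to the indicator $I(R_j \geq 1)$, use the known binary-DINA identifiability theory to pin down $\pp$ together with the first-category parameters $\{\beta_{j,1}^+,\beta_{j,1}^-\}_{j\in[J]}$, and then bootstrap upward through the categories $l=2,\ldots,H_j$ one item at a time. Concretely, the first step is to observe that under S1--S3 the matrix $\QQ^1$ satisfies exactly conditions C1--C3 for the reduced binary model (completeness gives C1, the ``at least three items' first categories'' gives C2, and the distinctness of the columns of $\QQ^1_{K+1:J}$ gives C3). Hence by the binary-DINA identifiability result (\citet{id-dina}, the $H_j\equiv 1$ case of Theorem~\ref{thm:nesu1}), from the marginal distribution of $(I(R_1\geq1),\ldots,I(R_J\geq1))$ we identify $\pp$ and every $\beta_{j,1}^+,\beta_{j,1}^-$. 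Formally, by Lemma~\ref{lem:ts-matrix} we are given $\TT^s\pp = \bar\TT^s\bar\pp$; restricting attention to the rows indexed by $\rr$ with all nonzero entries equal to $1$ yields exactly $\TT^1\pp = \bar\TT^1\bar\pp$, which by the binary result forces $\bar\pp = \pp$ and $\bar\beta_{j,1}^{\pm} = \beta_{j,1}^{\pm}$.

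\medskip

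The second and main step is the inductive argument on the category index within a fixed item $j$. Suppose we have already identified $\pp$ and $\beta_{j,l'}^{\pm}$ for all $l' < l$; we want $\beta_{j,l}^{\pm}$. Consider the two rows of $\TT^s$ indexed by $\rr = l\cdot\ee_j$ and $\rr = (l-1)\cdot\ee_j$. We have
\begin{equation*}
\TT^s_{l\cdot\ee_j} \;=\; \TT^s_{(l-1)\cdot\ee_j} \circ \big( (\beta_{j,l}^+)^{\xi_{j,l,\aaa}}(\beta_{j,l}^-)^{1-\xi_{j,l,\aaa}} \big)_{\aaa\in\{0,1\}^K},
\end{equation*}
and the same identity with bars. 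Since every entry of $\TT^s_{(l-1)\cdot\ee_j}$ is a product of already-identified first-through-$(l-1)$ parameters, and these products are strictly positive (using $\beta^+_{j,l'}>0$ and, where $\beta^-_{j,l'}=0$, the convention \eqref{eq:beta zero} which actually removes the relevant $\aaa$'s from further consideration), one can divide entrywise and conclude that the vector $\big((\beta_{j,l}^+)^{\xi_{j,l,\aaa}}(\beta_{j,l}^-)^{1-\xi_{j,l,\aaa}}\big)_\aaa$ is itself identified from the data, \emph{provided} we first know $\TT^s_{l\cdot\ee_j}\pp$ and $\TT^s_{(l-1)\cdot\ee_j}\pp$ are not merely equal after pairing with $\pp$ but that the full row vectors are recoverable. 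This is where I would invoke that $\pp$ has all positive entries together with the already-identified lower-category rows to peel off $\TT^s_{l\cdot\ee_j}$: knowing $\pp$ and that $\TT^s\pp=\bar\TT^s\bar\pp$ with $\bar\pp=\pp$, the marginal $P(R_j\geq l)$ and its refinements across $\aaa$-groups distinguished by $\xi_{j,l,\cdot}$ determine the two distinct values $\beta_{j,l}^+$ and $\beta_{j,l}^-$, since $\qq_{j,l}\neq\zero$ (Proposition~\ref{propseq}) guarantees both the set $\{\aaa:\xi_{j,l,\aaa}=1\}$ and its complement are nonempty, and $\pp>0$ guarantees both groups carry positive mass. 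The inequality $\beta_{j,l}^-<\beta_{j,l}^+$ then resolves any labeling ambiguity.

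\medskip

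The final step is bookkeeping: once $\beta_{j,l}^{\pm}$ are identified for every $l\in[H_j]$ and every $j\in[J]$, the parameters $\bbb^+,\bbb^-$ (which are the partial products) are identified, and together with $\bar\pp=\pp$ this is exactly the conclusion of Lemma~\ref{lem:ts-matrix}, so the Sequential DINA parameters are identifiable. I expect the main obstacle to be making rigorous the ``peeling'' claim in the inductive step — i.e., showing that identifying $\pp$ and the lower-category parameters really does let one recover the \emph{vector} $\TT^s_{l\cdot\ee_j}$ and not just its inner product with $\pp$. The clean way to handle this is to group the columns $\aaa$ by the value of $(\xi_{j,1,\aaa},\ldots,\xi_{j,l,\aaa})$: within the reduced binary model the identifiability result already shows each such group's total mass $\sum_{\aaa\in\text{group}}p_\aaa$ is recoverable (this is essentially what completeness of $\QQ^1$ buys), after which the probabilities $P(R_j\geq l \mid \aaa)$ are constant across each group and can be solved for directly. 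A secondary technical point to be careful about is the degenerate case $\beta_{j,l}^-=0$ and the induced \eqref{eq:beta zero}: there one must check that the attribute profiles in $\Gamma_{j,l^*}^-$ contribute zero to all rows $\rr$ with $r_j>l^*$, so they impose no constraint and the remaining (nondegenerate) profiles are still identified by the same argument.
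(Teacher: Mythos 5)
Your first step (collapsing to the binary DINA model on $I(R_j\geq 1)$ and identifying $\pp$ and $\beta_{j,1}^{\pm}$ from the binary identifiability result) matches the paper exactly. The gap is in the inductive step. Once $\bar\pp=\pp$ and the lower-category parameters are identified, the single constraint $\TT^s_{l\ee_j}\pp=\bar\TT^s_{l\ee_j}\bar\pp$ reduces to one scalar relation
\[
(\bar\beta_{j,l}^{+}-\beta_{j,l}^{+})\sum_{\aaa:\,\xi_{j,l,\aaa}=1}t^s_{(l-1)\ee_j,\aaa}\,p_{\aaa}
\;+\;(\bar\beta_{j,l}^{-}-\beta_{j,l}^{-})\sum_{\aaa:\,\xi_{j,l,\aaa}=0}t^s_{(l-1)\ee_j,\aaa}\,p_{\aaa}\;=\;0
\]
in the two unknowns $\bar\beta_{j,l}^{\pm}$, so it cannot determine them. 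Your proposed fix --- recovering the whole row vector $\TT^s_{l\ee_j}$, or equivalently the group-wise conditionals $P(R_j\geq l\mid\aaa)$, by grouping columns according to $(\xi_{j,1,\aaa},\ldots,\xi_{j,l,\aaa})$ and ``solving directly'' --- does not go through: group membership is latent, and knowing the group masses together with the aggregate marginal of $R_j$ still yields only the one equation above. Disaggregating by group requires crossing item $j$ with responses to \emph{other} items, and that is precisely where conditions S1 and S2 must be re-invoked inside the induction; your proposal uses them only for the base case, which is a structural sign that the inductive step is not closed.

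The paper supplies the missing second equation by pairing category $l$ of item $j$ with the first category of a carefully chosen other item $j'$ whose ideal-response set is \emph{nested} with that of $(j,l)$: if $\qq_{j,l}=\ee_k^{\top}$ for some $k$, condition S2 provides $j'\neq j$ with $\qq_{j',1}\succeq\ee_k$, so $\xi_{j',1,\aaa}\le\xi_{j,l,\aaa}$ for all $\aaa$; otherwise condition S1 provides $j'$ with $\qq_{j',1}=\ee_k^{\top}$ for an attribute $k$ required by $\qq_{j,l}$, so $\xi_{j,l,\aaa}\le\xi_{j',1,\aaa}$. The linear combination $\bigl(\beta_{j',1}^{\mp}\TT^s_{l\ee_j}-\TT^s_{l\ee_j+\ee_{j'}}\bigr)\pp$ minus its barred counterpart then annihilates one of the two groups of attribute profiles and isolates a single factor, $(\bar\beta_{j,l}^{+}-\beta_{j,l}^{+})$ or $(\bar\beta_{j,l}^{-}-\beta_{j,l}^{-})$, multiplied by a strictly positive sum (positive because $\aaa=\one$ lies in the surviving group and $t^s_{(l-1)\ee_j,\one}=\prod_{h<l}\beta_{j,h}^{+}>0$); the remaining parameter then follows from the single-item row. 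Without this nesting device the induction does not close, so as written your argument has a genuine gap rather than merely an unpolished step.
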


\begin{remark}
    Conditions S1-S3 are similar to conditions  C1-C3, with different target. S1-S3 are stated for $\QQ^1$-matrix in the Sequential DINA model, whereas C1-C3 are stated for $\QQ$-matrix in GPDINA. When $H_j \equiv 1$, both polytomous models are reduced to binary DINA model, and conditions C1-C3 are  equivalent to S1-S3.
\end{remark}
The conditions S1-S3, as sufficient conditions for identifying the Sequential DINA model, provide guidelines for practitioners to design $\QQ$-matrix that validates identifiability. Based on the theorem, it is suggested to design $\QQ$-matrix with informative first categories (satisfying S1-S3) to ensure identifiability.

On the other hand, sufficient these conditions are, their requirements only rely on the model's first categories. With polytomous response data involving more categories, it is natural to ask whether other categories can aid in relaxing these conditions. 
It turns out that relaxing these conditions necessitates careful consideration. In the following, we examine the necessity of each condition, and our primary finding is that while these conditions are challenging to relax, with certain constraints that allow for other informative categories to help, they might be possible to be relaxed.
The finding that these conditions are challenging to relax comes from the intrinsic sequential structure of the model. Specifically, we will show that condition S1 can not be relaxed and conditions S2 and S3 are hard to relax as non-identifiable examples do exist with the absence of these conditions.

Our first claim is that without additional constraints, the first condition S1 can not be relaxed, i.e., S1 is  necessary.
\begin{restatable}[Necessity of Condition S1]{proposition}{neSeq1}\label{neSeq1}
    Condition S1 is necessary for the identifiability of the parameters of the Sequential DINA model.
\end{restatable}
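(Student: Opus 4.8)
The plan is to show that if $\QQ^1$ is not complete, then the population proportion parameters $\pp$ are already non-identifiable, by exhibiting two distinct parameter vectors producing the same response distribution. The starting point is the observation, already used implicitly in Section \ref{sec:seq-id}, that the ``reduced'' binary DINA model on the dichotomized responses $I(R_j\ge 1)$ has $\QQ$-matrix $\QQ^1$ and item parameters $\{\beta_{j,1}^+,\beta_{j,1}^-\}_{j\in[J]}$; incompleteness of $\QQ^1$ means this reduced model cannot distinguish all $2^K$ attribute profiles. Concretely, if $\QQ^1$ is not complete, there is some attribute $k$ such that no item's first category requires $k$ alone; more strongly, one can argue (as in the binary DINA case, e.g.\ \citet{id-dina}) that there exist two distinct profiles $\aaa\ne\aaa'$ with $\xi_{j,1,\aaa}=\xi_{j,1,\aaa'}$ for every $j\in[J]$. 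I would first establish this combinatorial fact about incomplete $\QQ^1$.

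Next I would push this indistinguishability up from the first categories to the whole response distribution. The key point is that $\xi_{j,l,\aaa}$ for $l\ge 2$ is governed by $\qq_{j,l}$, which is unconstrained, so in general $\xi_{j,l,\aaa}\ne\xi_{j,l,\aaa'}$ for higher categories — we cannot simply merge $\aaa$ and $\aaa'$ outright. Instead I would choose the two distinct profiles $\aaa,\aaa'$ so that additionally they agree on \emph{all} categories: since incompleteness gives freedom, take $\aaa' = \aaa \vee \ee_k$ (or more carefully, pick $k$ and $\aaa$ with $\alpha_k=0$ so that flipping $\alpha_k$ changes no ideal response $\xi_{j,l,\cdot}$ for any $(j,l)$ — this is possible precisely when column $k$ of the \emph{full} restricted $\QQ$-matrix is dominated, which incompleteness of $\QQ^1$ together with a worst-case argument can be made to yield, or else reduce to the genuinely problematic sub-case). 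Then $P(R_j=r_j\mid\aaa)=P(R_j=r_j\mid\aaa')$ for every item $j$ and every $r_j$, hence $P(\RR=\rr\mid\pp)$ depends on $\pp$ only through $p_{\aaa}+p_{\aaa'}$ (and the remaining coordinates). Defining $\bar\pp$ by shifting mass between $p_{\aaa}$ and $p_{\aaa'}$ while keeping the sum fixed, and leaving $\bbb^+,\bbb^-$ unchanged, gives $(\bar\bbb^+,\bar\bbb^-,\bar\pp)\ne(\bbb^+,\bbb^-,\pp)$ satisfying \eqref{eq-orig-seq}, so the parameters are not identifiable. Equivalently, in the language of Lemma \ref{lem:ts-matrix}, columns $\aaa$ and $\aaa'$ of $\TT^s$ are identical, so $\TT^s$ has non-trivial kernel and $\TT^s\pp=\TT^s\bar\pp$ has multiple solutions in the simplex.

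The main obstacle, and the place where care is needed, is precisely the step of arranging that the two indistinguishable profiles agree on \emph{all} categories and not merely the first ones: incompleteness of $\QQ^1$ alone guarantees two profiles agreeing on first categories, but the higher categories carry extra information that could in principle re-separate them. I would handle this by a case analysis on \emph{why} $\QQ^1$ fails to be complete — either some attribute $k$ never appears in $\QQ^1$ at all (the clean case: then one checks whether $k$ appears in higher-category rows; if not, flipping $\alpha_k$ changes nothing and we are done immediately; if it does, one localizes to the items where it matters), or $\QQ^1$'s rows fail to span in the lattice sense — and in each case either directly produce the colliding pair of $\TT^s$-columns, or reduce to the observation that even using all categories the model restricted to the relevant items is a binary-type DINA model with a deficient $\QQ$-matrix, whose population parameters are non-identifiable by the established binary theory. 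A clean way to organize the whole argument is: incompleteness of $\QQ^1$ $\Rightarrow$ two profiles with identical $\TT^1$-columns $\Rightarrow$ (after the case analysis) two profiles with identical $\TT^s$-columns $\Rightarrow$ $\rank(\TT^s)<2^K$ $\Rightarrow$ non-identifiability of $\pp$ via Lemma \ref{lem:ts-matrix}.
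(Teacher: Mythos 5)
Your proposal correctly identifies the crux --- that incompleteness of $\QQ^1$ only collapses the first-category ideal responses of two profiles, while higher categories ``could in principle re-separate them'' --- but it does not actually resolve it, and the resolution you sketch cannot work. Your fallback in the hard case (where the offending attribute does appear in higher-category rows) is to ``reduce to \ldots a binary-type DINA model with a deficient $\QQ$-matrix'' and invoke the binary non-identifiability theory; this reduction is invalid, because the higher categories genuinely carry separating information that the reduced first-category model does not. Example~\ref{counterex1} in the paper is a direct counterexample to your intended chain of implications: its $\QQ^1$ is incomplete, yet under $0<\beta^-_{j,l}<\beta^+_{j,l}$ the parameters \emph{are} identifiable, so no two columns of $\TT^s$ can coincide there, and no mass-shifting $\bar\pp$ exists. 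In other words, in the interior of the parameter space your target conclusion (``two profiles with identical $\TT^s$-columns'') is simply false for some incomplete $\QQ^1$.

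The missing idea is that necessity under strict identifiability only requires exhibiting \emph{one} non-identifiable point of the parameter space, and the right point lies on the boundary: set $\beta^-_{j,1}=0$ for all $j$. Since $\QQ^1$ is incomplete in attribute $1$, every first category either does not require $\alpha_1$ or requires $\alpha_1$ together with another attribute, so both $\zero$ and $\ee_1$ have $\xi_{j,1,\cdot}=0$ for all $j$; with $\beta^-_{j,1}=0$ they complete no first category with probability one, and the sequential convention \eqref{eq:beta zero} then forces $\beta^{\pm}_{j,l}=0$ for all $l>1$ for these profiles, regardless of the higher-category $\qq$-vectors. Hence $t^s_{\rr,\zero}=t^s_{\rr,\ee_1}\equiv 0$ for all $\rr\neq\zero$ and $p_{\zero},p_{\ee_1}$ are interchangeable. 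This single boundary construction is what the paper uses and is what your argument needs; without it, the case analysis you outline cannot be completed.
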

\noindent For the convenience of the following discussion, we present the proof of Proposition~\ref{neSeq1} below.
\begin{proof}[Proof of Proposition~\ref{neSeq1}]
Suppose that the $\QQ$-matrix does not satisfy condition S1, i.e., $\QQ^1$ is not complete, then there exists some attribute  that is not solely required by any item's first category.  WLOG, assume that this is the first attribute, and thus any item's first category that requires the first attribute  also requires some other attributes. 
% WLOG, assume item 1's first category requires both attribute 1 and attribute 2.
% Therefore, we may assume that $\QQ^1$ takes the following form:
% \[
% \QQ^1 = 
% \begin{pmatrix}
% \begin{matrix}
% 1 \; &1\\
% 0 \; &1
% \end{matrix}\; 
% &\bigzero_{2\times (K-2)}\\  
% \midzero_{(K-2)\times 2} & \bigI_{K-2}
% \end{pmatrix}.
% \]
We claim that the model parameters are not identifiable for such an incomplete $\QQ^1$-matrix. Specifically, take $\beta_{j,1}^- \equiv 0$, for $j \in [J]$. Then subjects with attribute profiles $\zero$ and $\ee_1$ are not able to complete the first categories of all the items. Since $\beta_{j,1}^- \equiv 0$, according to the model construction in Section~\ref{sec:seq}, subjects with  $\zero$ and $\ee_1$ cannot complete the other categories either, and
 for  $\zero$ and $\ee_1$,
 $\beta_{j,l}^+ \equiv \beta_{j,l}^- \equiv 0$ for $l > 1$.
Therefore, the two profiles $p_{\zero}$ and $p_{\ee_1}$
share the same probability of completing all the categories of all the items, which is zero, i.e.,
$
t_{\rr, \zero} = t_{\rr, \ee_1} \equiv 0, \forall \rr.
$
Thus, 
parameters $p_{\zero}$ and $p_{\ee_1}$ are not identifiable.
\end{proof}
In the above proof, we constructed a Sequential DINA model with $\beta_{j,1}^- \equiv 0$ so that the parameters of higher categories are defined to be zero for attribute profiles $\zero$ and $\ee_1$. 
Note that the identifiability definition requires any set of the parameters in the parameter space to be identifiable. With the model parameters space including  $0 \leq \beta_{j,l}^- <\beta_{j,l}^+ \leq 1$,   in the proof of Proposition 3, showing the nonidentifiability of the case \(\beta_{j, 1}^{-} = 0\) would be enough to establish our claim on the necessity of the completeness condition.

However, 
this example is tender and may no longer be valid if we add additional constraints for the model parameters, that is, we only focus on the identifiability of a subset of the model parameters space.
 For instance,
if we restrict our model parameters to the subset $0 < \beta_{j,l}^- < \beta_{j,l}^+ \leq 1$, then the necessity of S1 may not hold anymore.
This is because by constraining $0 < \beta_{j,l}^- < \beta_{j,l}^+ \leq 1$, we allow more categories to help identifying the parameters.
The following gives an example of the model with identifiable parameters whose $\QQ$-matrix does not satisfy condition S1 under the assumption that $0 < \beta_{j,l}^- < \beta_{j,l}^+ \leq 1$.

\begin{example}\label{counterex1}
Assume that $0 < \beta_{j,l}^- < \beta_{j,l}^+ \leq 1$, and
consider the case when $K = 2$ where the $\QQ$-matrix takes the following form: 
\begin{equation}
    \QQ = \begin{pmatrix}
    item 1 \begin{cases}
        1 & 1\\
        0 & 1
    \end{cases}\\
        \hdashline[2pt/2pt]
        item 2 \begin{cases}
        1 & 1\\
        \end{cases}\\
        \hdashline[2pt/2pt]
        item 3 \begin{cases}
        1 & 1\\
        \end{cases}\\
        \hdashline[2pt/2pt]
        item 4 \begin{cases}
        1 & 0\\
        \end{cases}\\
        \hdashline[2pt/2pt]
        item 5 \begin{cases}
        1 & 0\\
        \end{cases}\\
        \hdashline[2pt/2pt]
        item 6 \begin{cases}
        1 & 0\\
        \end{cases}\\
\end{pmatrix} \text{and}\;\;
    \QQ^1 = \begin{pmatrix}
         1 & 1\\
        \hdashline[2pt/2pt]
        1 & 1\\
        \hdashline[2pt/2pt]
        1 & 1\\
        \hdashline[2pt/2pt]
        1 & 0\\
        \hdashline[2pt/2pt]
        1 & 0\\
        \hdashline[2pt/2pt]
        1 & 0\\
    \end{pmatrix}.
\end{equation}
Clearly, the $\QQ^1$-matrix does not satisfy the completeness condition, but the model parameters with this $\QQ$-matrix are identifiable, whose proof is presented in the Supplementary Material.
\end{example}

\begin{remark}\label{rm:strict id}
    Through the above analysis, we can see that condition S1 is necessary in a strict sense, which may impose overly stringent requirements for practical cognitive diagnostic tests.  Statistically, ``strict sense" in this context refers to the  standard identifiability definition of the model parameters that requires any set of the parameters in the parameter space to be identifiable \citep{partial}. Contrary to the notion of strict identifiability is the notion of generic identifiability \citep{allman2009,partial}, where we allow for non-identifiability to happen within a zero-measure set.
    % One way of handling this is to relax the strict notion of identifiability to generic identifiability, introduced and studied by \citet{allman2009}, which allows there to be a measure-zero subset where identifiability breaks down. 
    This slightly weaker notion can often suffice for real data analysis purposes \citep*{allman2009, partial} and is therefore useful in practice. The extent to which our necessary conditions can be relaxed for generic identifiability of the Sequential DINA model needs further explorations in the future, and the above case with $\beta_{j,l}^- = 0$ in the Sequential DINA model is one of such example. %We defer this exploration to our future work.
\end{remark}
Next we study the necessity of conditions S2 and S3. It turns out that the analysis for conditions S2 and S3 is more complicated. We start by presenting two examples to illustrate that.

\begin{example}\label{notid1}
Consider the case when $K = 2$ with two attributes $\alpha_1$ and $\alpha_2$, $J = 4$ items, and the Q-matrix takes the following form:
\begin{equation}
    \QQ = 
    \begin{pmatrix}
    item \; 1
    \begin{cases}
    1 & 0\\   
    \end{cases}\\
    \hdashline[2pt/2pt]
    item \; 2
    \begin{cases}
    0 & 1\\       
    \end{cases}\\
    \hdashline[2pt/2pt]
    item \; 3
    \begin{cases}
    0 & 1\\        
    \end{cases}\\
    \hdashline[2pt/2pt]
    item \; 4
    \begin{cases}
    1 & 1\\     
        1 & 0\\
    \end{cases}
    \end{pmatrix} \;\text{and}\;\;\QQ^1 = \begin{pmatrix}
    1 & 0\\
    \hdashline[2pt/2pt]
    0 & 1\\
    \hdashline[2pt/2pt]
    0 & 1\\
    \hdashline[2pt/2pt]
    1 & 1
    \end{pmatrix}. 
\end{equation}
The above $\QQ$-matrix satisfies conditions S1 and S3, but does not satisfy condition S2, and the model parameters are not identifiable.
\end{example}

\begin{example}\label{notid2}
Consider the case when $K = 2$ with two attributes $\alpha_1$ and $\alpha_2$, $J = 4$ items, and the Q-matrix takes the following form:
\begin{equation}
    \QQ = 
    \begin{pmatrix}
    item \; 1
    \begin{cases}
    1 & 0\\   
    \end{cases}\\
    \hdashline[2pt/2pt]
    item \; 2
    \begin{cases}
    0 & 1\\       
    \end{cases}\\
    \hdashline[2pt/2pt]
    item \; 3
    \begin{cases}
    1 & 1\\        
    1 & 0\\
    \end{cases}\\
    \hdashline[2pt/2pt]
    item \; 4
    \begin{cases}
    1 & 1\\        
    \end{cases}
    \end{pmatrix} \;\text{and}\;\;\QQ^1 = \begin{pmatrix}
    1 & 0\\
    \hdashline[2pt/2pt]
    0 & 1\\
    \hdashline[2pt/2pt]
    1 & 1\\
    \hdashline[2pt/2pt]
    1 & 1
    \end{pmatrix}. 
\end{equation}
The above $\QQ$-matrix satisfies conditions S1 and S2, but does not satisfy condition S3, and the model parameters are not identifiable.
\end{example}
We defer the proofs of the non-identifiability of the above two examples in Supplementary Material.
The preceding examples illustrate the difficulty in relaxing conditions S2 and S3, even in simple cases such as $J=4$ and $K=2$, where non-identifiable examples exist when these conditions are violated. 
For more general cases, relaxing these conditions could be even more challenging.

However, the existence of these examples does not necessarily mean that conditions S2 and S3 are always necessary. In fact, we construct two identifiable examples that do not satisfy conditions S2 and S3 in the following, which indicates that conditions S2 and S3 may not be necessary in general. The identifiability of the following two examples relies on other additional categories, which carry relevant information in place of the first categories. This is also aligned with intuition, as we expect other categories to contribute to the identification of the model parameters. In other words, with the help of other categories, the model parameters could possibly be identified.

\begin{example}\label{counterex2}
Consider the case when $K=2$ with two attributes $\alpha_1$ and $\alpha_2$, and $J=4$ items. Each item contains two categories and the $\QQ$-matrix takes the following form:
\begin{equation}
    \QQ = \begin{pmatrix}
    item \;1
    \begin{cases}
        1 & 0\\
    0 & 1\\
    \end{cases}\\
    \hdashline[2pt/2pt]
    item \;2
    \begin{cases}
    1 & 0\\
    0 & 1\\        
    \end{cases}\\
    \hdashline[2pt/2pt]
    item \;3
    \begin{cases}
    0 & 1\\
    1 & 0\\
    \end{cases}\\
    \hdashline[2pt/2pt]
    item \; 4
    \begin{cases}
    0 & 1\\
    1 & 0\\        
    \end{cases}
    \end{pmatrix} \;\text{and}\;\;\QQ^1 = \begin{pmatrix}
    1 & 0\\
    \hdashline[2pt/2pt]
    1 & 0\\
    \hdashline[2pt/2pt]
    0 & 1\\
    \hdashline[2pt/2pt]
    0 & 1\\
    \end{pmatrix}. 
\end{equation}
The above $\QQ^1$ matrix does not satisfy the condition S2, yet the model parameters are identifiable, whose proof is deferred to the Supplementary Material.
\end{example}
\begin{remark}
    Condition S2 assumes each attribute is required by three items' first categories. In the above example, both attributes $\alpha_1$ and $\alpha_2$ are required by only two items' first categories, yet the two attributes are also required by the second categories of other items, which provides additional information and eventually makes the model parameters identifiable. This suggests that the information provided by higher categories would also be helpful for the  model identifiability.  
\end{remark}

Similarly, as illustrated in the following example, the role of the first category in condition S3 could also be replaced by other categories, which may make the model identifiable as well.

%the first categories may be replaced by other categories.
%, which makes the condition S2 unnecessary.\todo{?}
\begin{example}\label{counterex3}
Consider the case when $K=2$ with two attributes $\alpha_1$ and $\alpha_2$, and $J=5$ items, and the $\QQ$-matrix takes the following form:
\begin{equation*}
    \QQ = 
    \begin{pmatrix}
    item \; 1
    \begin{cases}
    1 & 0\\
    0 & 1\\        
    \end{cases}\\
    \hdashline[2pt/2pt]
    item \; 2
    \begin{cases}
    0 & 1\\
    1 & 0\\        
    \end{cases}\\
    \hdashline[2pt/2pt]
    item \; 3
    \begin{cases}
    1 & 1\\        
    \end{cases}\\
    \hdashline[2pt/2pt]
    item \; 4
    \begin{cases}
    1 & 1\\        
    \end{cases}\\
    \hdashline[2pt/2pt]
    item \; 5
    \begin{cases}
    1 & 1\\        
    \end{cases}
    \end{pmatrix} \;\text{and}\;\;\QQ^1 = \begin{pmatrix}
    1 & 0\\
    \hdashline[2pt/2pt]
    0 & 1\\
    \hdashline[2pt/2pt]
    1 & 1\\
    \hdashline[2pt/2pt]
    1 & 1\\
    \hdashline[2pt/2pt]
    1 & 1\\
    \end{pmatrix}. 
\end{equation*}
The above $\QQ^1$ matrix does not satisfy the condition S3, yet the model parameters are identifiable, whose proof is presented in the Supplementary Material.
\end{example}

While the above two examples imply that the conditions S2 and S3 may not be necessary for the identifiability of the parameters for the Sequential DINA model,   the following weaker versions of S2 and S3 (denoted as conditions S2$^*$ and S3$^*$)  are necessary  for the model identifiability. This proposition is summarized as follows.

\begin{restatable}[Necessity of Conditions S2$^*$ and S3$^*$]{proposition}{neSeq}\label{neSeq}
The Sequential DINA model parameters are identifiable only if the $\QQ$-matrix satisfies the following conditions S2$^*$ and S3$^*$.

%\renewcommand{\thecondition}{S\arabic{condition}}
%\begin{condition}\label{S2$^*$}
\noindent{\bf Condition S2$^*$ } Each of the K attributes is required by at least three categories (not necessarily the first categories), and the three categories must come from at least two different items. 
%\end{condition}
%\begin{condition}\label{S3$^*$} 

\noindent{\bf Condition S3$^*$ } Suppose $\QQ$-matrix satisfies S1, i.e., $\QQ_{1:K}^1 = \cI_K$, and
    any two different columns of the following matrix
    (which removes the identity matrix of $\QQ_{1:K}^1$ from $\QQ$)
    \[
    \begin{pmatrix}
        \QQ_{1:K}^{-1}\\
        \QQ_{K+1:J}
        \end{pmatrix}
    \]
    are distinct,
    where $ \QQ_{1:K}^{-1}$ denotes the remaining  submatrix of $\QQ_{1:K}$ after removing $\QQ_{1:K}^1$.
%\end{condition}
%See our proof in Appendix~\ref{app:seq} for more details.
\end{restatable}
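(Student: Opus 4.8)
The plan is to prove both parts by contraposition, working through the $\TT^s$-matrix reformulation of Lemma~\ref{lem:ts-matrix}: assuming S2$^*$ (resp.\ S3$^*$) fails, I will exhibit $(\bar\bbb^+,\bar\bbb^-,\bar\pp)\neq(\bbb^+,\bbb^-,\pp)$ with $\TT^s\pp=\bar\TT^s\bar\pp$. The blueprint is the necessity argument for C2 and C3 that follows Theorem~\ref{thm:nesu1}: a violation of the condition forces a whole block of attribute profiles to become indistinguishable to all but a few categories, so after fixing most of the parameters at their original (conveniently chosen) values, the identifiability equations collapse to a small system carrying strictly more free unknowns than independent constraints, and its continuum of solutions yields the alternative parameters. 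Throughout we use that $\qq_{j,l}\neq\zero$ (Proposition~\ref{propseq}) and $p_{\aaa}>0$ for every $\aaa$.

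For the necessity of S2$^*$, suppose it fails for attribute $k=1$. Then either (i) the categories requiring $\alpha_1$ all lie in a single item $j^*$ (including the degenerate possibility that none do), or (ii) exactly two categories, lying in two distinct items, require $\alpha_1$. In case (i), $P(R_j=r_j\mid\aaa)$ is independent of $\alpha_1$ for every $j\neq j^*$, so the joint law is determined by the laws of the remaining items given $(\alpha_2,\dots,\alpha_K)$ together with, for each such tuple, the two-component mixture $p_{(0,\alpha_2,\dots,\alpha_K)}D^{-}+p_{(1,\alpha_2,\dots,\alpha_K)}D^{+}$ of the ``$-$'' and ``$+$'' sequential distributions of item $j^*$; in the transparent sub-case where the relevant categories of $j^*$ require $\alpha_1$ and nothing else these are $(\alpha_2,\dots,\alpha_K)$-free. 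Keeping all parameters fixed except $\bbb^{-}_{j^*}$ and the way the frozen masses $p_{(0,\alpha_2,\dots)}+p_{(1,\alpha_2,\dots)}$ are split, I take $\bar D^{-}=(1-\eta)D^{-}+\eta D^{+}$ for small $\eta>0$ and re-weight by $\bar p_{(0,\alpha_2,\dots)}=p_{(0,\alpha_2,\dots)}/(1-\eta)$; since $p_{(1,\alpha_2,\dots)}>0$ on every slice this keeps all masses positive, and since $\bar D^{-}$ is a mixture of two sequential distributions near $D^{-}$ it is realizable by a $\bbb^{-}_{j^*}$ with $\bar\beta^{-}_{j^*,l}<\beta^{+}_{j^*,l}$ — an interior alternative, unlike the boundary construction behind Proposition~\ref{neSeq1}. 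Case (ii) reduces, after freezing $\bbb^{+}$ and the parameters of all items except the two carrying a category requiring $\alpha_1$ and choosing $\bar\pp$ to preserve the aggregated masses the remaining items resolve, to the ``attribute required by at most two items'' configuration of the binary DINA model, underdetermined by one degree of freedom exactly as in the necessity proof of C2.

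For the necessity of S3$^*$, assume S1 holds (otherwise S3$^*$ is vacuous), so $\qq_{j,1}=\ee_j$ for $j\in[K]$, and suppose columns $1$ and $2$ of $\binom{\QQ_{1:K}^{-1}}{\QQ_{K+1:J}}$ coincide. Then every category other than the identity first categories $\qq_{1,1}=\ee_1$ and $\qq_{2,1}=\ee_2$ requires $\alpha_1$ and $\alpha_2$ together or not at all, so for any such category $\xi_{j,l,\aaa}$ depends on $(\alpha_1,\alpha_2)$ only through $\alpha_1\wedge\alpha_2$; hence, with $\alpha_3,\dots,\alpha_K$ fixed, no item other than $1$ and $2$ separates the profiles $(\alpha_1,\alpha_2)=(0,1)$ and $(1,0)$, and items $1,2$ separate them only through their first categories — precisely the situation behind the necessity of C3. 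Freezing $\bbb^{+}$, the parameters of all items $j\neq1,2$, the higher-category parameters of items $1$ and $2$, and the masses on $\{\aaa:\alpha_1\wedge\alpha_2=1\}$ (all that the ``merged'' categories respond to), the identifiability equations reduce — after dividing out the common factors contributed by the merged categories — to four equations (a normalization/marginal identity, the first-category equations of items $1$ and $2$, and their joint equation) in the five unknowns $(\bar\beta^{-}_{1,1},\bar\beta^{-}_{2,1},\bar p_{00},\bar p_{01},\bar p_{10})$ when $K=2$, matching the GPDINA proof of the necessity of C3; the system is underdetermined, so the parameters are not identifiable. For $K>2$ one first reduces to this core configuration by a suitable choice of the baseline population that makes the binding constraints localize to the $(\alpha_1,\alpha_2)$-fiber.

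The step I expect to be the crux is controlling how a perturbed first-category guessing parameter propagates through the sequential structure. Unlike in GPDINA, $\beta^{-}_{j,1}$ enters every cumulative product $\prod_{h\le l}\beta^{-}_{j,h}$ and so appears in all the higher-category $\TT^s$-equations of item $j$; consequently the geometric-rescaling device of \eqref{kappa} cannot be reused verbatim, as it would force $\beta^{+}_{j,l}=\beta^{-}_{j,l}$ for $l\ge2$ in violation of monotonicity. The resolution will be to choose the baseline $\bbb^{\pm}$ of the perturbed items with a multiplicative level structure making their higher-category equations scalar multiples of — or algebraic consequences of — the first-category equation, exploiting that in case (i) and in the S3$^*$ argument those higher categories see either no new attributes or only $\alpha_1\wedge\alpha_2$, and then to check that the constructed $\bar\pp,\bar\bbb^{-}$ satisfy them automatically; the same device absorbs the remaining bookkeeping in the general case (i) where $j^*$'s categories require further attributes and in the $K>2$ case of S3$^*$. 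One must also respect the conventions of \eqref{eq:beta zero} whenever a constructed $\beta^{-}$ is pushed to $0$, and the factorization $\TT^s_{\rr}=\underset{j:r_j\neq0}{\circ}\TT^s_{r_j\ee_j}$ is what certifies that the small reduced systems solved above are both consequences of, and liftable to, the full system $\TT^s\pp=\bar\TT^s\bar\pp$ over all $\rr\in\mcs$.
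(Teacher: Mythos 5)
Your overall skeleton — contraposition through Lemma~\ref{lem:ts-matrix}, splitting the failure of S2$^*$ into ``one item only'' versus ``two categories in two items,'' and reducing the failure of S3$^*$ to a four-equations-in-five-unknowns system on the $(\alpha_1,\alpha_2)$-fibers — is the same as the paper's, and your S3$^*$ sketch and S2$^*$ case (ii) are essentially the published argument (the paper handles general $K$ directly by the Kronecker/group-block structure rather than ``localizing to the fiber,'' and it resolves the crux you flag by freezing every item-2 parameter except $\beta_{2,l_2^*}^+$ so that the higher-category equations of item 2 collapse onto the category-$l_2^*$ equation; that verification is routine but does need to be written out).

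The genuine gap is your case (i) of S2$^*$. Your interior mixture construction $\bar D^{-}=(1-\eta)D^{-}+\eta D^{+}$ with the compensating reweighting only makes sense when the two conditional laws of $R_{j^*}$ (given $\alpha_1=0$ versus $\alpha_1=1$) are the same on every $(\alpha_2,\dots,\alpha_K)$-slice. That fails as soon as any category of item $j^*$ requires an attribute other than $\alpha_1$: e.g.\ with $\qq_{j^*,1}=\ee_1^\top$ and $\qq_{j^*,2}=(1,1,0,\dots)$ the ``$+$'' law on the slice $\alpha_2=1$ differs from the one on $\alpha_2=0$, so the mixture you need is slice-dependent, while $\bar\bbb^{-}_{j^*}$ is a single parameter vector shared across all slices. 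The ``multiplicative level structure'' you invoke to absorb this is not specified and it is exactly where the argument would have to do real work. The paper sidesteps the whole issue with a boundary construction: it sets $\beta_{j^*,1}^{-}=\bar\beta_{j^*,1}^{-}=0$, which by the convention \eqref{eq:beta zero} annihilates every higher category of $j^*$ for every profile with $\alpha_1=0$ regardless of what those categories require, leaving only the two constraints $1+\rho=u+v$ and $\rho\bar\beta_{j^*,1}^{+}=v\beta_{j^*,1}^{+}$ on three unknowns. Since the paper has already argued (in the discussion following Proposition~\ref{neSeq1}) that exhibiting non-identifiability at a boundary point of the parameter space suffices to refute strict identifiability, your insistence on an interior alternative is an unnecessary self-imposed burden that your construction does not actually discharge in general. (A minor point: under the standing assumption that S1 holds, the ``degenerate possibility'' that no category requires $\alpha_1$ cannot occur, since $\ee_1^\top$ is a row of $\QQ^1_{1:K}$.)
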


We can see that conditions S2 and S3 are stronger versions of S2$^*$ and S3$^*$, which means that any $\QQ$-matrix satisfying condition S2 (S3) will satisfy condition S2$^*$ (S3$^*$). We can also see that the two identifiable models in Example~\ref{counterex2} and Example~\ref{counterex3} that do not satisfy conditions S2 and S3 both satisfy condition S2$^*$ and condition S3$^*$. For instance, the $\QQ$-matrix in Example~\ref{counterex2}, does not satisfy condition S2 since there are only two items' first categories require $\alpha_1$ and only two items' first categories require $\alpha_2$.
However, it does satisfy condition S2*, since there are two other items' second categories require $\alpha_1$ and other two items' second categories require $\alpha_2$. Similarly, the $\QQ$-matrix in Example~\ref{counterex3}, not satisfying condition S3, does satisfy condition S3*, as the second category of the first item requires only $\alpha_2$ and the second category of the second item requires only $\alpha_1$.

In summary, from the above discussions, we conclude that the sufficient conditions S1-S3 are challenging to relax. Specifically, condition S1 can not be relaxed unless additional constraints are imposed. While conditions S2 and S3 are also difficult to relax, we found that other categories may assist in identifying the parameters.
%\todo{add some discussion that examples 8-9 satisfy S2* and S3*} 

In spite of the fact that the sufficient condition and the necessary condition   proposed in this section are different, filling the gap is not an easy task, as the model structure is more subtle and the interactions between parameters are more complex. For instance, the $\TT^s$-matrix structure is different from the $\TT$-matrix structure for the binary DINA model except for the first categories. The $\TT^s_{\rr}$-vectors for higher categories behave more similar to the $\TT_{\rr}$-vectors for G-DINA model \citep{de2011generalized}, as the uncertainty for these categories are characterized by more than two parameters. Therefore, to study the identifiability of the Sequential DINA model requires more techniques beyond the DINA setting.
%which are valid for more general models.
% And whether the necessary conditions S1, S2$^*$, and S3$^*$ we proposed are sufficient is an open question, 
% and we defer the exploration of this idea to future work.

 \section{ Data Examples}
\label{sec-app}
In this section, we demonstrate the application of our proposed results by examining two   educational assessment datasets: a PISA 2000 reading assessment dataset using the GPDINA model  \citep{GPDINA} and a TIMSS 2007 fourth-grade mathematics assessment dataset using the Sequential DINA model \citep{sequential}.

%The first dataset is analyzed using the GPDINA model, while the second is assessed through the Sequential DINA model. Our objective is to ascertain the identifiability of these models based on the established conditions.

\paragraph{Identifiability of the GPDINA model: a PISA 2000 data example.}

We consider a dataset from the PISA 2000 reading assessment, which was previously studied in \cite{GPDINA}. This assessment, released by the \citet*{oecd1999, oecd2006}, comprised both polytomous and binary items. 
 The dataset for this application comprises responses from 1,039 English examinees to 20 specific items from a designated test booklet. Out of these 20 items, five are polytomous.
%a subset of the released items was used to create the $\QQ$-matrix. 
Following \cite{GPDINA}, the attribute definitions for the PISA dataset are given in Table 1 and the
  $\QQ$-matrix for this application is presented in Table 2. Since in the GPDINA model, different categories within the same item share the same $\qq$-vectors, it suffices to provide one $\qq$-vector for each item.

\begin{table}[h]
\centering
\caption{Attribute Definitions for the PISA data \citep{GPDINA}}
\begin{tabular}{|c|p{10cm}|}
\hline
Symbol & Description \\
\hline
$c$ & Number of categories \\
\hline
$\alpha_1$ & Retrieving information \\
\hline
$\alpha_2$ & Forming a broad general understanding \\
\hline
$\alpha_3$ & Developing an interpretation \\
\hline
$\alpha_4$ & Reflecting on and evaluating the content of a text \\
\hline
$\alpha_5$ & Reflecting on and evaluating the form of a text \\
\hline
\end{tabular}
\end{table}

\begin{table}[h]
\centering
\caption{Items and $\QQ$-matrix for the PISA data \citep{GPDINA}}
\begin{tabular}{|c|c|c|c|c|c|c|c|c|c|c|c|c|c|c|c|}
\hline No. & Item Code & \( c \) & \( \alpha_1 \) & \( \alpha_2 \) & \( \alpha_3 \) & \( \alpha_4 \) & \( \alpha_5 \) & No. & Item Code & \( c \) & \( \alpha_1 \) & \( \alpha_2 \) & \( \alpha_3 \) & \( \alpha_4 \) & \( \alpha_5 \) \\
\hline 1 & R040Q02 & 2 & 1 & 0 & 1 & 0 & 0 & 11 & R088Q04T & 3 & 1 & 0 & 1 & 0 & 0 \\
\hline 2 & R040Q03A & 2 & 1 & 0 & 1 & 1 & 0 & 12 & R088Q05T & 2 & 0 & 1 & 1 & 1 & 0 \\
\hline 3 & R040Q04 & 2 & 0 & 1 & 1 & 1 & 0 & 13 & R088Q07 & 2 & 0 & 1 & 0 & 0 & 1 \\
\hline 4 & R040Q06 & 2 & 1 & 0 & 1 & 0 & 0 & 14 & R216Q01 & 2 & 0 & 1 & 0 & 0 & 0 \\
\hline 5 & R077Q03 & 3 & 0 & 1 & 0 & 1 & 1 & 15 & R216Q02 & 2 & 1 & 0 & 0 & 0 & 1 \\
\hline 6 & R077Q04 & 2 & 1 & 1 & 1 & 0 & 0 & 16 & R216Q03T & 2 & 0 & 1 & 1 & 0 & 0 \\
\hline 7 & R077Q05 & 3 & 0 & 1 & 1 & 1 & 0 & 17 & R216Q04 & 2 & 0 & 1 & 1 & 0 & 0 \\
\hline 8 & R077Q06 & 2 & 0 & 1 & 0 & 0 & 1 & 18 & R216Q06 & 2 & 0 & 1 & 0 & 1 & 0 \\
\hline 9 & R088Q01 & 2 & 0 & 1 & 1 & 0 & 0 & 19 & R236Q01 & 2 & 1 & 0 & 1 & 0 & 0 \\
\hline 10 & R088Q03 & 3 & 1 & 0 & 1 & 0 & 0 & 20 & R236Q02 & 3 & 0 & 0 & 1 & 1 & 0 \\
\hline
\end{tabular}
\footnotesize{}
\end{table}

According to our Theorem~\ref{thm:nesu1}, this $\QQ$-matrix does not contain an identity matrix, and thus the model parameters are not identifiable. 
Specifically, since the matrix does not contain $\ee_1^\top$, $\ee_3^\top$, $\ee_4^\top$ and $\ee_5^\top$, attribute profiles $\zero$, $\ee_1$, $\ee_3$, $\ee_4$ and $\ee_5$ have the same conditional response distributions. Therefore, the parameters $p_{\zero}$, $p_{\ee_1}$, $p_{\ee_3}$, $p_{\ee_4}$ and $p_{\ee_5}$  can not be identified.

\paragraph{Identifiability of the Sequential DINA model: a TIMSS 2007 data example.}

We consider the dataset in \cite{sequential}, which is derived from booklets 4 and 5 of the TIMSS 2007 fourth-grade mathematics assessment. This subset, originally utilized by \citet{Lee2011CognitiveDiagnostic}, includes responses from 823 students to 12 items, which are linked to eight of the original 15 attributes. Notably, items 3 and 9 are constructed-response items scored polytomously across three response categories (0, 1, and 2). The dataset also features items like 7a and 7b which, due to their heavy interdependence, can be treated as a single polytomous item. We consider the Sequential DINA model in this example. %\citep{sequential}. 
Following \citet{sequential}, the attribute definitions for the TIMSS data are given in Table 3 and the $\QQ$-matrix is in Table 4. The corresponding   $\QQ^1$-matrix is also presented below.
%, taken from \citet{sequential}, as well as its $\QQ^1$-matrix are exhibited in Table 4.

%\begin{center}
    \begin{table}[h]
\centering
\caption{Attribute definitions for TIMSS 2007 data \citep{sequential}}
\begin{tabular}{|c|p{12cm}|}
\hline
Attribute & Description \\
\hline
$\alpha_1$ & Representing, comparing, and ordering whole numbers as well as demonstrating knowledge of place value \\
\hline
$\alpha_2$ & Recognizing multiples, computing with whole numbers using the four operations, and estimating computations \\
\hline
$\alpha_3$ & Solving problems, including those set in real-life contexts 
%(e.g., measurement and money problems) 
\\
\hline
$\alpha_4$ & Finding the missing number or operation and modelling simple situations involving unknowns in number sentence or expression \\
\hline
$\alpha_5$ & Describing relationships in patterns and their extensions; generating pairs of whole numbers by a given rule and identifying a rule for every relationship given pairs of whole numbers \\
\hline
$\alpha_6$ & Reading data from tables, pictographs, bar graphs, and pie charts \\
\hline
$\alpha_7$ & Comparing and understanding how to use information from data \\
\hline
$\alpha_8$ & Understanding different representations and organizing data using tables, pictographs, and bar graphs \\
\hline
\end{tabular}
\end{table}
%\end{center}

%\begin{center}
    \begin{table}[h]
\centering
\caption{$\QQ$-matrix for TIMSS 2007 data \citep{sequential}}
\begin{tabular}{|c|c|c|c|c|c|c|c|c|c|c|}
\hline \multirow[b]{2}{*}{ Item } & \multirow[b]{2}{*}{ TIMSS item no. } & \multirow[b]{2}{*}{ Category } & \multicolumn{8}{|c|}{ Attributes } \\
\hline & & & $\alpha_1$ & $\alpha_2$ & $\alpha_3$ & $\alpha_4$ & $\alpha_5$ & $\alpha_6$ & $\alpha_7$ & $\alpha_8$ \\
\hline 1 & M041052 & 1 & 1 & 1 & 0 & 0 & 0 & 0 & 0 & 0 \\
\hline 2 & M041281 & 1 & 0 & 1 & 1 & 0 & 1 & 0 & 0 & 0 \\
\hline $3 \mathbf{a}$ & M041275 & 1 & 1 & 0 & 0 & 0 & 0 & 1 & 0 & 1 \\
\hline $3 \mathbf{b}$ & M041275 & 2 & 1 & 0 & 0 & 0 & 0 & 1 & 0 & 1 \\
\hline 4 & M031303 & 1 & 0 & 1 & 1 & 0 & 0 & 0 & 0 & 0 \\
\hline 5 & M031309 & 1 & 0 & 1 & 1 & 0 & 0 & 0 & 0 & 0 \\
\hline 6 & M031245 & 1 & 0 & 1 & 0 & 1 & 0 & 0 & 0 & 0 \\
\hline $7 \mathbf{a}$ & M031242A & 1 & 0 & 1 & 1 & 0 & 1 & 0 & 0 & 0 \\
\hline $7 \mathbf{b}$ & M031242B & 2 & 0 & 0 & 0 & 0 & 0 & 0 & 1 & 0 \\
\hline 8 & M031242C & 1 & 0 & 1 & 1 & 0 & 1 & 0 & 1 & 0 \\
\hline $9 \mathbf{a}$ & M031247 & 1 & 0 & 1 & 1 & 1 & 0 & 0 & 0 & 0 \\
\hline 9b & M031247 & 2 & 0 & 1 & 1 & 1 & 0 & 0 & 0 & 0 \\
\hline 10 & M031173 & 1 & 0 & 1 & 1 & 0 & 0 & 0 & 0 & 0 \\
\hline 11 & M031172 & 1 & 1 & 1 & 0 & 0 & 0 & 1 & 0 & 1 \\
\hline
\end{tabular}
%\footnotesize{ Polytomous items are shown in bold. This $\QQ$-matrix is adapted from \cite{sequential}.}
\end{table}
%\end{center}

\begin{equation*}
    \QQ^1 = 
\begin{pmatrix}
1 & 1 & 0 & 0 & 0 & 0 & 0 & 0 \\
0 & 1 & 1 & 0 & 1 & 0 & 0 & 0 \\
1 & 0 & 0 & 0 & 0 & 1 & 0 & 1 \\
0 & 1 & 1 & 0 & 0 & 0 & 0 & 0 \\
0 & 1 & 1 & 0 & 0 & 0 & 0 & 0 \\
0 & 1 & 0 & 1 & 0 & 0 & 0 & 0 \\
0 & 1 & 1 & 0 & 1 & 0 & 0 & 0 \\
0 & 1 & 1 & 0 & 1 & 0 & 1 & 0 \\
0 & 1 & 1 & 1 & 0 & 0 & 0 & 0 \\
0 & 1 & 1 & 0 & 0 & 0 & 0 & 0 \\
1 & 1 & 0 & 0 & 0 & 1 & 0 & 1 \\
\end{pmatrix}
\end{equation*}

% \begin{equation*}
%     \begin{pmatrix}
% 1 & 0 & 1 & 0 & 0 & 0 & 0 & 0 & 0 & 0 & 1 \\
% 1 & 1 & 0 & 1 & 1 & 1 & 1 & 1 & 1 & 1 & 1 \\
% 0 & 1 & 0 & 1 & 1 & 0 & 1 & 1 & 1 & 1 & 0 \\
% 0 & 0 & 0 & 0 & 0 & 1 & 0 & 0 & 1 & 0 & 0 \\
% 0 & 1 & 0 & 0 & 0 & 0 & 1 & 1 & 0 & 0 & 0 \\
% 0 & 0 & 1 & 0 & 0 & 0 & 0 & 0 & 0 & 0 & 1 \\
% 0 & 0 & 0 & 0 & 0 & 0 & 0 & 1 & 0 & 0 & 0 \\
% 0 & 0 & 1 & 0 & 0 & 0 & 0 & 0 & 0 & 0 & 1 \\
% \end{pmatrix}
% \end{equation*}

According to   Proposition~\ref{neSeq1}, since the $\QQ^1$-matrix does not contain an identity matrix, the model parameters are not identifiable. 
Specifically, since the matrix does not contain any $\ee_j$ for $j = 1, 2, \ldots, 8$, if we take $\beta_{j,1}^- = 0$ for $j = 1, 2, \ldots 20$, 
then subjects with attribute profiles $\zero$ and $\ee_j$ for $j = 1, 2, \ldots, 8$ are not able to complete the first categories of all the items. Since $\beta_{j,1}^- \equiv 0$, according to the model construction in Section~\ref{sec:seq}, these attribute profiles cannot complete other categories either.
Therefore, attribute profiles $\zero$, $\ee_j$ for $j = 1, 2, \ldots, 8$  have the same probability of completing all the categories of all the items, which is zero. Therefore, the parameters $p_{\zero}$, $p_{\ee_j}$ for $j = 1, 2, \ldots, 8$   can not be identified.

%\medskip

%In this section, we have demonstrated the application of our theoretical conditions on identifiability through two educational assessment datasets using the two considered  polytomous DINA models. 
\begin{remark}
For the above educational assessment examples, while the analysis shows nonidentifiability issues for the  two considered models, this should not overshadow the potential for analyzing these data using polytomous DINA or more general cognitive diagnosis models.
%identifiability in more generalized frameworks, as this limitation is specific to the stringency of our current assumptions. 
First, as discussed in Section~\ref{sec:seq-id}, although the two models in our application data fail to satisfy the completeness condition, if we consider the more relaxed generic identifiability of the model parameters, that is, allowing nonidentifiability of parameters in a negligible zero-measure set of the parameter space, the stringent completeness condition may not be necessary, as discussed in \cite{partial}. %Such a notion of generic identifiability may be more practical and the parameter estimation and inference is still meaningful.
Second, the investigation of partial identifiability, as proposed by \citet{partial}, could also be extended to the current situation.  Specifically, when the completeness condition is violated, 
partial identifiability may be established to partially identify the nonidentifiable 
proportion parameters $\pp$ up to their equivalent classes. 
%, while the item parameters still need to be strictly identifiable. 
%By partially identifying the proportion parameters $\pp$, it means identifying the equivalent classes after grouping attribute profiles that have the same conditional distributions as an equivalent class. 
For example, in the first example, since attribute profiles $\zero$, $\ee_1$, $\ee_3$, $\ee_4$ and $\ee_5$ have the same conditional response distributions, they can be grouped and considered as an equivalent latent class. Partial identifiability then seeks to identify parameter $(p_{\zero} + p_{\ee_1} + p_{\ee_3} + p_{\ee_4} + p_{\ee_5})$ as a whole, instead of treating each proportion parameter separately. Under such relaxation, the models applied to the data examples may be partially identifiable. 
%Moreover, such consideration would generally suit for the needs of cognitive diagnosis applications, by ensuring the identification of the equivalent attribute profiles of interest, and also ensuring the estimability of all item parameters. 
Finally, beyond the DINA models considered in this paper,   general cognitive diagnosis models \citep{GPDINA,sequential} may be more appropriate for the two datasets, and studying the identifiability \citep*{partial}  of these models could be also of great interest.
%These approaches recognize the complex nature of educational assessment data and could pave the way for applying cognitive diagnostic models in a manner that is both intricate and educationally meaningful.
Further explorations of these interesting extensions are promising future research directions.
\end{remark}

\section{Discussion}\label{sec-disc}
This paper presents the sufficient and necessary conditions for the identifiability of CDMs with polytomous responses. 
Our results focus on two popular models under the DINA assumption: the GPDINA model and the Sequential DINA model. 
For both models, we provide the sufficient and necessary conditions for their identifiability. 
The results can be easily extended to the DINO (deterministic input; noisy ``or" gate) model 
\citep{Templin} 
through the duality between the DINA and DINO models.
While the minimum requirements for more general CDMs are still unknown, our proposed necessary conditions remain necessary for them since our polytomous DINA models are submodels of the general CDMs. Therefore our
results would also shed light on the study of their identifiability.

%For the Sequential DINA model, there remains a challenging technical gap between the sufficient condition and necessary condition. However, this task is non-trivial, as the structure of the $\TT^s_{\rr}$-matrix involves polytomous elements each row. What is more intricate is that, since we allow each item to have arbitrary number of categories, it is unclear whether the sufficient and necessary condition for the identifiability of Sequential DINA model depends on the number of categories of the items.

The popularity of polytomous data is not restricted to response data, and polytomous attributes data is also receiving more and more attention \citep*{Haberman, davier2008general, pGDINA, qiu19}. 
Yet the discussion on the identifiability of such models has sparingly been considered.
More interestingly, we may further study the identifiability results under the general CDM framework with polytomous responses and polytomous attributes. 

The $\QQ$-matrix in this paper is assumed to be correctly specified. In practice, the $\QQ$-matrix
is usually constructed by the designers, which can be subjective and may not be accurate. For this reason, researchers have proposed to estimate and validate the design $\QQ$-matrix based on the response data, which motivates the study of the  identifiability of the $\QQ$-matrix
%, and several works have been progressing in this issue 
\citep[e.g.,][]{JLGXZY2011, chen2015statistical, xushang,Culpepper2019,chen2020sparse,gu21}. Nevertheless, most of these existing works focus on dichotomous responses, and only few have explored the identifiability of $\QQ$-matrix in the polytomous data setting, which would also be an interesting future research topic. 
 
%\section*{Acknowledgements}
%This research is partially supported by National Science Foundation SES-1846747 and SES-2150601. We are grateful to Dr. Jimmy de la Torre and Dr. Wenchao Ma for their helpful feedback.

\bibliographystyle{apalike}
\bibliography{ref.bib}

\begin{thebibliography}{}

\bibitem[Allman et~al., 2009]{allman2009}
Allman, E.~S., Matias, C., and Rhodes, J.~A. (2009).
\newblock Identifiability of parameters in latent structure models with many
  observed variables.
\newblock {\em The Annals of Statistics}, 37:3099--3132.

\bibitem[Chen and de~la Torre, 2018]{GPDINA}
Chen, J. and de~la Torre, J. (2018).
\newblock Introducing the general polytomous diagnosis modeling framework.
\newblock {\em Frontiers in Psychology}, 9:1474.

\bibitem[Chen and Torre, 2013]{pGDINA}
Chen, J. and Torre, J. (2013).
\newblock A general cognitive diagnosis model for expert-defined polytomous
  attributes.
\newblock {\em Applied Psychological Measurement}, 37:419--437.

\bibitem[Chen et~al., 2020]{chen2020sparse}
Chen, Y., Culpepper, S., and Liang, F. (2020).
\newblock A sparse latent class model for cognitive diagnosis.
\newblock {\em Psychometrika}, 85:121--153.

\bibitem[Chen et~al., 2015]{chen2015statistical}
Chen, Y., Liu, J., Xu, G., and Ying, Z. (2015).
\newblock Statistical analysis of {$Q$}-matrix based diagnostic classification
  models.
\newblock {\em Journal of the American Statistical Association},
  110(510):850--866.

\bibitem[Chiu et~al., 2009]{chiu2009}
Chiu, C.-Y., Douglas, J.~A., and Li, X. (2009).
\newblock Cluster analysis for cognitive diagnosis: theory and applications.
\newblock {\em Psychometrika}, 74:633--665.

\bibitem[Culpepper, 2019]{Culpepper2019}
Culpepper, S.~A. (2019).
\newblock An exploratory diagnostic model for ordinal responses with binary
  attributes: Identifiability and estimation.
\newblock {\em Psychometrika}, 84(4):921--940.

\bibitem[Culpepper, 2022]{culpepper2022note}
Culpepper, S.~A. (2022).
\newblock A note on weaker conditions for identifying restricted latent class
  models for binary responses.
\newblock {\em Psychometrika}, pages 1--17.

\bibitem[Culpepper and Balamuta, 2021]{culpepper2021inferring}
Culpepper, S.~A. and Balamuta, J.~J. (2021).
\newblock Inferring latent structure in polytomous data with a higher-order
  diagnostic model.
\newblock {\em Multivariate Behavioral Research}, pages 1--19.

\bibitem[de~la Torre, 2011]{de2011generalized}
de~la Torre, J. (2011).
\newblock The generalized {DINA} model framework.
\newblock {\em Psychometrika}, 76(2):179--199.

\bibitem[de~la Torre et~al., 2022]{qiu19}
de~la Torre, J., Qiu, X.-L.~S., and Carl, K. (2022).
\newblock An empirical {Q}-matrix validation method for the polytomous {G-DINA}
  model.
\newblock {\em Psychometrika}, 87(2):693--724.

\bibitem[de~la Torre et~al., 2018]{dela2018}
de~la Torre, J., van~der Ark, L.~A., and Rossi, G. (2018).
\newblock Analysis of clinical data from a cognitive diagnosis modeling
  framework.
\newblock {\em Measurement and Evaluation in Counseling and Development},
  51(4):281--296.

\bibitem[DeCarlo, 2011]{deCarlo2011}
DeCarlo, L.~T. (2011).
\newblock On the analysis of fraction subtraction data: the {DINA} model,
  classification, class sizes, and the {Q}-matrix.
\newblock {\em Applied Psychological Measurement}, 35:8--26.

\bibitem[DiBello et~al., 1995]{DiBello}
DiBello, L.~V., Stout, W.~F., and Roussos, L.~A. (1995).
\newblock {\em Unified cognitive psychometric diagnostic assessment
  likelihood-based classification techniques}.
\newblock Erlbaum Associates, Hillsdale, NJ.

\bibitem[Fang et~al., 2019]{fang2019}
Fang, G., Liu, J., and Ying, Z. (2019).
\newblock On the identifiability of diagnostic classification models.
\newblock {\em Psychometrika}, 84(1):19--40.

\bibitem[Gu and Xu, 2019a]{slam}
Gu, Y. and Xu, G. (2019a).
\newblock Learning attribute patterns in high-dimensional structured latent
  attribute models.
\newblock {\em Journal of Machine Learning Research}, 20(115):1--58.

\bibitem[Gu and Xu, 2019b]{id-dina}
Gu, Y. and Xu, G. (2019b).
\newblock The sufficient and necessary condition for the identifiability and
  estimability of the {DINA} model.
\newblock {\em Psychometrika}, 84(2):468--483.

\bibitem[Gu and Xu, 2020]{partial}
Gu, Y. and Xu, G. (2020).
\newblock Partial identifiability of restricted latent class models.
\newblock {\em Annals of Statistics}, 48(4):2082--2107.

\bibitem[Gu and Xu, 2021]{gu21}
Gu, Y. and Xu, G. (2021).
\newblock Sufficient and necessary conditions for the identifiability of the
  {Q}-matrix.
\newblock {\em Statistica Sinica}, 31:449--472.

\bibitem[Haberman et~al., 2008]{Haberman}
Haberman, S.~J., von Davier, M., and Lee, Y.-H. (2008).
\newblock Comparison of multidimensional item response models: Multivariate
  normal ability distributions versus multivariate polytomous ability
  distributions.
\newblock {\em ETS Research Report Series}, 2008(2).

\bibitem[Henson et~al., 2009]{henson2009defining}
Henson, R.~A., Templin, J.~L., and Willse, J.~T. (2009).
\newblock Defining a family of cognitive diagnosis models using log-linear
  models with latent variables.
\newblock {\em Psychometrika}, 74(2):191.

\bibitem[Junker and Sijtsma, 2001]{Junker}
Junker, B.~W. and Sijtsma, K. (2001).
\newblock Cognitive assessment models with few assumptions, and connections
  with nonparametric item response theory.
\newblock {\em Applied Psychological Measurement}, 25:258--272.

\bibitem[Lee et~al., 2011]{Lee2011CognitiveDiagnostic}
Lee, Y.-S., Park, Y.~S., and Taylan, D. (2011).
\newblock A cognitive diagnostic modeling of attribute mastery in
  massachusetts, minnesota, and the u.s. national sample using the {TIMSS}
  2007.
\newblock {\em International Journal of Testing}, 11(2):144--177.

\bibitem[Liu et~al., 2013]{JLGXZY2011}
Liu, J., Xu, G., and Ying, Z. (2013).
\newblock Theory of self-learning {$Q$}-matrix.
\newblock {\em Bernoulli}, 19(5A):1790--1817.

\bibitem[Ma and de~la Torre, 2016]{sequential}
Ma, W. and de~la Torre, J. (2016).
\newblock A sequential cognitive diagnosis model for polytomous responses.
\newblock {\em British Journal of Mathematical and Statistical Psychology},
  69(3):253--275.

\bibitem[Maris and Bechger, 2009]{MarisBechger}
Maris, G. and Bechger, T.~M. (2009).
\newblock Equivalent diagnostic classification models.
\newblock {\em Measurement}, 7:41--46.

\bibitem[O'Brien et~al., 2019]{o2019causes}
O'Brien, K.~L., Baggett, H.~C., Brooks, W.~A., Feikin, D.~R., Hammitt, L.~L.,
  Higdon, M.~M., et~al. (2019).
\newblock Causes of severe pneumonia requiring hospital admission in children
  without {HIV} infection from {Africa and Asia}: the {PERCH} multi-country
  case-control study.
\newblock {\em The Lancet}, 394:757--779.

\bibitem[OECD, 1999]{oecd1999}
OECD (1999).
\newblock {\em Measuring Student Knowledge and Skills: A New Framework for
  Assessment. Paris: Organisation for Economic Co-operation and Development.}

\bibitem[OECD, 2006]{oecd2006}
OECD (2006).
\newblock {\em Assessing Scientific, Reading and Mathematical Literacy: A
  Framework for PISA 2006. Paris: Organisation for Economic Co-operation and
  Development.}

\bibitem[Rupp et~al., 2010]{rupp2010diagnostic}
Rupp, A.~A., Templin, J., and Henson, R.~A. (2010).
\newblock {\em Diagnostic measurement: Theory, methods, and applications}.
\newblock Guilford Press.

\bibitem[Tatsuoka, 2009]{TatsuokaC09}
Tatsuoka, C. (2009).
\newblock Diagnostic models as partially ordered sets.
\newblock {\em Measurement}, 7:49--53.

\bibitem[{Tatsuoka}, 1983]{Tatsuoka1983}
{Tatsuoka}, K.~K. (1983).
\newblock Rule space: an approach for dealing with misconceptions based on item
  response theory.
\newblock {\em Journal of Educational Measurement}, 20:345--354.

\bibitem[Templin and Henson, 2006]{Templin}
Templin, J.~L. and Henson, R.~A. (2006).
\newblock Measurement of psychological disorders using cognitive diagnosis
  models.
\newblock {\em Psychological Methods}, 11:287--305.

\bibitem[von Davier, 2008]{davier2008general}
von Davier, M. (2008).
\newblock A general diagnostic model applied to language testing data.
\newblock {\em British Journal of Mathematical and Statistical Psychology},
  61:287--307.

\bibitem[von Davier, 2014]{davier2014dina}
von Davier, M. (2014).
\newblock The {DINA} model as a constrained general diagnostic model: {T}wo
  variants of a model equivalency.
\newblock {\em British Journal of Mathematical and Statistical Psychology},
  67(1):49--71.

\bibitem[Wang et~al., 2018]{wang2018tracking}
Wang, S., Yang, Y., Culpepper, S.~A., and Douglas, J.~A. (2018).
\newblock Tracking skill acquisition with cognitive diagnosis models: a
  higher-order, hidden {M}arkov model with covariates.
\newblock {\em Journal of Educational and Behavioral Statistics}, 43(1):57--87.

\bibitem[Wu et~al., 2017]{wu2017nested}
Wu, Z., Deloria-Knoll, M., and Zeger, S.~L. (2017).
\newblock Nested partially latent class models for dependent binary data;
  estimating disease etiology.
\newblock {\em Biostatistics}, 18(2):200--213.

\bibitem[Xu, 2017]{xu2017}
Xu, G. (2017).
\newblock Identifiability of restricted latent class models with binary
  responses.
\newblock {\em The Annals of Statistics}, 45:675--707.

\bibitem[Xu and Shang, 2018]{xushang}
Xu, G. and Shang, Z. (2018).
\newblock Identifying latent structures in restricted latent class models.
\newblock {\em Journal of the American Statistical Association},
  113(523):1284--1295.

\bibitem[Xu and Zhang, 2016]{xu2016}
Xu, G. and Zhang, S. (2016).
\newblock Identifiability of diagnostic classification models.
\newblock {\em Psychometrika}, 81:625--649.

\end{thebibliography}

\newpage
\appendix

\section{Supplementary Material}\label{sec:proof}
This supplementary material provides the proofs of the main theoretical results and conclusions in Examples \ref{counterex1}-\ref{counterex3}. Specifically, Section~\ref{app:prop} provides the proofs of Propositions 1 and 2, while the identifiability results for the GPDINA model and the Sequential DINA model are presented in Sections~\ref{app:GPDINA} and~\ref{app:seq}, respectively. Additionally, the proofs of the examples are provided in Section~\ref{app:ex}.
\subsection{Proofs of Propositions 1 and 2}\label{app:prop}
This section deals with the zero $\qq$-vectors ($\qq = \zero$) in $\QQ$-matrix. Our propositions show that, for both the GPDINA model and the Sequential DINA model, excluding items or categories whose corresponding $\qq$-vectors are all zero does not affect the identifiability results. 
\propGPDINA*

\begin{proof}
    According to Lemma~\ref{lem:t-matrix},
    it suffices to show that the GPDINA models with $\QQ$-matrix and $\QQ_{\Delta}$-matrix yield the same equation system $\TT\pp = \bar \TT \bar \pp$. Let $\TT$ and $\TT'$ denote the $\TT$-matrix under the $\QQ$-matrix and $\QQ_{-\Delta}$-matrix separately, and let $\rr_{\Delta}$ denote the $\Delta$-coordinates of $\rr$. Then $\TT'$ is a submatrix of $\TT$ which excludes vectors $\TT_{\rr}$ in $\TT$ with $\rr_{\Delta} \neq \zero$. i.e., $\TT = \TT' \cup \{\TT_{\rr}: \rr_\Delta \neq \zero\}$. We now show that $\{\TT_{\rr}: \rr_\Delta \neq \zero\}$ does not add additional constraints to the equation system $\TT'\pp = \bar\TT'\bar\pp$. For $j \in \Delta$ and $l \in [H_j]$, recall that $\one^\top = (1, 1,\ldots, 1)$, 
    since $\qq_j = \zero$ and $\one^\top \pp = \one^\top \bar \pp = 1$, we have
     \[\TT_{l\ee_j}\pp = \theta_{j,l}^+\one^\top \pp = \theta_{j,l}^+, \;\; \bar \TT_{l\ee_j} \bar \pp = \bar \theta_{j,l}^+\one^\top \bar \pp = \bar \theta_{j,l}^+. \;\;
    \]
    So $\TT_{l\ee_j}\pp = \bar \TT_{l\ee_j} \bar \pp$ gives $\theta_{j,l}^+ = \bar \theta_{j,l}^+$ for $l \in [H_j]$. Therefore, for $j \in \Delta$, parameters $\theta_{j,l}^+$ are all identifiable. Furthermore, 
    for any $\rr$ s.t. $\rr_{\Delta} \neq 0$, write $\rr = \sum_{j \in \Delta} r_j \ee_j + \left(\rr - \sum_{j \in \Delta} r_j \ee_j\right)$, then $\TT_{\left(\rr - \sum_{j \in \Delta} r_j \ee_j\right)}$ is a vector in $\TT'$-matrix,
    and
    \[\TT_{\rr} = \left(\circ_{j \in \Delta} \TT_{r_j \ee_j}\right) \circ \TT_{\left(\rr - \sum_{j \in \Delta} r_j \ee_j\right)} = \prod_{j \in \Delta}\theta_{j, r_j}^+ \TT_{\left(\rr - \sum_{j \in \Delta} r_j \ee_j\right)} = \prod_{j \in \Delta}\bar \theta_{j, r_j}^+ \TT_{\left(\rr - \sum_{j \in \Delta} r_j \ee_j\right)}.\]
    Similarly, we have  $\bar \TT_{\rr} = \prod_{j \in \Delta} \bar \theta_{j, r_j}^+ \bar \TT_{\left(\rr - \sum_{j \in \Delta} r_j \ee_j\right)}$.
    Therefore, $\TT_{\rr}\pp = \bar \TT_{\rr} \bar \pp$ is equivalent to  $\TT_{\left(\rr - \sum_{j \in \Delta} r_j \ee_j\right)} \pp = \bar \TT_{\left(\rr - \sum_{j \in \Delta} r_j \ee_j\right)} \bar \pp$.
   Therefore, the model with $\QQ$-matrix and $\QQ_{\Delta}$-matrix give the same equation system $\TT\pp = \bar \TT \bar \pp$. 
\end{proof}

\propseq*
\begin{proof}
    Similar to the previous analysis, we can show that $\beta_{j,l}^+ = \bar \beta_{j,l}^+$ for $(j,l)\in \Delta^s$. If $l = 1$, since $\qq_{j,l} = \zero$ and $\one^\top \pp  = \one^\top \bar\pp = 1$, we have
    \[\TT_{\ee_j}\pp = \beta_{j,1}^+\one^\top \pp = \beta_{j,1}^+, \;\; \bar \TT_{\ee_j} \bar \pp = \bar \beta_{j,1}^+\one^\top \bar\pp = \bar \beta_{j,1}^+. \;\;
    \]
    Thus, $\TT_{\ee_j}\pp = \bar \TT_{\ee_j} \bar \pp$ gives $\beta_{j,1}^+ = \bar \beta_{j,1}^+$.
    For $l > 1$, since $\qq_{j,l} = \zero$,  $\TT_{l\ee_j} = \beta_{j,l}^+\TT_{(l-1)\ee_j}  $, and since $\TT_{(l-1)\ee_j} \pp = \bar \TT_{(l-1)\ee_j} \bar \pp$, we have 
    \[
    \TT_{l\ee_j}\pp = \beta_{j,l}^+\TT_{(l-1)\ee_j} \pp = \beta_{j,l}^+ \bar \TT_{(l-1)\ee_j} \bar \pp = 
    \bar \beta_{j,l}^+ \bar \TT_{(l-1)\ee_j} \bar \pp = \bar\TT_{l\ee_j}\bar\pp.
    \]
    Thus, $\beta_{j,l}^+ = \bar \beta_{j,l}^+$. So the item parameters in $\Delta^s$ are all identifiable.
    Furthermore, 
    write 
    \[
    \rr = \sum_{(j,l) \in \Delta^s} l \ee_j + \left(\rr - \sum_{(j,l) \in \Delta^s} l \ee_j\right), 
    \]
    then $\TT_{\left(\rr - \sum_{(j,l) \in \Delta^s} l \ee_j\right)} \pp = \bar \TT_{\left(\rr - \sum_{(j,l) \in \Delta^s} l \ee_j\right)} \bar \pp $, 
    and
    \[\TT_{\rr} = \left(\circ_{(j,l) \in \Delta^s} \TT_{l \ee_j}\right) \circ \TT_{\left(\rr - \sum_{(j,l) \in \Delta^s} l \ee_j\right)} = \prod_{(j,l) \in \Delta^s}\beta_{j, l}^+ \TT_{\left(\rr - \sum_{(j,l) \in \Delta^s} l \ee_j\right)} = \prod_{(j,l) \in \Delta^s}\bar \beta_{j, l}^+ \TT_{\left(\rr - \sum_{(j,l) \in \Delta^s} l \ee_j\right)},\]
    similarly, we have $\bar \TT_{\rr} = \prod_{j \in \Delta^s} \bar \theta_{j, r_j}^+ \bar \TT_{\left(\rr - \sum_{(j,l) \in \Delta^s} l \ee_j\right)}$.
    Therefore, $\TT_{\rr}\pp = \bar \TT_{\rr} \bar \pp$ gives 
    \[
    \TT_{\left(\rr - \sum_{(j,l) \in \Delta^s} l \ee_j\right)} \pp = \bar \TT_{\left(\rr - \sum_{(j,l) \in \Delta^s} l \ee_j\right)} \bar \pp.
    \]
   Therefore, the model with $\QQ$-matrix and $\QQ_{\Delta^s}$-matrix give exactly the same equation system $\TT\pp = \bar \TT \bar \pp$. 
\end{proof}

\subsection{Identifiability of GPDINA}\label{app:GPDINA}
\nesuGPDINA*
\begin{proof}[Proof of sufficiency]
Suppose the $\QQ$-matrix satisfies conditions C1-C3. Using Lemma~\ref{lem:t-matrix}, we show that
$\TT \pp = \bar{\TT}\bar{\pp}$ will give $ (\ttt^+,\ttt^-,\pp) = (\bar{\ttt}^+,\bar{\ttt}^-,\bar{\pp})$. Take one arbitrary non-zero category from each item, denoted by $c(j)$ ($1 \leq c(j) \leq H_j$), for $j \in [J]$. 
We show that $\theta_{j,c(j)}^+ = \bar{\theta}_{j,c(j)}^+$ and $\theta_{j,c(j)}^- = \bar{\theta}_{j,c(j)}^-$.

Our proof leverages the identifiability results of the binary DINA model. Consider constructing the following binary DINA model:
if we focus on solely the category $c(j)$ of each item $j$, for $j \in [J]$. If we dichotomize each item $j$ through category $c(j)$,
and reframe the response as binary response $I(R_j = c(j))_{j \in [J]}$, then the model is reduced to a binary DINA model. According to the model construction, the $\QQ$-matrix for the reduced model is equivalent to the $\QQ$-matrix for the original polytomous model, since every non-zero category of the same item requires the same attributes. The $\TT$-matrix for this reduced model, i.e., the marginal probability distribution for the dichotomized response $I(R_j = c(j))$, is simply a submatrix of the original $\TT$-matrix.  It is made up of the vectors that only involve category $c(j)$ of item $j$, i.e., $\left( \TT_{c(j)\cdot\ee_{j}}\right)_{j \in [J]}$ and their element-wise products.
The parameters for this reduced model are $\left(\left\{\theta_{j,c(j)}^+, \;\theta_{j,c(j)}^-\right\}_{j \in [J]}, \;\pp\right)$.
So the reduced model is completely a binary DINA model. Since the $\QQ$-matrix for the reduced model satisfies conditions C1-C3, as a direct result of \citet{id-dina}, $\bar{\TT}\bar{\pp} = \TT\pp$ will give $\bar{\pp} = \pp$, $\theta_{j,c(j)}^+ = \bar{\theta}_{j,c(j)}^+,\; \theta_{j,c(j)}^- = \bar{\theta}_{j,c(j)}^-, \;\text{for}\;j \in [J]$. This holds for any $c(j) \in [H_j]$, thus $(\bar{\ttt}^+,\bar{\ttt}^-,\bar{\pp}) =  (\ttt^+,\ttt^-,\pp)$ and we complete the proof.

\end{proof}

\begin{proof}[Proof of necessity]
We prove separately each condition is necessary.

\paragraph{Necessity of condition C1.} Suppose the $\QQ$-matrix is not complete, and WLOG, assume that $\ee_1^\top \notin (\qq_j)_{j=1}^J$. Then attributes profiles $\zero$ and $\ee_1$  have the same conditional response distributions. Therefore, the parameters $p_\zero$ and $p_{\ee_1}$ are exchangeable, and thus can not be identified. Therefore, condition C1 is necessary.
\paragraph{Necessity of condition C2.} Suppose the $\QQ$-matrix satisfies condition C1, but does not satisfy condition C2, i.e., there exists some attribute which is only required by at most two items. WLOG, assume this is the first attribute,
and it is the first and second items that require the first attribute, so the
$\QQ$-matrix can be written as follows:
\begin{equation}\label{eq:Q1}
    \QQ = 
    \begin{pmatrix}
    \text{item}\;1\;&
    1 \; & \zero^\top\\
    \hdashline[2pt/2pt]
    \text{item}\;2\;&
    1 \; & \vv^\top\\
    \hdashline[2pt/2pt]
    \text{item}\;(3:J)\;&
    \zero \; & \QQ'
    \end{pmatrix}.
\end{equation}
We partition $\aaa$ into two groups according to the first attribute:
\begin{align*}
    \bg^0 = \{\aaa: \alpha_1 = 0\} = \{\aaa = (0, \aaa^*),\;\aaa^* \in \{0, 1\}^{K-1}\},\\
    \bg^1 = \{\aaa: \alpha_1 = 1\} = \{\aaa = (1, \aaa^*), \;\aaa^* \in \{0, 1\}^{K-1}\},
\end{align*} 
so each group has $2^{K-1}$ attribute profiles, and we index the entries in each group by
\begin{align*}
    \bg^0_1 = (0, \zero), \; \bg^0_2 = (0, \ee_1), \ldots, \bg^0_K = (0, \ee_{K-1}), \; \bg^0_{K+1} = (0, \ee_1+\ee_2), \ldots, \bg^0_{2^{K-1}} = \left(0, \sum_{k=1}^{K-1} \ee_k\right), \\
    \bg^1_1 = (1, \zero), \; \bg^1_2 = (1, \ee_1), \ldots, \bg^1_K = (1, \ee_{K-1}), \; \bg^1_{K+1} = (1, \ee_1+\ee_2), \ldots, \bg^1_{2^{K-1}} = \left(1, \sum_{k=1}^{K-1} \ee_k\right),
\end{align*}
where $\ee_1, \ldots, \ee_{K-1} \in \{0,1\}^{K-1}$ have $K-1$ elements.
Therefore, the $k$-th ($k \in [2^{K-1}]$) entry of $\bg^0$ and $\bg^1$,
$\bg^0_k$ and $\bg^1_k$ share the same attributes except for the first one $\alpha_1$. Index the population proportion parameters $\pp$ in the following way:
\begin{equation}\label{eq:p}
  \pp = \begin{pmatrix}
\pp_{\bg^0}\\
\pp_{\bg^1}\\
\end{pmatrix},\;\;\;
\text{where}\;\;
\pp_{\bg^0} = \begin{pmatrix}
p_{\bg^0_1}\\
p_{\bg^0_2}\\
\vdots\\
p_{\bg^0_{2^{K-1}}}\\
\end{pmatrix}\;\;\;\text{and}\;\;
\pp_{\bg^1} = \begin{pmatrix}
p_{\bg^1_1}\\
p_{\bg^1_2}\\
\vdots\\
p_{\bg^1_{2^{K-1}}}
\end{pmatrix} .
\end{equation}
We now seek to construct $(\bar{\ttt}^+,\bar{\ttt}^-,\bar{\pp})\neq (\ttt^+,\ttt^-,\pp)$ such that (\ref{lem:t-matrix-eq}) holds: take $\bar{\ttt}_j^+ = \ttt_j^+$ for $j > 2 $ and $\bar{\ttt}_j^- = \ttt_j^-$  for $j \in [J] $. We claim that, with a simplified matrix $\mct$, for $\bar{\TT}\bar{\pp} = \TT \pp$ to hold, it suffices to have 
$\bar{\mct} \bar{\pp} = \mct \pp$. When the entries in $\pp$ are indexed according to (\ref{eq:p}), the $\mct$ is given as follows:
\begin{equation}\label{eq:mct1}
\mct = \begin{pmatrix}
1 & 1\\
\ttt_1^- & \ttt_1^+\\
\ttt_2^- & \ttt_2^+\\
\ttt_1^- \otimes \ttt_2^- & \ttt_1^+ \otimes \ttt_2^+
\end{pmatrix} \otimes \cI ,
\end{equation}
where $\cI = \cI_{2^{K-1}}$. We now prove the claim.
\begin{itemize}
\item For any $\rr$ s.t. $r_1 = r_2 = 0$, 
$\TT_{\rr}$ does not involve $\ttt_1^+$, $\ttt_1^-$, $\ttt_2^+$, and $\ttt_2^-$. According to the construction of the $\QQ$-matrix in (\ref{eq:Q1}),
the response $\rr$ does not require $\alpha_1$, so the response distributions in both groups are the same, i.e., $t_{\rr,\, \bg^0_k} \equiv t_{\rr, \,\bg^1_k} \; \text{for}\; k \in [2^{K-1}]$. Since $\bar{\ttt}_j^+ = \ttt_j^+$ and $\bar{\ttt}_j^- = \ttt_j^-$  for $j > 2$, we further have
\begin{equation}\label{eq:t1}
    t_{\rr,\, \bg^0_k} \equiv t_{\rr, \,\bg^1_k} \equiv \bar{t}_{\rr,\, \bg^0_k} \equiv \bar{t}_{\rr, \,\bg^1_k} \; \text{for}\; k \in [2^{K-1}].
\end{equation}
If we denote $\TT_{\rr}^{(\frac{1}{2})}$ as the first half of the vector $\TT_{\rr}$, 
i.e., the response of group $\bg^0$,
then equation (\ref{eq:t1}) indicates that 
$\TT_{\rr}^{(\frac{1}{2})}$ is also the response of group $\bg^1$. Moreover, 
$\bar{\TT}_{\rr}^{(\frac{1}{2})} = \TT_{\rr}^{(\frac{1}{2})}$ and
\begin{equation*}
    \TT_{\rr} = \left(\,\TT_{\rr}^{(\frac{1}{2})}\;\; \TT_{\rr}^{(\frac{1}{2})}\,\right) = \TT_{\rr}^{(\frac{1}{2})} \left(\,\cI\;\;\cI\,\right) 
= \bar{\TT}_{\rr}^{(\frac{1}{2})}  \left(\,\cI\;\;\cI\,\right) = \bar{\TT}_{\rr}.
\end{equation*}
Therefore, equations $\Big(\cI\;\;\cI\Big)\bar{\pp} = \Big(\cI\;\;\cI\Big){\pp}$ 
in (\ref{eq:mct1}) gives
\begin{equation*}
    \bar{\TT}_{\rr}\bar{\pp} = \bar{\TT}_{\rr}^{(\frac{1}{2})}  \left(\,\cI\;\;\cI\,\right) \bar{\pp} = \TT_{\rr}^{(\frac{1}{2})} \Big(\cI\;\;\cI\Big){\pp} = \TT_{\rr}\pp.
\end{equation*}
\item For any $\rr$ s.t. $r_1 \neq 0,  r_2 = 0$, write $\rr = r_1\ee_1 + (\rr - r_1\ee_1)$,
then $\TT_{\rr} = \TT_{(\rr - r_1\ee_1)} \circ \TT_{r_1\ee_1}$, and
the response $(\rr - r_1\ee_1)$ belongs to the case we analyzed previously. Therefore, $\bar{\TT}_{(\rr - r_1\ee_1)}^{(\frac{1}{2})} = {\TT}_{(\rr - r_1\ee_1)}^{(\frac{1}{2})}$ and
\begin{align*}
   \bar{\TT}_{(\rr - r_1\ee_1)} = \bar{\TT}_{(\rr - r_1\ee_1)}^{(\frac{1}{2})}\Big(\cI\;\; \cI\Big) = \TT_{(\rr - r_1\ee_1)}^{(\frac{1}{2})}\Big(\cI\;\; \cI\Big) &= \TT_{(\rr - r_1\ee_1)},
\end{align*}
which gives
\begin{align*}  
    \bar{\TT}_{\rr}\bar{\pp} = 
(\bar{\TT}_{(\rr - r_1\ee_1)} \circ \bar{\TT}_{r_1\ee_1})\;\bar{\pp} &= 
\bar{\TT}_{(\rr - r_1\ee_1)}\,( \bar{\TT}_{r_1\ee_1} \circ \bar{\pp} ) \\
&= {\TT}_{(\rr - r_1\ee_1)}^{(\frac{1}{2})}\Big(\cI\;\; \cI\Big)  \,( \bar{\TT}_{r_1\ee_1} \circ \bar{\pp} )\\
&= {\TT}_{(\rr - r_1\ee_1)}^{(\frac{1}{2})}
\left(\Big(\cI\;\; \cI\Big)  \circ  \bar{\TT}_{r_1\ee_1} \right) \bar{\pp} \\
&= {\TT}_{(\rr - r_1\ee_1)}^{(\frac{1}{2})}
\Big(\bar\theta_{1,r_1}^-\cI\;\;\; \bar\theta_{1,r_1}^+\cI\Big)  \bar{\pp}.
\end{align*}
Therefore, $\Big(\bar{\ttt}_1^- \otimes \cI\;\;\; \bar{\ttt}_1^+ \otimes \cI\Big)\bar{\pp} = \Big(\ttt_1^- \otimes \cI\;\;\; \ttt_1^+ \otimes \cI\Big){\pp}$ 
in equation (\ref{eq:mct1})
guarantees that for $\forall r_1 \in [H_1]$,
\[
\Big(\bar\theta_{1,r_1}^-\cI\;\;\; \bar\theta_{1,r_1}^+\cI\Big)  \bar{\pp} = \Big(\theta_{1,r_1}^-\cI\;\;\; \theta_{1,r_1}^+\cI\Big)  \pp.
\]
Therefore,
\begin{equation*}
\bar\TT_{\rr}\bar\pp = 
{\TT}_{(\rr - r_1\ee_1)}^{(\frac{1}{2})}
\Big(\bar\theta_{1,r_1}^-\cI\;\;\; \bar\theta_{1,r_1}^+\cI\Big)  \bar{\pp}
 = {\TT}_{(\rr - r_1\ee_1)}^{(\frac{1}{2})}
\Big(\theta_{1,r_1}^-\cI\;\;\; \theta_{1,r_1}^+\cI\Big)  \pp = \TT_{\rr}\pp.
\end{equation*}
\item Similarly, for any $\rr$ s.t. $r_1 = 0, r_2 \neq 0$, $\Big(\bar{\ttt}_2^- \otimes \cI\;\;\; \bar{\ttt}_2^+ \otimes \cI\Big)\bar{\pp} = \Big(\ttt_2^- \otimes \cI\;\;\; \ttt_2^+ \otimes \cI\Big){\pp}$ with $\bar{\ttt}_2^- = {\ttt}_2^-$
guarantees that $\bar{\TT}_{\rr}\bar{\pp} = \TT_{\rr}\pp$.
\item Similarly, for any $\rr$ s.t. $r_1 \neq 0, r_2 \neq 0$, $\Big(\bar{\ttt}_1^- \otimes \bar{\ttt}_2^- \otimes \cI\;\;\; \bar{\ttt}_1^+ \otimes \bar{\ttt}_2^+  \otimes \cI\Big)\bar{\pp} = \Big(\ttt_1^- \otimes \ttt_2^- \otimes \cI\;\;\; \ttt_1^+ \otimes \ttt_2^+ \otimes \cI\Big){\pp}$ guarantees $\bar{\TT}_{\rr}\bar{\pp} = \TT_{\rr}\pp$.
\end{itemize}

Next we construct $(\bar{\ttt}_1^+, \bar{\ttt}_2^+, \bar{\ttt}_1^-, \bar{\pp})$ s.t. $\bar{\mct} \bar{\pp} = \mct \pp$ holds.
Let $\bar{\pp}_{\bg^1} =\rho \cdot \bar{\pp}_{\bg^0}$, $\pp_{\bg^0} = u \cdot \bar{\pp}_{\bg^0}$, and ${\pp}_{\bg^1} = v \cdot \bar{\pp}_{\bg^0}$. Then $\bar{\mct} \bar{\pp} = \mct \pp$ can be simplified to
\begin{align*}
\bar{\mct} 
\begin{pmatrix}
\bar{\pp}_{\bg^0}\\
\rho \cdot \bar{\pp}_{\bg^0}
\end{pmatrix} &= \mct 
\begin{pmatrix}
u \cdot \bar{\pp}_{\bg^0}\\
v \cdot \bar{\pp}_{\bg^0}
\end{pmatrix},
\end{align*}
i.e.,
\begin{equation*}
    \begin{cases}
        \;\;\;\cI \bar{\pp}_{\bg^0} + \rho \cI \bar{\pp}_{\bg^0} = u \cI \bar{\pp}_{\bg^0} + v \cI \bar{\pp}_{\bg^0};\\
\bar{\theta}_{1,l_1}^-\cI \bar{\pp}_{\bg^0} + \rho \bar{\theta}_{1,l_1}^+\cI \bar{\pp}_{\bg^0} = u \theta_{1,l_1}^-\cI \bar{\pp}_{\bg^0} + v \theta_{1,l_1}^+\cI \bar{\pp}_{\bg^0},\;\;l_1 \in [H_1];\\
\bar{\theta}_{2,l_2}^-\cI \bar{\pp}_{\bg^0} + \rho \bar{\theta}_{2,l_2}^+ \cI \bar{\pp}_{\bg^0} = u \theta_{2,l_2}^-\cI \bar{\pp}_{\bg^0} + v \theta_{2,l_2}^+\cI \bar{\pp}_{\bg^0},\;\;l_2 \in [H_2];\\
\bar{\theta}_{1,l_1}^- \bar{\theta}_{2,l_2}^-\cI \bar{\pp}_{\bg^0} + \rho \bar{\theta}_{1,l_1}^+ \bar{\theta}_{2,l_2}^+\cI \bar{\pp}_{\bg^0} = u \theta_{1,l_1}^- \theta_{2,l_2}^- \cI \bar{\pp}_{\bg^0} + v \theta_{1,l_1}^+ \theta_{2,l_2}^+ \cI  \bar{\pp}_{\bg^0}, \;\;l_1 \in [H_1], \;\;l_2 \in [H_2].
    \end{cases}
\end{equation*}
Then it suffices to have
\begin{equation}\label{eq:four eq1}
\begin{cases}
    1+ \rho = u + v;\\
\bar{\theta}_{1, l_1}^-  + \rho \bar{\theta}_{1,l_1}^+ = u \theta_{1,l_1}^- + v \theta_{1, l_1}^+, \;\;l_1 \in [H_1];\\
\bar{\theta}_{2, l_2}^-  + \rho \bar{\theta}_{2,l_2}^+ = u \theta_{2,l_2}^- + v \theta_{2, l_2}^+, \;\;l_2 \in [H_2];\\
\bar{\theta}_{1, l_1}^-\bar{\theta}_{2, l_2}^-  + \rho \bar{\theta}_{1,l_1}^+\bar{\theta}_{2,l_2}^+ = u \theta_{1,l_1}^- \theta_{2,l_2}^- + v \theta_{1, l_1}^+ \theta_{2, l_2}^+,\;\;l_1 \in [H_1], \;\;l_2 \in [H_2].
\end{cases}
\end{equation}
Let some $\kappa \in (0,1)$ s.t
\begin{align}
\begin{pmatrix}
    \theta_{1, l_1}^-\\
    \bar{\theta}_{1, l_1}^-\\
    \theta_{1, l_1}^+\\
    \bar{\theta}_{1, l_1}^+
\end{pmatrix} = \kappa^{l_1-1}
\begin{pmatrix}
    \theta_{1, 1}^-\\
    \bar{\theta}_{1, 1}^-\\
    \theta_{1, 1}^+\\
    \bar{\theta}_{1, 1}^+
\end{pmatrix}, \;\text{and}\;
\begin{pmatrix}
    \theta_{2, l_2}^-\\
    \bar{\theta}_{2, l_2}^-\\
    \theta_{2, l_2}^+\\
    \bar{\theta}_{2, l_2}^+
\end{pmatrix} = \kappa^{l_2-1}
\begin{pmatrix}
    \theta_{2, 1}^-\\
    \bar{\theta}_{2, 1}^-\\
    \theta_{2, 1}^+\\
    \bar{\theta}_{2, 1}^+
\end{pmatrix} \;\;\;\text{for}\;\;l_1 \in [H_1],\;l_2 \in [H_2],
\end{align} 
then equations (\ref{eq:four eq1}) are reduced to the following four equations:
\begin{equation*}
\begin{cases}
    1+ \rho = u + v;\\
\bar{\theta}_{1, 1}^-  + \rho \bar{\theta}_{1,1}^+ = u \theta_{1,1}^- + v \theta_{1, 1}^+ ;\\
\bar{\theta}_{2, 1}^-  + \rho \bar{\theta}_{2,1}^+ = u \theta_{2,1}^- + v \theta_{2,1}^+;\\
\bar{\theta}_{1, 1}^-\bar{\theta}_{2, 1}^-  + \rho \bar{\theta}_{1,1}^+\bar{\theta}_{2,1}^+ = u \theta_{1,1}^- \theta_{2,1}^- + v \theta_{1, 1}^+\theta_{2, 1}^+.
\end{cases}
\end{equation*}
There are five parameters $(\rho, u, v, \,\bar \theta_{1, 1}^+,\, \bar \theta_{2, 1}^+)$ with four constraints, so there are infinite many solutions. Consequently, parameters $(\ttt^+,\ttt^-,\pp)$ are not identifiable and condition C2 is indeed necessary.
\paragraph{Necessity of condition C3.} Suppose the $\QQ$-matrix satisfies conditions C1 and C2, 
but does not satisfy condition C3. WLOG, we may write $\QQ$ as
\[
\QQ=\left(\begin{array}{c}
\cI_K \\
\QQ^*
\end{array}\right)_{J\times K}, 
 \;\;\text{where}\;\;
\QQ^* = \begin{pmatrix}
\vv\;\; \vv \; \;\vdots\; \;\vdots\;\; \vdots
\end{pmatrix}.
\]
We partition $\aaa$ into four groups according to the first and the second attributes:
\begin{align*}
    \bg^{00} &= \{\aaa: \alpha_1 = 0, \alpha_2 = 0\} = \{\aaa = (0, 0, \aaa^*),\;\aaa^* \in \{0, 1\}^{K-2}\},\\
    \bg^{10} &= \{\aaa: \alpha_1 = 1, \alpha_2 = 0\} = \{\aaa = (1, 0, \aaa^*),\;\aaa^* \in \{0, 1\}^{K-2}\},\\
    \bg^{01} &= \{\aaa: \alpha_1 = 0, \alpha_2 = 1\} = \{\aaa = (0, 1, \aaa^*),\;\aaa^* \in \{0, 1\}^{K-2}\},\\
    \bg^{11} &= \{\aaa: \alpha_1 = 1,  \alpha_2 = 1\} = \{\aaa = (1, 1, \aaa^*),\;\aaa^* \in \{0, 1\}^{K-2}\}.
\end{align*} 
So each group has $2^{K-2}$ attribute profiles. Index the entries in each group through the following: in group $\bg^{00}$, 
\begin{equation*}
    \resizebox{1.06\linewidth}{!}{$
\bg^{00}_1 = (0, 0, \zero), \; \bg^{00}_2 = (0, 0, \ee_1), \ldots, \bg^{00}_K = (0, 0, \ee_{K-2}), \; \bg^{00}_{K+1} = (0, 0, \ee_1+\ee_2), \ldots, \bg^{00}_{2^{K-2}} = \left(0, 0, \sum_{k=1}^{K-2} \ee_k\right),
$}
\end{equation*}
where $\ee_1, \ldots, \ee_{K-2} \in \{0,1\}^{K-2}$ have $K-2$ elements. 
Similarly we index the elements in $\bg^{10},\; \bg^{01},\; \bg^{11}$, so that $\bg^{00}_k$, $\bg^{10}_k$, $\bg^{01}_k$ and $\bg^{11}_k$ for $k \in [2^{K-2}]$ share the same attributes except for the first and second attributes. Index the population proportion parameters $\pp$ in the following way:
\begin{equation*}
\begin{frame}{}
    \resizebox{1.06\linewidth}{!}{$
\pp = \begin{pmatrix}
\pp_{\bg^{00}}\\
\pp_{\bg^{10}}\\
\pp_{\bg^{01}}\\
\pp_{\bg^{11}}
\end{pmatrix},\;\;\;
\text{where}\;\;
\pp_{\bg^{00}} = \begin{pmatrix}
p_{\bg^{00}_1}\\
p_{\bg^{00}_2}\\
\vdots\\
p_{\bg^{00}_{2^{K-2}}}\\
\end{pmatrix},\;
\pp_{\bg^{10}} = \begin{pmatrix}
p_{\bg^{10}_1}\\
p_{\bg^{10}_2}\\
\vdots\\
p_{\bg^{10}_{2^{K-2}}}\\
\end{pmatrix},\;
\pp_{\bg^{01}} = \begin{pmatrix}
p_{\bg^{01}_1}\\
p_{\bg^{01}_2}\\
\vdots\\
p_{\bg^{01}_{2^{K-2}}}\\
\end{pmatrix},\;
\pp_{\bg^{11}} = \begin{pmatrix}
p_{\bg^{11}_1}\\
p_{\bg^{11}_2}\\
\vdots\\
p_{\bg^{11}_{2^{K-2}}}\\
\end{pmatrix}.
$}
\end{frame}
\end{equation*}
Next, we seek to construct $(\bar{\ttt}^+,\bar{\ttt}^-,\bar{\pp})\neq (\ttt^+,\ttt^-,\pp)$ such that (\ref{lem:t-matrix-eq}) holds: take $\bar{\ttt}_j^+ = \ttt_j^+$ for all $j$, 
 $\bar{\ttt}_j^- = \ttt_j^-$ for $j > 2 $, and let $\bar{\pp}_{\bg^{11}} = \pp_{\bg^{11}}$.
If we let $\bg^{-11} = \{\bg^{00},\,\bg^{10},\,\bg^{01}\}$ denote the union of the other three groups, and denote its corresponding population proportion parameters as
\begin{equation*}
   \pp_{\bg^{-11}} = \begin{pmatrix}
       \pp_{\bg^{00}}\\
       \pp_{\bg^{10}}\\
       \pp_{\bg^{01}}
   \end{pmatrix},
\end{equation*}
then using a similar strategy we can show that if $\mct \pp_{\bg^{-11}} = \bar{\mct} \Bar{\pp}_{\bg^{-11}}$ holds, we have $\TT \pp = \bar{\TT} \bar{\pp}$, where $\mct$ is given as follows:
\begin{equation}\label{eq:mct2}
\mct = \begin{pmatrix}
1 & 1 & 1\\
\ttt_1^- & \ttt_1^+ & \ttt_1^-\\
\ttt_2^- & \ttt_2^- & \ttt_2^+\\
\ttt_1^- \otimes \ttt_2^- & \ttt_1^+ \otimes \ttt_2^- & \ttt_1^- \otimes \ttt_2^+
\end{pmatrix} \otimes \cI,
\end{equation}
with $\cI = \cI_{2^{K-2}}$ being the identity matrix of dimension $(K-2)\times(K-2)$. We now prove the claim.
\begin{itemize}
\item For any $\rr$ s.t. $r_1 = r_2 = 0$, $\TT_{\rr}$ does not involve $\ttt_1^+,\,\ttt_1^-$, $\ttt_2^+,\,\ttt_2^-$. According to the construction of the $\QQ$-matrix,
item $2, \ldots, J$ either require both $\alpha_1$ and $\alpha_2$ or require neither, therefore the response distribution in groups $\bg^{00},\,\bg^{10},\,\bg^{01}$ are the same, i.e., $t_{\rr,\, \bg^{00}_k} \equiv t_{\rr, \,\bg^{10}_k} \equiv t_{\rr, \,\bg^{01}_k}\, \; \text{for}\; k \in [2^{K-2}]$. Since $\bar{\ttt}_j^+ = \ttt_j^+$ $\bar{\ttt}_j^- = \ttt_j^-$ for $j > 2 $, we further have
\begin{equation}\label{eq:t2}
    t_{\rr,\, \bg^{00}_k} \equiv t_{\rr, \,\bg^{10}_k} \equiv t_{\rr, \,\bg^{01}_k} \equiv \bar{t}_{\rr,\, \bg^{00}_k} \equiv \bar{t}_{\rr, \,\bg^{10}_k} \equiv \bar{t}_{\rr, \,\bg^{01}_k},\;\;\text{and}\;\;t_{\rr, \,\bg^{11}_k} \equiv \bar{t}_{\rr,\, \bg^{11}_k} \;\;\text{for}\; \;k \in [2^{K-2}]. 
\end{equation}
Let $\TT_{\rr}^{(\frac{1}{4})}$ be the first quartile of the vector $\TT_{\rr}$, 
and $\TT_{\rr}^+$ be the last quartile of the vector $\TT_{\rr}$, then
$\TT_{\rr}^{(\frac{1}{4})}$ is the response of group $\bg^{00}$ and
$\TT_{\rr}^+$ is the response of group $\bg^{11}$. Then equation~\eqref{eq:t2} indicates that $\TT_{\rr}^{(\frac{1}{4})}$ is also the response of group $\bg^{10}$ and $\bg^{01}$. I.e., we have
\begin{align*}
    \TT_{\rr}^{(\frac{1}{4})} &= \underset{j: r_j \neq 0}{\prod} P\left( R_j = r_j\mid \QQ,\ttt^+,\ttt^-,\aaa = \bg^{00}\right) = \underset{j: r_j \neq 0}{\prod} P\left( R_j = r_j\mid\QQ, \ttt^+, \ttt^-, \aaa = \bg^{10}\right) \\
    &= \underset{j: r_j \neq 0}{\prod} P\left( R_j = r_j\mid\,\QQ, \ttt^+, \ttt^-, \aaa = \bg^{01}\right),\\
\TT_{\rr}^+ &= \underset{j: r_j \neq 0}{\prod} P\left( R_j = r_j\mid\,\QQ, \ttt^+, \ttt^-, \aaa = \bg^{11}\right).
\end{align*}
Then equation~\eqref{eq:t2} also indicates that $\bar{\TT}_{\rr}^{(\frac{1}{4})} = \TT_{\rr}^{(\frac{1}{4})}$ and $\TT_{\rr}^+ = \bar{\TT}_{\rr}^+$. Thus,
\begin{equation*}
    \TT_{\rr} = \left(\,\TT_{\rr}^{(\frac{1}{4})}\;\; \TT_{\rr}^{(\frac{1}{4})}\;\;\TT_{\rr}^{(\frac{1}{4})}\;\;\TT_{\rr}^+\,\right) = \left(\TT_{\rr}^{(\frac{1}{4})} \left(\,\cI\;\;\cI\;\;\cI\,\right)\; \TT_{\rr}^+\,\right)
= \left(\bar{\TT}_{\rr}^{(\frac{1}{4})} \left(\,\cI\;\;\cI\;\;\cI\,\right)\; \bar{\TT}_{\rr}^+\,\right).
\end{equation*}
combining $\Big(\cI\;\;\cI\;\;\cI\Big)\bar{\pp}_{\bg^{-11}} = \Big(\cI\;\;\cI\;\;\cI\Big) {\pp_{\bg^{-11}}}$ in \eqref{eq:mct2} and $\bar{\pp}_{\bg^{11}} = \pp_{\bg^{11}}$
gives
\begin{align*}
    \bar{\TT}_{\rr}\bar{\pp} =
    \left(\bar{\TT}_{\rr}^{(\frac{1}{4})} \left(\,\cI\;\;\cI\;\;\cI\,\right)\; \bar{\TT}_{\rr}^+\,\right) \begin{pmatrix}
        \bar{\pp}_{\bg^{-11}}\\
        \bar{\pp}_{\bg^{11}}
    \end{pmatrix} &=
    \bar{\TT}_{\rr}^{(\frac{1}{4})} \left(\,\cI\;\;\cI\;\;\cI\,\right) \bar{\pp}_{\bg^{-11}}\; + \bar{\TT}_{\rr}^+\, \bar{\pp}_{\bg^{11}}\\
     &= \TT_{\rr}^{(\frac{1}{4})} \left(\,\cI\;\;\cI\;\;\cI\,\right) \pp_{\bg^{-11}}\; + \TT_{\rr}^+\, \pp_{\bg^{11}}\\
     &= \TT_{\rr}\pp.
\end{align*}
\item For any $\rr$ s.t. $r_1 \neq 0,  r_2 = 0$, write $\rr = r_1\ee_1 + (\rr - r_1\ee_1)$,
then $\TT_{\rr} = \TT_{(\rr - r_1\ee_1)} \circ \TT_{r_1\ee_1}$. If we split the vector $\TT_{r_1\ee_1}$ via the third quartile, i.e.,
\begin{equation}\label{ll0}
    \TT_{r_1\ee_1} = \left(\TT_{r_1\ee_1}^{(\frac{3}{4})} \;\;\TT_{r_1\ee_1}^+\right),
\end{equation}
where $\TT_{r_1\ee_1}^{(\frac{3}{4})}$ is the first three-fourths of the vector and $\TT_{r_1\ee_1}^+$ is the last fourth part of the vector. Then we know that 
$\TT_{r_1\ee_1}^{(\frac{3}{4})}$ is the response of group $\bg^{-11}$ and
$\TT_{r_1\ee_1}^+$ is the response of group $\bg^{11}$, i.e.,
\begin{align*}
    \TT_{r_1\ee_1}^{(\frac{3}{4})} &=  P\left( R_1 = r_1\mid \QQ,\ttt^+,\ttt^-,\aaa = \bg^{-11}\right), \\
\TT_{r_1\ee_1}^+ &= P\left( R_1 = r_1\mid\,\QQ, \ttt^+, \ttt^-, \aaa = \bg^{11}\right).
\end{align*}
Since item 1 only requires $\alpha_1$, the entries of $\TT_{r_1\ee_1}^+$ are positive, and since $\ttt_1^+ = \bar{\ttt}_1^+$, we have $\bar{\TT}_{r_1\ee_1}^+ = \TT_{r_1\ee_1}^+$.
Furthermore, 
response $(\rr - r_1\ee_1)$ belongs to the case we analyzed previously, therefore, we have
\begin{align}\label{ll1}
   \bar{\TT}_{(\rr - r_1\ee_1)} 
    &= \left(\bar{\TT}_{(\rr - r_1\ee_1)}^{(\frac{1}{4})} \left(\,\cI\;\;\cI\;\;\cI\,\right)\;\; \bar{\TT}_{(\rr - r_1\ee_1)}^+\,\right)\\ \label{ll2}
    &=
   \left(\TT_{(\rr - r_1\ee_1)}^{(\frac{1}{4})} \left(\,\cI\;\;\cI\;\;\cI\,\right)\;\; \TT_{(\rr - r_1\ee_1)}^+\,\right).
   % &= \TT_{(\rr - r_1\ee_1)}. 
\end{align}
Hence, 
\begin{align*}
    \bar{\TT}_{\rr}\bar{\pp} &= 
(\bar{\TT}_{(\rr - r_1\ee_1)} \circ \bar{\TT}_{r_1\ee_1})\;\bar{\pp}\\ 
&=
\bar{\TT}_{(\rr - r_1\ee_1)}\,( \bar{\TT}_{r_1\ee_1} \circ \bar{\pp} ) \\
&\stackrel{(\ref{ll1})}{=\joinrel=}  \left(\bar{\TT}_{(\rr - r_1\ee_1)}^{(\frac{1}{4})} \left(\,\cI\;\;\cI\;\;\cI\,\right)\;\; \bar{\TT}_{(\rr - r_1\ee_1)}^+\,\right) \,( \bar{\TT}_{r_1\ee_1} \circ \bar{\pp} )\\
&\stackrel{(\ref{ll2})}{=\joinrel=} \left(\left({\TT}_{(\rr - r_1\ee_1)}^{(\frac{1}{4})} \left(\,\cI\;\;\cI\;\;\cI\,\right)\;\; {\TT}_{(\rr - r_1\ee_1)}^+\,\right) \circ  \bar{\TT}_{r_1\ee_1}\right)  \bar{\pp} \\
&\stackrel{(\ref{ll0})}{=\joinrel=} 
\left({\TT}_{(\rr - r_1\ee_1)}^{(\frac{1}{4})}\left( \,\cI\;\;\cI\;\;\cI\,\right)\circ  \bar{\TT}_{r_1\ee_1}^{(\frac{3}{4})}\;\; {\TT}_{(\rr - r_1\ee_1)}^+\circ  \bar{\TT}_{r_1\ee_1}^+\,\right)  \bar{\pp} \\
&= {\TT}_{(\rr - r_1\ee_1)}^{(\frac{1}{4})}
\Big(\bar{\theta}_{1,r_1}^-\cI\;\;\; \bar{\theta}_{1,r_1}^+\cI \;\;\;\bar{\theta}_{1,r_1}^-\cI\Big) \bar{\pp}_{\bg^{-11}} + {\TT}_{(\rr - r_1\ee_1)}^+\circ  \bar{\TT}_{r_1\ee_1}^+\bar{\pp}_{\bg^{11}}\\
&= {\TT}_{(\rr - r_1\ee_1)}^{(\frac{1}{4})}
\Big(\bar{\theta}_{1,r_1}^-\cI\;\;\; \bar{\theta}_{1,r_1}^+\cI \;\;\;\bar{\theta}_{1,r_1}^-\cI\Big) \bar{\pp}_{\bg^{-11}} + {\TT}_{(\rr - r_1\ee_1)}^+\circ  {\TT}_{r_1\ee_1}^+{\pp}_{\bg^{11}}.
\end{align*}
Then $\Big(\bar{\ttt}_1^- \otimes \cI\;\;\; \bar{\ttt}_1^+ \otimes \cI \;\;\;\bar{\ttt}_1^- \otimes \cI\Big)\bar{\pp}_{\bg^{-11}} = \Big(\ttt_1^- \otimes \cI\;\;\; \ttt_1^+ \otimes \cI \;\;\;{\ttt}_1^- \otimes \cI\Big){\pp}_{\bg^{-11}}$ in equation
(\ref{eq:mct2}) guarantees that for $\forall r_1 \in [H_1]$,
\begin{equation*}
\Big(\bar{\theta}_{1,r_1}^-\cI\;\;\; \bar{\theta}_{1,r_1}^+\cI \;\;\;\bar{\theta}_{1,r_1}^-\cI\Big) \bar{\pp}_{\bg^{-11}}
 = 
\Big({\theta}_{1,r_1}^-\cI\;\;\; {\theta}_{1,r_1}^+\cI \;\;\;{\theta}_{1,r_1}^-\cI\Big) {\pp}_{\bg^{-11}}
\end{equation*}
Therefore, $\bar{\TT}_{\rr}\bar{\pp} = \TT_{\rr}\pp$.
\item Similarly, for any $\rr$ s.t. $r_1 = 0, r_2 \neq 0$, $\Big(\bar{\ttt}_2^- \otimes \cI\;\;\;\bar{\ttt}_2^- \otimes \cI\;\;\; \bar{\ttt}_2^+ \otimes \cI\Big)\bar{\pp}_{\bg^{-11}} = \Big(\ttt_2^- \otimes \cI\;\;\;\ttt_2^- \otimes \cI\;\;\; \ttt_2^+ \otimes \cI\Big){\pp}_{\bg^{-11}}$ gives $\bar{\TT}_{\rr}\bar{\pp} = \TT_{\rr}\pp$.
\item Similarly, for any $\rr$ s.t. $r_1 \neq 0, \; r_2 \neq 0$, to ensure $\bar{\TT}_{\rr}\bar{\pp} = \TT_{\rr}\pp$, it suffices to have 
\[\Big(\bar{\ttt}_1^- \otimes \bar{\ttt}_2^- \otimes \cI\;\;\;\; \bar{\ttt}_1^+ \otimes \bar{\ttt}_2^-  \otimes \cI\;\;\;\; \bar{\ttt}_1^- \otimes \bar{\ttt}_2^+  \otimes \cI\Big)\bar{\pp}_{\bg^{-11}} 
    = \Big(\ttt_1^- \otimes \ttt_2^- \otimes \cI\;\;\; \ttt_1^+ \otimes \ttt_2^- \otimes \cI\;\;\; \ttt_1^- \otimes \ttt_2^+ \otimes \cI\Big){\pp}_{\bg^{-11}}.\]
\end{itemize}
Now we construct $(\bar{\ttt}_1^-, \bar{\ttt}_2^-, \bar{\pp}_{\bg^{00}}, \bar{\pp}_{\bg^{10}}, \bar{\pp}_{\bg^{01}})$ that satisfies $\bar{\mct} \bar{\pp} = \mct \pp$
as follows: let $\pp_{\bg^{10}} = \rho_1 \cdot \pp_{\bg^{00}}, \;\pp_{\bg^{01}} = \rho_2 \cdot \pp_{\bg^{00}}$, $\bar{\pp}_{\bg^{00}} = u \cdot \pp_{\bg^{00}}$, $\bar{\pp}_{\bg^{10}} = v \cdot \pp_{\bg^{00}}$ and $\bar{\pp}_{\bg^{01}} = w \cdot \pp_{\bg^{00}}$. Then $\bar{\mct} \bar{\pp} = \mct \pp$ can be simplified to
\[
\bar{\mct} \begin{pmatrix}
u \cdot \pp_{\bg^{00}}\\
v \cdot \pp_{\bg^{00}}\\
w \cdot \pp_{\bg^{00}}
\end{pmatrix} = \mct \begin{pmatrix}
1 \cdot \pp_{\bg^{00}}\\
\rho_1 \cdot \pp_{\bg^{00}}\\
\rho_2 \cdot \pp_{\bg^{00}}
\end{pmatrix},
\]
i.e., for $l_1 \in [H_1], \;\;l_2 \in [H_2]$,
\begin{align*}
\begin{frame}{}
    \resizebox{1.08\linewidth}{!}{$
\begin{cases}
    u \cI \pp_{\bg^{00}} + v \cI \pp_{\bg^{00}} + w \cI \pp_{\bg^{00}} = \cI \pp_{\bg^{00}} + \rho_1 \cI \pp_{\bg^{00}} + \rho_2 \cI \pp_{\bg^{00}};\\
u \bar{\theta}_{1,l_1}^-\cI \pp_{\bg^{00}} + v \bar{\theta}_{1,l_1}^+\cI \pp_{\bg^{00}} + w \bar{\theta}_{1,l_1}^-\cI \pp_{\bg^{00}} = \theta_{1,l_1}^-\cI \pp_{\bg^{00}} + \rho_1 \theta_{1,l_1}^+\cI \pp_{\bg^{00}} + \rho_2 \theta_{1,l_1}^-\cI \pp_{\bg^{00}};\\
u \bar{\theta}_{2,l_2}^-\cI \pp_{\bg^{00}} + v \bar{\theta}_{2,l_2}^- \cI \pp_{\bg^{00}}  + w \bar{\theta}_{2,l_2}^+ \cI \pp_{\bg^{00}} = \theta_{2,l_2}^-\cI \pp_{\bg^{00}} + \rho_1 \theta_{2,l_2}^-\cI \pp_{\bg^{00}} + \rho_2 \theta_{2,l_2}^+\cI \pp_{\bg^{00}};\\
u \bar{\theta}_{1,l_1}^- \bar{\theta}_{2,l_2}^-\cI \pp_{\bg^{00}} + v \bar{\theta}_{1,l_1}^+ \bar{\theta}_{2,l_2}^- \cI \pp_{\bg^{00}}  + w \bar{\theta}_{1,l_1}^- \bar{\theta}_{2,l_2}^+ \cI \pp_{\bg^{00}}  = \theta_{1,l_1}^- \theta_{2,l_2}^- \cI \pp_{\bg^{00}} + \rho_1 \theta_{1,l_1}^+ \theta_{2,l_2}^-  \cI\pp_{\bg^{00}} + \rho_2 \theta_{1,l_1}^- \theta_{2,l_2}^+  \cI\pp_{\bg^{00}}.
\end{cases} 
$}
\end{frame}
\end{align*}
Then it suffices to have
\begin{align}\label{eq:tt1}
\begin{cases}
    u + v +w = 1 + \rho_1 + \rho_2;\\
u\bar{\theta}_{1, l_1}^-  + v \bar{\theta}_{1,l_1}^+ + w \bar{\theta}_{1,l_1}^- =  \theta_{1,l_1}^- + \rho_1 \theta_{1, l_1}^+ + \rho_2 \theta_{1, l_1}^-;\\
u\bar{\theta}_{2, l_2}^-  + v \bar{\theta}_{2,l_2}^- + w \bar{\theta}_{2,l_2}^+ =  \theta_{2,l_2}^- + \rho_1 \theta_{2, l_2}^- + \rho_2 \theta_{2, l_2}^+;\\
u\bar{\theta}_{1, l_1}^-\bar{\theta}_{2, l_2}^-  + v \bar{\theta}_{1,l_1}^+\bar{\theta}_{2,l_2}^- + w \bar{\theta}_{1,l_1}^-\bar{\theta}_{2,l_2}^+ = \theta_{1,l_1}^- \theta_{2,l_2}^- + \rho_1 \theta_{1, l_1}^+\theta_{2, l_2}^- + \rho_2 \theta_{1, l_1}^-\theta_{2, l_2}^+.\;\;
\end{cases}
\end{align}
Let some $\kappa \in (0,1)$ s.t.
\begin{align}
\begin{pmatrix}
    \theta_{1, l_1}^-\\
    \bar{\theta}_{1, l_1}^-\\
    \theta_{1, l_1}^+\\
    \bar{\theta}_{1, l_1}^+
\end{pmatrix} = \kappa^{l_1-1}
\begin{pmatrix}
    \theta_{1, 1}^-\\
    \bar{\theta}_{1, 1}^-\\
    \theta_{1, 1}^+\\
    \bar{\theta}_{1, 1}^+
\end{pmatrix}, \;\text{and}\;
\begin{pmatrix}
    \theta_{2, l_2}^-\\
    \bar{\theta}_{2, l_2}^-\\
    \theta_{2, l_2}^+\\
    \bar{\theta}_{2, l_2}^+
\end{pmatrix} = \kappa^{l_2-1}
\begin{pmatrix}
    \theta_{2, 1}^-\\
    \bar{\theta}_{2, 1}^-\\
    \theta_{2, 1}^+\\
    \bar{\theta}_{2, 1}^+
\end{pmatrix} \;\;\;\text{for}\;\;l_1 \in [H_1],\;l_2 \in [H_2],
\end{align} 
then equations (\ref{eq:tt1}) are now reduced to 
\begin{align*}
\begin{cases}
    u + v +w = 1 + \rho_1 + \rho_2;\\
    \bar{\theta}_{1, 1}^-  + v \bar{\theta}_{1, 1}^+ + w \bar{\theta}_{1, 1}^- =  \theta_{1, 1}^- + \rho_1 \theta_{1, 1}^+ + \rho_2 \theta_{1, 1}^-;\\
    u\bar{\theta}_{2, 1}^-  + v \bar{\theta}_{2,1}^- + w \bar{\theta}_{2,1}^+ = \theta_{2,1}^- + \rho_1 \theta_{2, 1}^- + \rho_2 \theta_{2, 1}^+;\\
    u\bar{\theta}_{1, 1}^-\bar{\theta}_{2, 1}^-  + v \bar{\theta}_{1, 1}^+\bar{\theta}_{2,1}^- + w \bar{\theta}_{1, 1}^-\bar{\theta}_{2,1}^+ = \theta_{1, 1}^- \theta_{2,1}^- + \rho_1 \theta_{1, 1}^+\theta_{2, 1}^- + \rho_2 \theta_{1, 1}^-\theta_{2, 1}^+.
\end{cases}
\end{align*}
The above equation system contains $(u, v, w, \bar{\theta}_{1, 1}^-, \bar{\theta}_{2, 1}^-)$ five parameters with four constraints, which gives infinitely many solutions. Thus  the model parameters are not identifiable and condition C3 is necessary.
\end{proof}

\subsection{Identifiability of Sequential DINA model}\label{app:seq}
\suSeq*
\begin{proof}
We begin by showing that when $\QQ^1$ meets the conditions S1-S3, $ \bar{\TT}^s\bar{\pp} = \TT^s \pp$ gives $\bar{\pp} = \pp$ and $\bar{\beta}_{j,1}^+ = \beta_{j,1}^+,\; \bar{\beta}_{j,1}^- = \beta_{j,1}^-,$ for $j \in [J]$. 
As discussed in Section~\ref{sec:seq-id}, the parameters $\left(\,\pp,\;(\beta_{j,1}^+)_{j\in [J]},\;(\beta_{j,1}^-)_{j\in [J]} \,\right)$ can be interpreted as parameters in the reduced binary DINA model. In this model, the binary item takes the form $I$(item $j\geq 1$), and the corresponding $\qq$-vectors are $(\qq_{j,1})_{j\in [J]}$, which indicates the attributes required to complete the first categories. Moreover, $\QQ^1$ is the $\QQ$-matrix for this reduced DINA model, so the item parameters only involve $(\beta_{j,1}^+)_{j\in [J]}, (\beta_{j,1}^-)_{j\in [J]}$. Since the $\QQ$-matrix for this reduced binary DINA model satisfies conditions S1-S3,  according to the sufficient condition for the binary DINA model in \citet{id-dina}, parameters $\pp,\;(\beta_{j,1}^+)_{j\in [J]},\;(\beta_{j,1}^-)_{j\in [J]}$ are identified, i.e., $\bar{\pp} = \pp$ and $\bar{\beta}_{j,1}^+ = \beta_{j,1}^+, \bar{\beta}_{j,1}^- = \beta_{j,1}^-$ for $j \in [J]$.

Next we identify $\beta_{j,l}^+, \beta_{j,l}^-, \,$ for $ l > 1$, by induction.
Suppose categories $h \,(h < l)$ of item $j$ have been identified, i.e.,
\begin{equation}\label{eq:assum}
    \bar{\beta}_{j,h}^+ =\beta_{j,h}^+,\;\;\bar{\beta}_{j,h}^- = \beta_{j,h}^-\;\;\text{for}\;h <l.
\end{equation}
For each item $j$, we will use two rows of the $\TT^s$-matrix in $\bar{\TT}^s\bar{\pp} = \TT^s \pp$ to infer that $\bar{\beta}_{j,l}^+ =\beta_{j,l}^+$ and $ \bar{\beta}_{j,l}^- = \beta_{j,l}^-$. 
For item $j$, its category $l$ requires some attribute, and WLOG, suppose this is $\alpha_1$. There are two possible cases.

\paragraph{Case 1: } category $l$ of item $j$ requires solely $\alpha_1$, i.e., $\qq_{j,l} = \ee_1^\top$. 
According to condition S2, there exists some other item $j'$ whose first category requires $\alpha_1$. 
So the $\QQ$-matrix takes the following form:
\begin{equation}\label{q-matrix1}
    \QQ = \begin{pmatrix}
\vdots & \vdots & \vdots & \vdots & \vdots\\
\hdashline[2pt/2pt]
item \;j, category\; l
& 1&0&\ldots&0\\
\hdashline[2pt/2pt]
\vdots & \vdots & \vdots & \vdots & \vdots\\
\hdashline[2pt/2pt]
    item \;j', category\; 1
& 1&&\vv^\top&\\
\hdashline[2pt/2pt]
\vdots & \vdots & \vdots & \vdots & \vdots
\end{pmatrix}.
\end{equation}
In this case, $\xi_{j',1,\aaa} \leq \xi_{j,l,\aaa}$ for all $\aaa$, consider vectors
\begin{align*}
    \TT^s_{l\ee_j} &= P( R_j \geq l\mid \QQ,\bbb^+,\bbb^-), \\
    \TT^s_{l\ee_j+\ee_{j'}} &= P( R_j \geq l, R_{j'} \geq 1\mid \QQ,\bbb^+,\bbb^-).
\end{align*}
According to the assumption in equation (\ref{eq:assum}), we have $t^s_{(l-1)\ee_j,\aaa} = \bar t^s_{(l-1)\ee_j,\aaa}$. Therefore,
\begin{align}\label{x1}
    t^s_{l\ee_j,\aaa} = \begin{cases}
        \beta_{j,l}^+ \,t^s_{(l-1)\ee_j,\aaa}, \;\;\xi_{j,l,\aaa} = 1\\
        \beta_{j,l}^- \,t^s_{(l-1)\ee_j,\aaa}, \;\;\xi_{j,l,\aaa} = 0
    \end{cases}
    ,\;\;\;\;\;
    \bar t^s_{l\ee_j,\aaa} = \begin{cases}
        \bar \beta_{j,l}^+ \,t^s_{(l-1)\ee_j,\aaa}, \;\;\xi_{j,l,\aaa} = 1\\
        \bar\beta_{j,l}^- \,t^s_{(l-1)\ee_j,\aaa}, \;\;\xi_{j,l,\aaa} = 0
    \end{cases}.
\end{align}
Since $\bar{\beta}_{j',1}^+ = \beta_{j',1}^+, \;\bar{\beta}_{j',1}^- = \beta_{j',1}^-$, we have
\begin{align}\label{x2}
    t^s_{l\ee_j+\ee_{j'},\aaa} = \begin{cases} 
{\beta}_{j',1}^+\, t^s_{l\ee_j,\aaa}
,\;\;  &\xi_{j',1,\aaa} = 1\\
{\beta}_{j',1}^-\, t^s_{l\ee_j,\aaa},\;\;  &\xi_{j',1,\aaa} = 0
\end{cases},\;\;\;\;\;
        \bar t^s_{l\ee_j+\ee_{j'},\aaa} = \begin{cases}  {\beta}_{j',1}^+\, \bar t^s_{l\ee_j,\aaa} 
    \;\;  &\xi_{j',1,\aaa} = 1\\
{\beta}_{j',1}^-\, \bar t^s_{l\ee_j,\aaa},\;\;  &\xi_{j',1,\aaa} = 0
\end{cases}.
\end{align}
Since
$\bar{\TT}^s_{\rr}\bar{\pp} = \TT^s_{\rr} \pp$ for $\rr = l\ee_j$ and $\rr = l\ee_j + \ee_{j'}$, $\bar{\pp} = \pp$, we have
\begin{align*}
        &
        \begin{cases}
            \beta_{j',1}^-(\bar{\TT}^s_{l\ee_j}- \TT^s_{l\ee_j})\pp  = 0\\
    (\bar{\TT}^s_{l\ee_j+\ee_{j'}} - \TT^s_{l\ee_j+\ee_{j'}})\pp = 0
        \end{cases}\\
    \Rightarrow\;& [({\beta}_{j',1}^-\bar{\TT}^s_{l\ee_j} - \bar{\TT}^s_{l\ee_j+\ee_{j'}}) - (\beta_{j',1}^-\TT^s_{l\ee_j} - \TT^s_{l\ee_j+\ee_{j'}})]\pp = 0.
\end{align*}
Using equations (\ref{x1}-\ref{x2}), we have $\displaystyle (\beta_{j',1}^- - \beta_{j',1}^+) \sum_{\aaa\,: \;\xi_{j',1,\aaa = 1}}(\bar t^s_{l\ee_j,\aaa} - t^s_{l\ee_j,\aaa}) \; p_{\aaa} = 0$. According to the constructed $\QQ$-matrix~(\ref{q-matrix1}), $\xi_{j',1,\aaa} = 1$ must imply $\xi_{j,l,\aaa} = 1$. Therefore,
\begin{align*}
    (\beta_{j',1}^- - \beta_{j',1}^+)(\bar{\beta}_{j,l}^+ - \beta_{j,l}^+)  \sum_{\aaa\,:\,\xi_{j',1,\aaa = 1}} t^s_{(l-1)\ee_j,\aaa}\; p_{\aaa} = 0,
\end{align*}
from which we conclude that $\bar{\beta}_{j,l}^+ = \beta_{j,l}^+$.

Next consider $\TT^s_{l\ee_j}$: 
with $\bar{\beta}_{j,l}^+ = \beta_{j,l}^+$, equation \eqref{x1} can be written as
\begin{align}\label{x3}
    t^s_{l\ee_j,\aaa} = \begin{cases}
        \beta_{j,l}^+ \,t^s_{(l-1)\ee_j,\aaa}, \;\;\xi_{j,l,\aaa} = 1\\
        \beta_{j,l}^- \,t^s_{(l-1)\ee_j,\aaa}, \;\;\xi_{j,l,\aaa} = 0
    \end{cases}
    ,\;\;\;\;\;
    \bar t^s_{l\ee_j,\aaa} = \begin{cases}
        \beta_{j,l}^+ \,t^s_{(l-1)\ee_j,\aaa}, \;\;\xi_{j,l,\aaa} = 1\\
        \bar\beta_{j,l}^- \,t^s_{(l-1)\ee_j,\aaa}, \;\;\xi_{j,l,\aaa} = 0
    \end{cases}.
\end{align}
With
$\bar{\TT}^s_{l\ee_j}\bar{\pp} = \TT^s_{l\ee_j} \pp$, 
 $\bar{\pp} = \pp$ and equation \eqref{x3},
we have
\begin{align*}
    & (\bar{\TT}^s_{l\ee_j}-\TT^s_{l\ee_j})\pp = 0\\
    \Rightarrow& (\bar{\beta}_{j,l}^- - \beta_{j,l}^-)\cdot \underset{\aaa\,:\, \xi_{j,l,\aaa} = 0}{\sum}t^s_{(l-1)\ee_j,\aaa} \;p_{\aaa} = 0\\
    \Rightarrow&
    \bar{\beta}_{j,l}^- = \beta_{j,l}^-.
\end{align*}
% If for all $\aaa$ s.t. $ \xi_{j,l,\aaa} = 0 \; \Leftrightarrow\; \alpha_1 = 0$, $t^s_{(l-1)\ee_j,\aaa} = 0$, then equations (\ref{x3}) becomes
% \begin{equation}
%     \label{x4}
%     t^s_{l\ee_j,\aaa} = \begin{cases}
%         \beta_{j,l}^+ \,t^s_{(l-1)\ee_j,\aaa}, \;\;&\xi_{j,l,\aaa} = 1\\
%         0, \;\;&\xi_{j,l,\aaa} = 0
%     \end{cases}
%     ,\;\;\;\;\;
%     \bar t^s_{l\ee_j,\aaa} = \begin{cases}
%         \beta_{j,l}^+ \,t^s_{(l-1)\ee_j,\aaa}, \;\;&\xi_{j,l,\aaa} = 1\\
%         0, \;\;&\xi_{j,l,\aaa} = 0
%     \end{cases}.
% \end{equation}

\paragraph{Case 2: }
category $l$ of item $j$ requires some other attribute. WLOG, assume that this is the second attribute $\alpha_2$. Then according to condition S1, 
$\alpha_1$ or $\alpha_2$ is required solely by some other item's first category. WLOG, assume that it is $\alpha_1$, and it is 
item $j'$'s ($j' \neq j$) first category that requires $\alpha_1$,  i.e., $\qq_{j',1} = \ee_1^\top$. So the $\QQ$-matrix can be written as follows:
\begin{equation}\label{q-matrix2}
    \QQ = \begin{pmatrix}
\vdots& \vdots&\vdots&\vdots&\vdots\\
\hdashline[2pt/2pt]
item \;j, category\; l \;& 1&1&\;\;\;\;\;\vv^\top\\
\hdashline[2pt/2pt]
\vdots&\vdots&\vdots&\vdots&\vdots\\
\hdashline[2pt/2pt]
 item \;j', category\; 1\;& 1&0&\ldots&0\\
\hdashline[2pt/2pt]
\vdots&\vdots&\vdots&\vdots&\vdots
\end{pmatrix}.
\end{equation}
In this case, $\xi_{j,l,\aaa} \leq \xi_{j',1,\aaa}$ for all $\aaa$. 
Consider vectors
\begin{align}
    \TT^s_{l\ee_j} &= P( R_j \geq l\mid \QQ,\bbb^+,\bbb^-), \\
    \TT^s_{l\ee_j+\ee_{j'}} &= P( R_j \geq l, R_{j'} \geq 1\mid \QQ,\bbb^+,\bbb^-). 
\end{align}
According to assumption (\ref{eq:assum}), we have $t^s_{(l-1)\ee_j,\aaa} = \bar t^s_{(l-1)\ee_j,\aaa}$. Therefore,
\begin{align}\label{y1}
    t^s_{l\ee_j,\aaa} = \begin{cases}
        \beta_{j,l}^+ \,t^s_{(l-1)\ee_j,\aaa}, \;\;\xi_{j,l,\aaa} = 1\\
        \beta_{j,l}^- \,t^s_{(l-1)\ee_j,\aaa}, \;\;\xi_{j,l,\aaa} = 0
    \end{cases}
    ,\;\;\;\;\;
    \bar t^s_{l\ee_j,\aaa} = \begin{cases}
        \bar \beta_{j,l}^+ \,t^s_{(l-1)\ee_j,\aaa}, \;\;\xi_{j,l,\aaa} = 1\\
        \bar\beta_{j,l}^- \,t^s_{(l-1)\ee_j,\aaa}, \;\;\xi_{j,l,\aaa} = 0
    \end{cases}.
\end{align}
Since $\bar{\beta}_{j',1}^+ = \beta_{j',1}^+, \;\bar{\beta}_{j',1}^- = \beta_{j',1}^-$, we have
\begin{align}\label{y2}
    t^s_{l\ee_j+\ee_{j'},\aaa} = \begin{cases} 
{\beta}_{j',1}^+\, t^s_{l\ee_j,\aaa}
,\;\;  &\xi_{j',1,\aaa} = 1\\
{\beta}_{j',1}^-\, t^s_{l\ee_j,\aaa},\;\;  &\xi_{j',1,\aaa} = 0
\end{cases},\;\;\;\;\;
        \bar t^s_{l\ee_j+\ee_{j'},\aaa} = \begin{cases}  {\beta}_{j',1}^+\, \bar t^s_{l\ee_j,\aaa} 
    \;\;  &\xi_{j',1,\aaa} = 1\\
{\beta}_{j',1}^-\, \bar t^s_{l\ee_j,\aaa},\;\;  &\xi_{j',1,\aaa} = 0
\end{cases}.
\end{align}
 Since
$\bar{\TT}^s_{\rr}\bar{\pp} = \TT^s_{\rr} \pp$ holds for $\rr = l\ee_j$ and $\rr = l\ee_j + \ee_{j'}$, and $\bar{\pp} = \pp$,
\begin{align*}
        &
        \begin{cases}
            \beta_{j',1}^+(\bar{\TT}^s_{l\ee_j}- \TT^s_{l\ee_j})\pp  = 0\\
    (\bar{\TT}^s_{l\ee_j+\ee_{j'}} - \TT^s_{l\ee_j+\ee_{j'}})\pp = 0
        \end{cases}
        \\
    \Rightarrow\;& [({\beta}_{j',1}^+\bar{\TT}^s_{l\ee_j} - \bar{\TT}^s_{l\ee_j+\ee_{j'}}) - (\beta_{j',1}^+\TT^s_{l\ee_j} - \TT^s_{l\ee_j+\ee_{j'}})]\pp = 0.
\end{align*}
Using equations (\ref{y1}-\ref{y2}), we have $\displaystyle (\beta_{j',1}^+ - \beta_{j',1}^-) \sum_{\aaa\,: \;\xi_{j',1,\aaa = 0}}(\bar t^s_{l\ee_j,\aaa} - t^s_{l\ee_j,\aaa}) \; p_{\aaa} = 0$. According to the constructed $\QQ$-matrix~(\ref{q-matrix2}), $\xi_{j',1,\aaa} = 0$ must imply $\xi_{j,l,\aaa} = 0$. Therefore, using equation \eqref{y1}, we have
\begin{align*}
     (\beta_{j',1}^+ - \beta_{j',1}^-)(\bar{\beta}_{j,l}^- - \beta_{j,l}^-)  \sum_{\aaa: \,\xi_{j',1,\aaa = 0}} t^s_{(l-1)\ee_j,\aaa}\; p_{\aaa} = 0,
\end{align*}
which we conclude that $\bar{\beta}_{j,l}^- = \beta_{j,l}^-$.
Next consider $\TT^s_{l\ee_j}$: 
with $\bar{\beta}_{j,l}^- = \beta_{j,l}^-$, equation (\ref{y1}) can be written as
\begin{align}\label{y3}
    t^s_{l\ee_j,\aaa} = \begin{cases}
        \beta_{j,l}^+ \,t^s_{(l-1)\ee_j,\aaa}, \;\;\xi_{j,l,\aaa} = 1\\
        \beta_{j,l}^- \,t^s_{(l-1)\ee_j,\aaa}, \;\;\xi_{j,l,\aaa} = 0
    \end{cases}
    ,\;\;\;\;\;
    \bar t^s_{l\ee_j,\aaa} = \begin{cases}
        \bar\beta_{j,l}^+ \,t^s_{(l-1)\ee_j,\aaa}, \;\;\xi_{j,l,\aaa} = 1\\
        \beta_{j,l}^- \,t^s_{(l-1)\ee_j,\aaa}, \;\;\xi_{j,l,\aaa} = 0
    \end{cases}.
\end{align}
Using 
$\bar{\TT}^s_{l\ee_j}\bar{\pp} = \TT^s_{l\ee_j} \pp$ and 
 $\bar{\pp} = \pp$, and equation (\ref{y3}),
we have
\begin{align*}
    & (\bar{\TT}^s_{l\ee_j}-\TT^s_{l\ee_j})\pp = 0\\
    \Rightarrow& (\bar{\beta}_{j,l}^+ - \beta_{j,l}^+)\cdot \underset{\aaa\,:\, \xi_{j,l,\aaa} = 1}{\sum}t^s_{(l-1)\ee_j,\aaa} \;p_{\aaa} = 0\\
    \Rightarrow&
    \bar{\beta}_{j,l}^+ = \beta_{j,l}^+.
\end{align*}
Therefore, for both cases,
$\beta_{j,l}^+$ and $\beta_{j,l}^-$ are identified. By induction, we conclude that all parameters $(\bbb^+, \bbb^-, \pp)$ are identifiable and conditions S1-S3 are sufficient.
\end{proof}

\neSeq*
\begin{proof}
\;\;
We have shown the necessity of condition S1 in Proposition~\ref{neSeq1}, so we may assume that the $\QQ$-matrix satisfies condition S1.

\paragraph{Necessity of Condition S2$^*$. }

Suppose the $\QQ^1$ matrix satisfies condition S1, but does not satisfy condition S2$^*$.
We first show that each attribute must be required by more than one item. Suppose there exists some attribute that is only required by one item. WLOG, assume this is attribute one $\alpha_1$, and is only required by the first item. 
So the $\QQ$-matrix can be written as
\begin{equation}
  \QQ = \begin{pmatrix}
  \text{item } 1\quad\quad
  \begin{pmatrix}
  1 & \zero^\top \\
  * & *
  \end{pmatrix}\\
  \hdashline[2pt/2pt]
  \text{item }(2:J) \;
  \begin{pmatrix}
  \;\zero \;\;& *
  \end{pmatrix}
  \end{pmatrix}. 
\end{equation}
% Specifically, let $E_1 = \{l: \qq_{1, l} = [1\;\; \space *]\} $ denote the categories of the first item that require $\alpha_1$, then $1 \in E_1$ and let $|E_1|$ denote the number of categories that require $\alpha_1$.
We partition $\aaa$ into two groups according to the first attribute:
\begin{align*}
    \bg^0 = \{\aaa: \alpha_1 = 0\} = \{\aaa = (0, \aaa^*),\;\aaa^* \in \{0, 1\}^{K-1}\},\\
    \bg^1 = \{\aaa: \alpha_1 = 1\} = \{\aaa = (1, \aaa^*), \;\aaa^* \in \{0, 1\}^{K-1}\}.
\end{align*} 
So each group has $2^{K-1}$ attribute profiles, and we index the entries in each group by
\begin{align*}
    \bg^0_1 = (0, \zero), \; \bg^0_2 = (0, \ee_1), \ldots, \bg^0_K = (0, \ee_{K-1}), \; \bg^0_{K+1} = (0, \ee_1+\ee_2), \ldots, \bg^0_{2^{K-1}} = \left(0, \sum_{k=1}^{K-1} \ee_k\right), \\
    \bg^1_1 = (1, \zero), \; \bg^1_2 = (1, \ee_1), \ldots, \bg^1_K = (1, \ee_{K-1}), \; \bg^1_{K+1} = (1, \ee_1+\ee_2), \ldots, \bg^1_{2^{K-1}} = \left(1, \sum_{k=1}^{K-1} \ee_k\right),
\end{align*}
where $\ee_1, \ldots, \ee_{K-1} \in \{0,1\}^{K-1}$ have $K-1$ elements.
Therefore, the $k$-th ($k \in [2^{K-1}]$) entry of $\bg^0$ and $\bg^1$:
$\bg^0_k$ and $\bg^1_k$, share the same attributes except for the first one $\alpha_1$. Index the population proportion parameters $\pp$ in the following way:
\begin{equation*}
  \pp = \begin{pmatrix}
\pp_{\bg^0}\\
\pp_{\bg^1}\\
\end{pmatrix},\;\;\;
\text{where}\;\;
\pp_{\bg^0} = \begin{pmatrix}
p_{\bg^0_1}\\
p_{\bg^0_2}\\
\vdots\\
p_{\bg^0_{2^{K-1}}}\\
\end{pmatrix}\;\;\;\text{and}\;\;
\pp_{\bg^1} = \begin{pmatrix}
p_{\bg^1_1}\\
p_{\bg^1_2}\\
\vdots\\
p_{\bg^1_{2^{K-1}}}\\
\end{pmatrix}.
\end{equation*}
Recall that $\bbb_j^+ = \left(\beta_{j,1}^+,\; \beta_{j,1}^+\beta_{j,2}^+,\; \ldots \;\prod_{l=1}^{H_j}\, \beta_{j,l}^+\right)$, $\bbb_j^- = \left(\beta_{j,1}^-,\; \beta_{j,1}^-\beta_{j,2}^-,\; \ldots \;\prod_{l=1}^{H_j}\, \beta_{j,l}^-\right)$, for $j \in [J]$. Item parameters
$\bbb^+ = (\bbb_1^+, \bbb_2^+, \ldots, \bbb_J^+)$ and $\bbb^- = (\bbb_1^-, \bbb_2^-, \ldots, \bbb_J^-)$. We now seek to construct $(\bbb^+,\bbb^-,\pp) \neq (\bar{\bbb}^+,\bar{\bbb}^-,\bar{\pp})$ such that (\ref{eq-orig-seq}) holds:
\begin{enumerate}
    \item Take $\bbb_{j}^+ = \bar{\bbb}_{j}^+,\;\bbb_{j}^- = \bar{\bbb}_{j}^-$ for $j > 1$.
    \item $\beta_{1,1}^- = \bar{\beta}_{1,1}^- = 0$, $\bar{\beta}_{1,l_1}^+ = \beta_{1,l_1}^+$, $\bar{\beta}_{1,l_1}^- = \beta_{1,l_1}^-$ for $l_1 > 1$.
    \item  $\bar{\pp}_{\bg^1} =\rho \cdot \bar{\pp}_{\bg^0}$, ${\pp}_{\bg^0} = u \cdot \bar{\pp}_{\bg^0}$, and ${\pp}_{\bg^1} = v \cdot \bar{\pp}_{\bg^0}$.
\end{enumerate}
According to Lemma~\ref{lem:ts-matrix}, in order for equation~(\ref{eq-orig-seq}) to hold, it suffices to show that $\bar{\TT}_{\rr}\bar{\pp} = \TT_{\rr}\pp$ holds for $\forall \rr$.

\begin{itemize}
\item For any $\rr$ s.t. $r_1 = 0$, 
$t_{\rr,\, \bg^0_k} \equiv t_{\rr, \,\bg^1_k} \; \text{for}\; k \in [2^{K-1}]$.
Since $\bbb_{j}^+ = \bar{\bbb}_{j}^+,\;\bbb_{j}^- = \bar{\bbb}_{j}^-$ for $j > 1$, 
$t_{\rr,\, \bg^0_k} \equiv t_{\rr, \,\bg^1_k} \equiv \bar{t}_{\rr,\, \bg^0_k} \equiv \bar{t}_{\rr, \,\bg^1_k} \; \text{for}\; k \in [2^{K-1}]$,
to ensure $\bar{\TT}_{\rr}\bar{\pp} = \TT_{\rr}\pp$,
it suffices to have $1+\rho = u+v$.
\item For any $\rr$ s.t. $r_1 = 1$, since $\beta_{1,1}^- = 0$,
$t_{\rr,\, \bg^0_k} = 0$, it suffices to have $\rho \bar{\beta}_{1,1}^+ = v \beta_{1, 1}^+$ to
guarantee $\bar{\TT}_{\rr}\bar{\pp} = \TT_{\rr}\pp$.
\item For any $\rr$ s.t. $r_1 > 1$, since $\beta_{1,1}^- = 0$ and $\bar{\beta}_{1,l_1}^+ = \beta_{1,l_1}^+,\;\bar{\beta}_{1,l_1}^- = \beta_{1,l_1}^-$ for $l_1 > 1$, $\bar{\TT}_{\rr}\bar{\pp} = \TT_{\rr}\pp$ holds without additional conditions.
\end{itemize}
With three parameters $(u, v, \beta_{1,1}^+)$ and two constraints, the equation system has infinitely many solutions. Thus the construction exists. Therefore, each attribute must be required by more than one item.

Now suppose the $\QQ$-matrix satisfies condition S1 and each attribute is required by at least two items, and suppose that there exists some attribute which is required by at most two categories. WLOG, assume this is $\alpha_1$, and it is the first and second items that require $\alpha_1$. Assume that category $l_2^* \in [H_2]$ of item 2 requires $\alpha_1$ and other categories of item 2 do not require.
So the $\QQ$ can be written as
\begin{equation}
  \QQ = \begin{pmatrix}
  \text{item \;1} &\begin{cases}
  1  \;\;\;\;&\zero^\top \\
  \zero \;\;&* 
  \end{cases}\\
  \hdashline[2pt/2pt]
  \text{item \;2} &
  \begin{cases}
   \zero  &* \\
  1 &\vv^\top\\
  \zero  &* 
  \end{cases}\\
  \hdashline[2pt/2pt]
  \text{item \;3:J} &\begin{cases}
    \zero &\; * \;
  \end{cases}\\
  \end{pmatrix}.
\end{equation}
We now seek to construct $(\bar{\bbb}^+,\bar{\bbb}^-,\bar{\pp})\neq (\bbb^+,\bbb^-,\pp)$ such that $\bar{\TT}_{\rr}\bar{\pp} = \TT_{\rr}\pp$ holds for $\forall \rr$.
\begin{enumerate}
    \item Take $\bbb_{j}^+ = \bar{\bbb}_{j}^+,\;\bbb_{j}^- = \bar{\bbb}_{j}^-$ for $j > 2$.
    \item For $l_1 > 1$,  $\bar{\beta}_{1,l_1}^- = \beta_{1,l_1}^-$ and $\bar{\beta}_{1,l_1}^+ = \beta_{1,l_1}^+$. 
    \item For $l_2 \neq l_2^*$, $\bar{\beta}_{2,l_2}^+ = \beta_{2,l_2}^+$,
    for $l_2\in[H_2]$, $\bar{\beta}_{2,l_2}^- = \beta_{2,l_2}^-$.
    \item  $\bar{\pp}_{\bg^1} =\rho \cdot \bar{\pp}_{\bg^0}$, ${\pp}_{\bg^0} = u \cdot \bar{\pp}_{\bg^0}$, and ${\pp}_{\bg^1} = v \cdot \bar{\pp}_{\bg^0}$.
\end{enumerate}
So the remaining parameters are $(\beta_{1,1}^+, \beta_{1,1}^-, \beta_{2,l_2^*}^+,
 \rho, u, v)$. We partition $\aaa$ into two groups according to the first attribute as we did before. 

\begin{itemize}
\item For any $\rr$ s.t. $r_1 =0,  r_2 < l_2^*$, 
$t_{\rr,\, \bg^0_k} \equiv t_{\rr, \,\bg^1_k} \; \text{for}\; k \in [2^{K-1}]$.
Since $\bbb_{j}^+ = \bar{\bbb}_{j}^+,\;\bbb_{j}^- = \bar{\bbb}_{j}^-$ for $j > 1$,
$t_{\rr,\, \bg^0_k} \equiv t_{\rr, \,\bg^1_k} \equiv \bar{t}_{\rr,\, \bg^0_k} \equiv \bar{t}_{\rr, \,\bg^1_k}\; \text{for}\; k \in [2^{K-1}]$. To ensure $\bar{\TT}_{\rr}\bar{\pp} = \TT_{\rr}\pp$,
it suffices to have $1+\rho = u+v$.
\item For any $\rr$ s.t. $r_1 =0,  r_2 = l_2^*$, 
it suffices to have $\bar{\beta}_{2, l_2^*}^- + \rho \bar{\beta}_{2, l_2^*}^+ = u\beta_{2, l_2^*}^- + v \beta_{2, l_2^*}^+$. When this is met, for any $\rr$ s.t. $r_1 = 0$, $r_2 > l_2^*$, $\bar{\TT}_{\rr}\bar{\pp} = \TT_{\rr}\pp$ also holds, since for $l_2 > l_2^*$, $\bar{\beta}_{2,l_2}^+ = \beta_{2,l_2}^+,\;\bar{\beta}_{2,l_2}^- = \beta_{2,l_2}^-$.
\item For any $\rr$ s.t. $r_1 = 1$, $r_2 = 0$,
it suffices to have $\bar{\beta}_{1,1}^- + \rho \bar{\beta}_{1,1}^+ = u\beta_{1, 1}^- + v \beta_{1, 1}^+$ to
guarantee $\bar{\TT}_{\rr}\bar{\pp} = \TT_{\rr}\pp$. When this is met, for any $\rr$ s.t. $r_1 > 1$, $r_2 = 0$, $\bar{\TT}_{\rr}\bar{\pp} = \TT_{\rr}\pp$ also holds. Since for $l_1 > 1$,  $\bar{\beta}_{1,l_1}^- = \beta_{1,l_1}^-$ and $\bar{\beta}_{1,l_1}^+ = \beta_{1,l_1}^+$. 

\item For any $\rr$ s.t. $r_1 \neq 0$ and $r_2 \neq 0$, we need $\bar{\beta}_{1,1}^-\bar{\beta}_{2, l_2^*}^- + \rho \bar{\beta}_{1,1}^+ \bar{\beta}_{2, l_2^*}^+ = u\beta_{1, 1}^-\beta_{2, l_2^*}^- + v \beta_{1, 1}^+\beta_{2, l_2^*}^+$.
\end{itemize}
With six parameters and four constraints, the equation system has infinitely many solutions. Thus the construction exists. Thus the parameters are not identifiable and the condition S$2^*$ is necessary.

\paragraph{Necessity of Condition S3$^*$.}
Suppose $\QQ$-matrix satisfies condition S1 and S$2^*$, but does not satisfy condition S$3^*$, i.e., there exists two columns of the matrix
\[
\begin{pmatrix}
    \QQ_{1:K}^{-1}\\
    \QQ_{K+1:J}
    \end{pmatrix}
\]
are the same, and WLOG, assume they are columns 1 and 2. Since $\QQ$ satisfies S1, $\QQ^1_{1:K} = \cI_K$, so the $\QQ$-matrix can be written as 
\[
\QQ = 
\begin{pmatrix}
    \QQ_1\\
    \hdashline\QQ_2\\
    \hdashline
    \QQ_{3:J}
\end{pmatrix}=
\begin{pmatrix}
\text{item \;1\;\;\;\;\;\;\;\;\;\;} \begin{cases}
    1 &0 \;\; \ldots\\
    \vv_1 & \vv_1 \;\;\ldots
    \end{cases}\\
    \hdashline 
    \text{item \;2\;\;\;\;\;\;\;\;\;\;} \begin{cases}
    0 &1 \;\; \ldots\\
    \vv_2 & \vv_2 \;\;\ldots
    \end{cases}\\
    \hdashline
    \text{item }(3:J) \begin{cases}
    \vv & \vv \;\;\ldots
    \end{cases}
    \end{pmatrix}
    \begin{array}{cc}
         &  \\
         & 
    \end{array}.
\]
We partition $\aaa$ into four groups according to the first and the second attribute:
\begin{align*}
    \bg^{00} &= \{\aaa: \alpha_1 = 0, \alpha_2 = 0\} = \{\aaa = (0, 0, \aaa^*),\;\aaa^* \in \{0, 1\}^{K-2}\},\\
    \bg^{10} &= \{\aaa: \alpha_1 = 1, \alpha_2 = 0\} = \{\aaa = (1, 0, \aaa^*),\;\aaa^* \in \{0, 1\}^{K-2}\},\\
    \bg^{01} &= \{\aaa: \alpha_1 = 0, \alpha_2 = 1\} = \{\aaa = (0, 1, \aaa^*),\;\aaa^* \in \{0, 1\}^{K-2}\},\\
    \bg^{11} &= \{\aaa: \alpha_1 = 1,  \alpha_2 = 1\} = \{\aaa = (1, 1, \aaa^*),\;\aaa^* \in \{0, 1\}^{K-2}\}.
\end{align*} 
So each group has $2^{K-2}$ attribute profiles, and we index the entries in each group by
\begin{align*}
    \resizebox{1.05\linewidth}{!}{$ 
\bg^{00}_1 = (0, 0, \zero), \; \bg^{00}_2 = (0, 0, \ee_1), \ldots, \bg^{00}_K = (0, 0, \ee_{K-2}), \; \bg^{00}_{K+1} = (0, 0, \ee_1+\ee_2), \ldots, \bg^{00}_{2^{K-2}} = \left(0, 0, \sum_{k=1}^{K-2} \ee_k\right),
$}
\end{align*}
where $\ee_1, \ldots, \ee_{K-2} \in \{0,1\}^{K-2}$ have $K-2$ elements. 
Similarly we index the elements of $\bg^{10},\; \bg^{01},\; \bg^{11}$.
Therefore, $\bg^{00}_k$, $\bg^{10}_k$, $\bg^{01}_k$ and $\bg^{11}_k$ for $k \in [2^{K-2}]$ share the same attributes except for the first and second attributes. Index the population proportion parameters $\pp$ in the following way:
\begin{equation*}
    \resizebox{1.06\linewidth}{!}{$
\pp = \begin{pmatrix}
\pp_{\bg^{00}}\\
\pp_{\bg^{10}}\\
\pp_{\bg^{01}}\\
\pp_{\bg^{11}}
\end{pmatrix},\;\;\;
\text{where}\;\;
\pp_{\bg^{00}} = 
\begin{pmatrix}
p_{\bg^{00}_1}\\
p_{\bg^{00}_2}\\
\vdots\\
p_{\bg^{00}_{2^{K-2}}}\\
\end{pmatrix},\;
\pp_{\bg^{10}} = 
\begin{pmatrix}
p_{\bg^{10}_1}\\
p_{\bg^{10}_2}\\
\vdots\\
p_{\bg^{10}_{2^{K-2}}}\\
\end{pmatrix},\;
\pp_{\bg^{01}} = 
\begin{pmatrix}
p_{\bg^{01}_1}\\
p_{\bg^{01}_2}\\
\vdots\\
p_{\bg^{01}_{2^{K-2}}}\\
\end{pmatrix}\;\text{and}\;\,
\pp_{\bg^{11}} = 
\begin{pmatrix}
p_{\bg^{11}_1}\\
p_{\bg^{11}_2}\\
\vdots\\
p_{\bg^{11}_{2^{K-2}}}\\
\end{pmatrix}.\;
$}
\end{equation*}
We now seek to construct $(\bar{\bbb}^+,\bar{\bbb}^-,\bar{\pp})\neq (\bbb^+,\bbb^-,\pp)$ such that $\bar{\TT}_{\rr}\bar{\pp} = \TT_{\rr}\pp$ holds for $\forall \rr$.
\begin{enumerate}
    \item Take $\bbb_{j}^+ = \bar{\bbb}_{j}^+,\;\bbb_{j}^- = \bar{\bbb}_{j}^-$ for $j > 2$.
    \item For $l_1 > 1$,  $\bar{\beta}_{1,l_1}^- = \beta_{1,l_1}^-$ and $\bar{\beta}_{1,l_1}^+ = \beta_{1,l_1}^+$; $\bar{\beta}_{1,1}^+ = \beta_{1,1}^+$. 
    \item For $l_2 > 1$, $\bar{\beta}_{2,l_2}^+ = \beta_{2,l_2}^+$
    and $\bar{\beta}_{2,l_2}^- = \beta_{2,l_2}^-$;
    $\bar{\beta}_{2,1}^+ = \beta_{2,1}^+$.
    \item 
    $\pp_{\bg^{11}} = \bar{\pp}_{\bg^{11}}$,\;
    $\pp_{\bg^{00}} = \rho_1 \pp_{\bg^{11}}$,\;
    $\pp_{\bg^{10}} = \rho_2 \pp_{\bg^{11}}$, \; $\pp_{\bg^{01}} = \rho_3 \pp_{\bg^{11}}$, \;\\
    $\bar{\pp}_{\bg^{00}} = u_1\pp_{\bg^{11}}$,\;
    $\bar{\pp}_{\bg^{10}} = u_2\pp_{\bg^{11}}$\; and \;
    $\bar{\pp}_{\bg^{01}} = u_3\pp_{\bg^{11}}$.
\end{enumerate}
So the remaining parameters are $(\bar\beta_{1,1}^-, \bar\beta_{2,1}^-, u_1, u_2, u_3)$. 

\begin{itemize}
\item For any $\rr$ s.t. $r_1 =0,  r_2 = 0$, the response does not require $\alpha_1$ and $\alpha_2$, so 
$t_{\rr,\, \bg^{00}_k} \equiv t_{\rr, \,\bg^{10}_k} \equiv t_{\rr, \,\bg^{01}_k} \equiv t_{\rr, \,\bg^{11}_k} \; \text{for}\; k \in [2^{K-2}]$.
Since $\bbb_{j}^+ = \bar{\bbb}_{j}^+,\;\bbb_{j}^- = \bar{\bbb}_{j}^-$ for $j > 1$,
\[
t_{\rr,\, \bg^{00}_k} \equiv t_{\rr, \,\bg^{10}_k} \equiv t_{\rr, \,\bg^{01}_k} \equiv t_{\rr, \,\bg^{11}_k} \equiv 
\bar{t}_{\rr,\, \bg^{00}_k} \equiv \bar{t}_{\rr, \,\bg^{10}_k} \equiv \bar{t}_{\rr, \,\bg^{01}_k} \equiv \bar{t}_{\rr, \,\bg^{11}_k},\;\;
\text{for}\; k \in [2^{K-2}].
\]
To ensure $\bar{\TT}_{\rr}\bar{\pp} = \TT_{\rr}\pp$,
it suffices to have $\rho_1 + \rho_2 + \rho_3 = u_1 + u_2 +u_3$.
\item For any $\rr$ s.t. $r_1 =1,  r_2 = 0$, 
it suffices to have 
$\rho_1\beta_{1,1}^- + \rho_2\beta_{1,1}^+ + \rho_3\beta_{1,1}^- + \beta_{1,1}^+ = 
u_1\bar{\beta}_{1,1}^- + u_2\bar{\beta}_{1,1}^+ + u_3\bar{\beta}_{1,1}^- + \bar{\beta}_{1,1}^+ 
$ to ensure $\bar{\TT}_{\rr}\bar{\pp} = \TT_{\rr}\pp$. Since $\beta_{1,1}^+ = \bar{\beta}_{1,1}^+$, it suffices to have
$\rho_1\beta_{1,1}^- + \rho_2\beta_{1,1}^+ + \rho_3\beta_{1,1}^- = 
u_1\bar{\beta}_{1,1}^- + u_2\bar{\beta}_{1,1}^+ + u_3\bar{\beta}_{1,1}^-
$. 
When this is met, for any $\rr$ s.t. $r_1 > 1$, $r_2 = 0$, $\bar{\TT}_{\rr}\bar{\pp} = \TT_{\rr}\pp$ also holds, since for $l_1 > 1$,  $\bar{\beta}_{1,l_1}^- = \beta_{1,l_1}^-$ and $\bar{\beta}_{1,l_1}^+ = \beta_{1,l_1}^+$.
\item For any $\rr$ s.t. $r_1 = 0$, $r_2 = 1$,
it suffices to have $\rho_1\beta_{2,1}^- + \rho_2\beta_{2,1}^- + \rho_3\beta_{2,1}^+ + \beta_{2,1}^+ = 
u_1\bar{\beta}_{2,1}^- + u_2\bar{\beta}_{2,1}^- + u_3\bar{\beta}_{2,1}^+ + \bar{\beta}_{2,1}^+ 
$ to ensure $\bar{\TT}_{\rr}\bar{\pp} = \TT_{\rr}\pp$. Since $\beta_{2,1}^+  = \bar{\beta}_{2,1}^+$, it suffices to have 
$\rho_1\beta_{2,1}^- + \rho_2\beta_{2,1}^- + \rho_3\beta_{2,1}^+ = 
u_1\bar{\beta}_{2,1}^- + u_2\bar{\beta}_{2,1}^- + u_3\bar{\beta}_{2,1}^+$.
When this is met, for any $\rr$ s.t. $r_1 = 0$, $r_2 > 1$, $\bar{\TT}_{\rr}\bar{\pp} = \TT_{\rr}\pp$ also holds, since for $l_2 > 1$,  $\bar{\beta}_{2,l_2}^- = \beta_{2,l_2}^-$ and $\bar{\beta}_{2,l_2}^+ = \beta_{2,l_2}^+$. 

\item For any $\rr$ s.t. $r_1 \neq 0$ and $r_2 \neq 0$, we also need $\rho_1\beta_{1,1}^-\beta_{2,1}^- + \rho_2\beta_{1,1}^+\beta_{2,1}^- + \rho_3\beta_{1,1}^-\beta_{2,1}^+ = 
u_1\bar{\beta}_{1,1}^-\bar{\beta}_{2,1}^- + u_2\bar{\beta}_{1,1}^+\bar{\beta}_{2,1}^- + u_3\bar{\beta}_{1,1}^-\bar{\beta}_{2,1}^+$.
\end{itemize}
With five parameters and four constraints, the equation system has infinitely many solutions, thus the construction exists. So the model parameters are not identifiable and condition S$3^*$ is necessary.

\end{proof}

\subsection{Proofs of Examples}\label{app:ex}
\vspace{8mm}
Our proofs utilize certain results from existing literature, which we have summarized as lemmas below.
\begin{lemma}\label{lem-ex1}
    When $K = 1$, the parameters of the binary DINA model with the following $\QQ$-matrix are identifiable.
    \begin{equation}
        \QQ = \begin{pmatrix}
            1\\
            \hdashline[2pt/2pt]
            1\\
            \hdashline[2pt/2pt]
            1
        \end{pmatrix}
    \end{equation}
\end{lemma}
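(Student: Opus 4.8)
The plan is to treat Lemma~\ref{lem-ex1} as the degenerate case $K=1$ of the binary DINA identifiability result. Since $H_j=1$ for every item, the model here is the binary DINA model, and the $3\times 1$ matrix $\QQ=(1,1,1)^\top$ satisfies Conditions C1--C3: it is complete because it contains $\cI_1=(1)$; the single attribute is required by three items, so C2 holds; and C3 is vacuous when $K=1$, since the submatrix $\QQ^*$ has only one column. Hence the shortest route is to invoke the identifiability theorem for the binary DINA model \citep{id-dina} (equivalently, Theorem~\ref{thm:nesu1} with $H_j\equiv 1$), which gives the claim at once.

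For completeness I would also give a direct argument through Lemma~\ref{lem:t-matrix}. Write $p_0,p_1$ for the two class proportions (both positive, with $p_0+p_1=1$), set $\theta_j^\pm:=\theta_{j,1}^\pm$ and $\delta_j:=\theta_j^+-\theta_j^->0$. Matching $\TT\pp=\bar\TT\bar\pp$ amounts to matching, for every $S\subseteq\{1,2,3\}$, the quantity $\mu_S:=P\!\big(\bigcap_{j\in S}\{R_j=1\}\big)$. A short computation in the two-component mixture gives the pairwise ``covariances''
\begin{equation*}
c_{jk}:=\mu_{\{j,k\}}-\mu_{\{j\}}\mu_{\{k\}}=p_0p_1\,\delta_j\delta_k>0 ,
\end{equation*}
and the third central mixed moment
\begin{equation*}
\kappa:=\mathbb{E}\big[(R_1-\mu_{\{1\}})(R_2-\mu_{\{2\}})(R_3-\mu_{\{3\}})\big]=p_0p_1(p_0-p_1)\,\delta_1\delta_2\delta_3 ,
\end{equation*}
where $\kappa=\mu_{\{1,2,3\}}-\mu_{\{3\}}\mu_{\{1,2\}}-\mu_{\{2\}}\mu_{\{1,3\}}-\mu_{\{1\}}\mu_{\{2,3\}}+2\mu_{\{1\}}\mu_{\{2\}}\mu_{\{3\}}$ is a polynomial in the matched quantities. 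Then $\kappa^{2}/(c_{12}c_{13}c_{23})=(p_0-p_1)^2/(p_0p_1)=(1-4p_0p_1)/(p_0p_1)$ determines $p_0p_1$; next $(p_0-p_1)^2=1-4p_0p_1$ is determined, and $\mathrm{sign}(p_0-p_1)=\mathrm{sign}(\kappa)$ (with $\kappa=0$ forcing $p_0=p_1=\tfrac12$ directly) fixes $p_0-p_1$, so that $p_0,p_1$ are identified; then $\delta_j^{2}=c_{jk}c_{jm}/(c_{km}\,p_0p_1)$ identifies $\delta_j$ as the positive square root, and finally $\theta_j^-=\mu_{\{j\}}-p_1\delta_j$ and $\theta_j^+=\theta_j^-+\delta_j$ are identified.

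I expect no genuine obstacle here; the only points that need care, rather than difficulty, are the positivity and root-selection steps. Every division above requires $c_{jk}>0$, which is precisely where the standing assumptions $p_\alpha>0$ and $\theta_j^+>\theta_j^-$ enter; the positive root of $\delta_j^{2}$ is chosen because $\delta_j>0$; and the sign of $p_0-p_1$ is pinned down by the observable $\kappa$, which plays the role of resolving the label-switching ambiguity that the monotonicity constraint removes in the DINA model. The main work is just the bookkeeping of reducing the $\TT$-matrix equality to these moment identities and checking that each inversion is legitimate under the model's standing assumptions.
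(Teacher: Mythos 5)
Your first route is exactly the paper's proof: the paper disposes of this lemma in one line by citing Theorem~1 of \citet{id-dina}, and your verification that $\QQ=(1,1,1)^\top$ satisfies C1--C3 (with C3 vacuous for $K=1$) is the same observation. Your second, self-contained argument is a genuinely different and correct addition that the paper does not give: reducing $\TT\pp=\bar\TT\bar\pp$ to the moments $\mu_S$, the identities $c_{jk}=p_0p_1\delta_j\delta_k$ and $\kappa=p_0p_1(p_0-p_1)\delta_1\delta_2\delta_3$ both check out, and the inversion chain (recover $p_0p_1$ from $\kappa^2/(c_{12}c_{13}c_{23})$, fix the sign of $p_0-p_1$ from $\kappa$, take positive roots for $\delta_j$, then back out $\theta_j^\pm$) is legitimate precisely under the standing assumptions $p_\alpha>0$ and $\delta_j>0$, which you correctly flag as the only places where care is needed. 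The citation route is shorter and is what the paper intends; your direct computation buys transparency and makes the lemma independent of the external reference, at the cost of bookkeeping that only works this cleanly because $K=1$ reduces the model to a two-component mixture.
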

This lemma is a direct result from Theorem 1 in \citet{id-dina}.
\begin{lemma}\label{lem-ex2}
    Given that the item parameters are known(identified), i.e., only the population proportion parameters $\pp$ need to be identified, 
    and if the $\QQ$-matrix contains an identity matrix, then the binary DINA model is identifiable.
\end{lemma}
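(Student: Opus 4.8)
The plan is to recast the identifiability of $\pp$ (with the item parameters held fixed and known) as a statement about the column rank of the $\TT$-matrix, and then to exhibit an invertible $2^K\times 2^K$ submatrix built from the items that form the identity block of $\QQ$. Specializing the $\TT$-matrix of Section~\ref{sec:tmat} to the binary DINA case $H_j\equiv 1$ (so $\mcs=\{0,1\}^J$), Lemma~\ref{lem:t-matrix} gives that, since the item parameters are known and hence $\bar\TT=\TT$, the parameter $\pp$ is identifiable if and only if $\TT\pp=\TT\bar\pp$ forces $\pp=\bar\pp$, i.e.\ if and only if $\TT$ has full column rank $2^K$. So it suffices to produce $2^K$ linearly independent rows of $\TT$.

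Because $\QQ$ contains $\cI_K$, after relabeling there are items $j_1,\dots,j_K$ with $\qq_{j_k}=\ee_k^\top$, so $\xi_{j_k,\aaa}=\alpha_k$ and $P(R_{j_k}=1\mid\aaa)=(\theta_{j_k,1}^+)^{\alpha_k}(\theta_{j_k,1}^-)^{1-\alpha_k}$. For $S\subseteq[K]$ put $\rr(S)=\sum_{k\in S}\ee_{j_k}\in\mcs$ and collect the rows $\TT_{\rr(S)}$ into a $2^K\times 2^K$ matrix $\mathcal M$, with rows indexed by $S$ (through its indicator vector $\mathbf s\in\{0,1\}^K$) and columns by $\aaa\in\{0,1\}^K$. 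Writing $a_k=\theta_{j_k,1}^+$ and $b_k=\theta_{j_k,1}^-$, its entries factor as
\[
t_{\rr(S),\aaa}=\prod_{k\in S}\big(a_k^{\alpha_k}b_k^{1-\alpha_k}\big)=\prod_{k=1}^{K}\big(a_k^{\alpha_k}b_k^{1-\alpha_k}\big)^{s_k}=\prod_{k=1}^{K}M_k(s_k,\alpha_k),\qquad M_k=\begin{pmatrix}1&1\\ b_k&a_k\end{pmatrix},
\]
so that $\mathcal M=M_1\otimes\cdots\otimes M_K$ under a consistent lexicographic ordering of $\{0,1\}^K$ on rows and columns.

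Since $\det M_k=a_k-b_k\neq 0$ by the monotonicity assumption $\theta_{j,1}^+>\theta_{j,1}^-$, each $M_k$ is invertible, hence so is the Kronecker product $\mathcal M$; in particular $\TT$ has full column rank and the system $\TT\pp=\TT\bar\pp$ has the unique solution $\pp=\bar\pp$, which proves identifiability. The whole argument is linear-algebraic and I expect no substantive obstacle; the only point requiring care is the bookkeeping of the row/column orderings needed to identify $\mathcal M$ with $\bigotimes_{k}M_k$, and making explicit that full column rank already rules out any two distinct $\pp$'s (in particular any two distinct points of the simplex) from producing the same response distribution.
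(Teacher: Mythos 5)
Your proof is correct. Note that the paper does not actually prove this lemma --- it simply cites Theorem~1 of \citet{xu2016} --- so there is no in-paper argument to compare against; your Kronecker-product factorization of the $2^K\times 2^K$ submatrix of $\TT$ built from the identity-block items (each factor $M_k$ invertible because $\theta_{j_k,1}^+>\theta_{j_k,1}^-$) is precisely the standard mechanism underlying that cited result, and it gives a clean, self-contained justification that the paper itself omits. The only point worth flagging is that for the lemma you only need the ``if'' direction of your rank equivalence (full column rank $\Rightarrow$ uniqueness of $\pp$), which is exactly what the invertible submatrix delivers; the ``only if'' claim is harmless here but unnecessary.
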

This lemma is a result from  Theorem 1 in \citet{xu2016}.
\subsubsection{Proof of Example \ref{counterex1}}
Assuming that $0 < \beta_{j,l}^- < \beta_{j,l}^+ \leq 1$, the Sequential DINA model parameters with the following $\QQ$-matrix are identifiable: 
\begin{equation}\label{joke2}
    \QQ = \begin{pmatrix}
    item 1 \begin{cases}
        1 & 1\\
        0 & 1
    \end{cases}\\
        \hdashline[2pt/2pt]
        item 2 \begin{cases}
        1 & 1\\
        \end{cases}\\
        \hdashline[2pt/2pt]
        item 3 \begin{cases}
        1 & 1\\
        \end{cases}\\
        \hdashline[2pt/2pt]
        item 4 \begin{cases}
        1 & 0\\
        \end{cases}\\
        \hdashline[2pt/2pt]
        item 5 \begin{cases}
        1 & 0\\
        \end{cases}\\
        \hdashline[2pt/2pt]
        item 6 \begin{cases}
        1 & 0\\
        \end{cases}\\
\end{pmatrix} \text{and}\;\;
    \QQ^1 = \begin{pmatrix}
         1 & 1\\
        \hdashline[2pt/2pt]
        1 & 1\\
        \hdashline[2pt/2pt]
        1 & 1\\
        \hdashline[2pt/2pt]
        1 & 0\\
        \hdashline[2pt/2pt]
        1 & 0\\
        \hdashline[2pt/2pt]
        1 & 0\\
    \end{pmatrix}.
\end{equation}
\begin{proof}
If we consider the first categories of the first three items, whose $\QQ$-matrix can be written as
\[
\QQ^1_{1:3} = \begin{pmatrix}
    1 & 1\\
    \hdashline[2pt/2pt]
    1 & 1\\
    \hdashline[2pt/2pt]
    1 & 1
\end{pmatrix}.
\]
and we regard $(1,1)$ together as ``one attribute", then the above $\QQ$-matrix can be viewed as the $\QQ$-matrix for the binary DINA model with only one attribute $(1,1)$. Using Lemma~\ref{lem-ex1} and the equations $\TT^s_{\rr} \pp = \bar\TT^s_{\rr} \bar\pp$ with $\rr = \ee_j,\;j \in [3]$ and their sums, we have $p_{(11)} = \bar p_{(11)}$, $\beta_{j,1}^+ = \bar \beta_{j,1}^+$  and $\beta_{j,1}^- = \bar \beta_{j,1}^-$ for $j \in [3]$. Similarly if we consider the first categories of the last three items, we obtain  $p_{(10)} + p_{(11)} = \bar p_{(10)} + \bar p_{(11)}$ and $\beta_{j,1}^+ = \bar \beta_{j,1}^+$, $\beta_{j,1}^- = \bar \beta_{j,1}^-$ for $j \in \{4,5,6\}$. 
Therefore, $p_{(10)} = \bar p_{(10)}$.
Next we identify $\beta_{1,2}^+,\,\beta_{1,2}^-,\,p_{(00)},\,p_{(01)}$: consider equations $\TT^s_{\rr} \pp = \bar\TT^s_{\rr} \bar\pp$ with $\rr\in \{ 2\ee_1,\;\ee_2, \; \ee_4\}$ and their sums.
With $p_{(11)} = \bar p_{(11)}$, $p_{(10)} = \bar p_{(10)}$ and $\beta_{j,1}^+ = \bar \beta_{j,1}^+$, $\beta_{j,1}^- = \bar \beta_{j,1}^-$ for $j \in \{1,2,4\}$, these equations give
\begin{equation*}
    \resizebox{\linewidth}{!}{$
    \begin{pmatrix}
    0 & 1& 0& 0 & 0 & 1& 0& 0 \\
    0 & 0& 0& 1 & 0 & 0& 0& 1 \\
    \beta_{1,1}^- & \beta_{1,1}^-& \beta_{1,1}^-& \beta_{1,1}^+ & \beta_{1,1}^- & \beta_{1,1}^-& \beta_{1,1}^-& \beta_{1,1}^+\\
    \beta_{1,1}^-\beta_{1,2}^- & \beta_{1,1}^-\beta_{1,2}^-& \beta_{1,1}^-\beta_{1,2}^+& \beta_{1,1}^+\beta_{1,2}^+ &\beta_{1,1}^-\bar\beta_{1,2}^- & \beta_{1,1}^-\bar\beta_{1,2}^-& \beta_{1,1}^-\bar\beta_{1,2}^+& \beta_{1,1}^+\bar\beta_{1,2}^+ \\
    \beta_{2,1}^-\beta_{1,1}^- & \beta_{2,1}^-\beta_{1,1}^-& \beta_{2,1}^-\beta_{1,1}^-& \beta_{2,1}^+\beta_{1,1}^+ &\beta_{2,1}^-\beta_{1,1}^- & \beta_{2,1}^-\beta_{1,1}^-& \beta_{2,1}^-\beta_{1,1}^-& \beta_{2,1}^+\beta_{1,1}^+ \\
    \beta_{4,1}^-\beta_{1,1}^-& \beta_{4,1}^+\beta_{1,1}^-& \beta_{4,1}^-\beta_{1,1}^-& \beta_{4,1}^+\beta_{1,1}^+ &\beta_{4,1}^-\beta_{1,1}^- & \beta_{4,1}^+\beta_{1,1}^-& \beta_{4,1}^-\beta_{1,1}^-& \beta_{4,1}^+\beta_{1,1}^+ \\
    \beta_{2,1}^-\beta_{1,1}^-\beta_{1,2}^- & \beta_{2,1}^-\beta_{1,1}^-\beta_{1,2}^-& \beta_{2,1}^-\beta_{1,1}^-\beta_{1,2}^+& \beta_{2,1}^+\beta_{1,1}^+\beta_{1,2}^+ &\beta_{2,1}^-\beta_{1,1}^-\bar\beta_{1,2}^- & \beta_{2,1}^-\beta_{1,1}^-\bar\beta_{1,2}^-& \beta_{2,1}^-\beta_{1,1}^-\bar\beta_{1,2}^+& \beta_{2,1}^+\beta_{1,1}^+\bar\beta_{1,2}^+ \\
    \beta_{4,1}^-\beta_{1,1}^-\beta_{1,2}^- & \beta_{4,1}^+\beta_{1,1}^-\beta_{1,2}^-& \beta_{4,1}^-\beta_{1,1}^-\beta_{1,2}^+& \beta_{4,1}^+\beta_{1,1}^+\beta_{1,2}^+ &\beta_{4,1}^-\beta_{1,1}^-\bar\beta_{1,2}^- & \beta_{4,1}^+\beta_{1,1}^-\bar\beta_{1,2}^-& \beta_{4,1}^-\beta_{1,1}^-\bar\beta_{1,2}^+& \beta_{4,1}^+\beta_{1,1}^+\bar\beta_{1,2}^+
    \end{pmatrix}
\begin{pmatrix}
    p_{(00)}\\
    p_{(10)}\\
    p_{(01)}\\
    p_{(11)}\\
    -\bar p_{(00)}\\
    -\bar p_{(10)}\\
    -\bar p_{(01)}\\
    -\bar p_{(11)}\\
    \end{pmatrix} = \zero,
    $}
\end{equation*}
if we abbreviate 
\begin{equation}\label{tilde pp}
    \tilde \pp = (   
    p_{(00)}\;\;
    p_{(10)}\;\;
    p_{(01)}\;\;
    p_{(11)}\;\;
    -\bar p_{(00)}\;
    -\bar p_{(10)}\;
    -\bar p_{(01)}\;
    -\bar p_{(11)}\;
    -\bar p_{(11)})^\top,
\end{equation}
then
\begin{align*}
    \resizebox{\linewidth}{!}{$
    \begin{pmatrix}
    0 & 0& 0& 1 & 0 & 0& 0& 1 \\
    0 & 0& \beta_{1,1}^-(\beta_{1,2}^+ - \beta_{1,2}^-)& \beta_{1,1}^+(\beta_{1,2}^+ - \beta_{1,2}^-) &\beta_{1,1}^-(\bar\beta_{1,2}^- - \beta_{1,2}^-) & \beta_{1,1}^-(\bar\beta_{1,2}^- - \beta_{1,2}^-)& \beta_{1,1}^-(\bar\beta_{1,2}^+ - \beta_{1,2}^-)& \beta_{1,1}^+(\bar\beta_{1,2}^+ -\beta_{1,2}^-) \\
    0 & 0& \beta_{2,1}^-\beta_{1,1}^-(\beta_{1,2}^+ - \beta_{1,2}^-)& \beta_{2,1}^+\beta_{1,1}^+(\beta_{1,2}^+ - \beta_{1,2}^-) &\beta_{2,1}^-\beta_{1,1}^-(\bar\beta_{1,2}^- - \beta_{1,2}^-)& \beta_{2,1}^-\beta_{1,1}^-(\bar\beta_{1,2}^- - \beta_{1,2}^-)& \beta_{2,1}^-\beta_{1,1}^-(\bar\beta_{1,2}^+ - \beta_{1,2}^-)& \beta_{2,1}^+\beta_{1,1}^+(\bar\beta_{1,2}^+ - \beta_{1,2}^-)\\
    0 & 0 & \beta_{4,1}^-\beta_{1,1}^-(\beta_{1,2}^+ - \beta_{1,2}^-)& \beta_{4,1}^+\beta_{1,1}^+(\beta_{1,2}^+ - \beta_{1,2}^-) &\beta_{4,1}^-\beta_{1,1}^-(\bar\beta_{1,2}^- - \beta_{1,2}^-)& \beta_{4,1}^+\beta_{1,1}^-(\bar\beta_{1,2}^- - \beta_{1,2}^-)& \beta_{4,1}^-\beta_{1,1}^-(\bar\beta_{1,2}^+ - \beta_{1,2}^-)& \beta_{4,1}^+\beta_{1,1}^+(\bar\beta_{1,2}^+ - \beta_{1,2}^-)
    \end{pmatrix}
\tilde\pp = \zero, 
$}
\end{align*}
thus
\begin{align}
    \resizebox{\linewidth}{!}{$
    \begin{pmatrix}
    0 & 0& \beta_{1,1}^-(\beta_{1,2}^+ - \beta_{1,2}^-)& 0 &\beta_{1,1}^-(\bar\beta_{1,2}^- - \beta_{1,2}^-) & \beta_{1,1}^-(\bar\beta_{1,2}^- - \beta_{1,2}^-)& \beta_{1,1}^-(\bar\beta_{1,2}^+ - \beta_{1,2}^-)& \beta_{1,1}^+(\bar\beta_{1,2}^+ -\beta_{1,2}^+) \\
    0 & 0& \beta_{2,1}^-\beta_{1,1}^-(\beta_{1,2}^+ - \beta_{1,2}^-)& 0 &\beta_{2,1}^-\beta_{1,1}^-(\bar\beta_{1,2}^- - \beta_{1,2}^-)& \beta_{2,1}^-\beta_{1,1}^-(\bar\beta_{1,2}^- - \beta_{1,2}^-)& \beta_{2,1}^-\beta_{1,1}^-(\bar\beta_{1,2}^+ - \beta_{1,2}^-)& \beta_{2,1}^+\beta_{1,1}^+(\bar\beta_{1,2}^+ - \beta_{1,2}^+)\\\label{fun1}
    0 & 0 & \beta_{4,1}^-\beta_{1,1}^-(\beta_{1,2}^+ - \beta_{1,2}^-)& 0 &\beta_{4,1}^-\beta_{1,1}^-(\bar\beta_{1,2}^- - \beta_{1,2}^-)& \beta_{4,1}^+\beta_{1,1}^-(\bar\beta_{1,2}^- - \beta_{1,2}^-)& \beta_{4,1}^-\beta_{1,1}^-(\bar\beta_{1,2}^+ - \beta_{1,2}^-)& \beta_{4,1}^+\beta_{1,1}^+(\bar\beta_{1,2}^+ - \beta_{1,2}^+)
    \end{pmatrix}
\tilde\pp = \zero,
    $}
\end{align}
which gives $(\beta_{2,1}^+- \beta_{2,1}^-)\beta_{1,1}^+(\bar\beta_{1,2}^+ - \beta_{1,2}^+) = 0$, so we have $\bar\beta_{1,2}^+ = \beta_{1,2}^+$. Taking this back to \eqref{fun1}, we have
\begin{equation}\label{fun2}
    \beta_{1,1}^-\begin{pmatrix}
    0 & 0 & (\beta_{1,2}^+ - \beta_{1,2}^-)& 0 &(\bar\beta_{1,2}^--\beta_{1,2}^-) & (\bar\beta_{1,2}^- - \beta_{1,2}^-)& (\beta_{1,2}^+ - \beta_{1,2}^-)& 0\\
         0& 0& \beta_{4,1}^-(\beta_{1,2}^+ - \beta_{1,2}^-)& 0 &\beta_{4,1}^-(\bar\beta_{1,2}^- - \beta_{1,2}^-) & \beta_{4,1}^+(\bar\beta_{1,2}^- - \beta_{1,2}^-)& \beta_{4,1}^-(\beta_{1,2}^+ - \beta_{1,2}^-)& 0
    \end{pmatrix}
    \tilde\pp = 0,
\end{equation}
which gives 
\begin{equation*}
    \beta_{1,1}^-\begin{pmatrix}
         0& 0& 0 & 0 & 0 & (\beta_{4,1}^+ - \beta_{4,1}^-)(\bar\beta_{1,2}^- - \beta_{1,2}^-)&  0 & 0
    \end{pmatrix}\tilde\pp = 0.
\end{equation*}
Assuming $\beta_{1,1}^- > 0$, we have $\bar\beta_{1,2}^- = \beta_{1,2}^-$. Taking this back to \eqref{fun2}, we have 
\[
\begin{pmatrix}
    0 & 0 & (\beta_{1,2}^+ - \beta_{1,2}^-)& 0 &0 & 0& (\beta_{1,2}^+ - \beta_{1,2}^-)& 0
\end{pmatrix} \tilde \pp = 0.
\]
Thus $p_{(01)} = \bar p_{(01)}$, and also $p_{(00)} = \bar p_{(00)}$. Therefore, the parameters of the Sequential DINA model with the above $\QQ$-matrix (\ref{joke2}) are identifiable.
\end{proof}

\subsubsection{Proof of Example \ref{notid1}}

The Sequential DINA model parameters with the following $\QQ$-matrix are not identifiable: 
\begin{equation*}
    \QQ = 
    \begin{pmatrix}
    item \; 1
    \begin{cases}
    1 & 0\\   
    \end{cases}\\
    \hdashline[2pt/2pt]
    item \; 2
    \begin{cases}
    0 & 1\\       
    \end{cases}\\
    \hdashline[2pt/2pt]
    item \; 3
    \begin{cases}
    0 & 1\\        
    \end{cases}\\
    \hdashline[2pt/2pt]
    item \; 4
    \begin{cases}
    1 & 1\\     
        1 & 0\\
    \end{cases}
    \end{pmatrix} \;\text{and}\;\;\QQ^1 = \begin{pmatrix}
    1 & 0\\
    \hdashline[2pt/2pt]
    0 & 1\\
    \hdashline[2pt/2pt]
    0 & 1\\
    \hdashline[2pt/2pt]
    1 & 1
    \end{pmatrix}. 
\end{equation*}
\begin{proof}
    Let $\bar\beta_{1,1}^+ = \beta_{1,1}^+,\; \bar\beta_{2,1}^+ = \beta_{2,1}^+\;\bar\beta_{3,1}^+ = \beta_{3,1}^+, \;\bar\beta_{4,2}^+ = \beta_{4,2}^+$, $\bar\beta_{2,1}^- = \beta_{2,1}^-,\;\bar\beta_{3,1}^- = \beta_{3,1}^-$, $\bar\beta_{4,1}^- = \beta_{4,1}^- = 0$. Since $\beta_{4,1}^- = 0$, $\beta_{4,2}^-$ is not defined (or equals 0). Then $\bar\TT\bar\pp = \TT\pp$ holds if and only if the following equations hold:
\begin{equation*}
\resizebox{1\hsize}{!}{$
\begin{cases}
 \bar p_{(00)} + \bar p_{(10)} + \bar p_{(01)} +  \bar p_{(11)} 
    = p_{(00)} + p_{(10)} + p_{(01)}+ p_{(11)};\\
    \bar p_{(01)} +  \bar p_{(11)} 
    =  p_{(01)} +   p_{(11)};\\
\bar \beta_{1,1}^-[\bar p_{(00)} +  \bar p_{(01)}] + \beta_{1,1}^+[\bar p_{(10)} +   \bar p_{(11)}] 
    = \beta_{1,1}^-[ p_{(00)} +   p_{(01)}] + \beta_{1,1}^+[ p_{(10)} +   p_{(11)}];\\
\bar\beta_{1,1}^-\beta_{2,1}^-\bar p_{(00)} +  \bar\beta_{1,1}^-\beta_{2,1}^+\bar p_{(01)} +  
\beta_{1,1}^+\beta_{2,1}^-\bar p_{(10)} + 
\beta_{1,1}^+\beta_{2,1}^+ \bar p_{(11)}
    =  \beta_{1,1}^-\beta_{2,1}^- p_{(00)} +   \beta_{1,1}^-\beta_{2,1}^+p_{(01)} +   \beta_{1,1}^+\beta_{2,1}^-p_{(10)} +   \beta_{1,1}^+\beta_{2,1}^+p_{(11)};\\
    \bar \beta_{4,1}^+ \bar p_{(11)} = \beta_{4,1}^+ p_{(11)}.
\end{cases}
$}
\end{equation*}
There are five equations with six parameters ($p_{(00)}, p_{(10)}, p_{(01)}, p_{(11)}, \beta_{1,1}^-, \beta_{4,1}^+$), thus there are infinitely many solutions and the parameters are not identifiable.
\end{proof}

\subsubsection{Proof of Example \ref{notid2}}
The Sequential DINA model parameters with the following $\QQ$-matrix are not identifiable: 
\begin{equation*}
    \QQ = 
    \begin{pmatrix}
    item \; 1
    \begin{cases}
    1 & 0\\   
    \end{cases}\\
    \hdashline[2pt/2pt]
    item \; 2
    \begin{cases}
    0 & 1\\       
    \end{cases}\\
    \hdashline[2pt/2pt]
    item \; 3
    \begin{cases}
    1 & 1\\        
    1 & 0\\
    \end{cases}\\
    \hdashline[2pt/2pt]
    item \; 4
    \begin{cases}
    1 & 1\\        
    \end{cases}
    \end{pmatrix} \;\text{and}\;\;\QQ^1 = \begin{pmatrix}
    1 & 0\\
    \hdashline[2pt/2pt]
    0 & 1\\
    \hdashline[2pt/2pt]
    1 & 1\\
    \hdashline[2pt/2pt]
    1 & 1
    \end{pmatrix}. 
\end{equation*}
\begin{proof}
Let $\bar\beta_{j,l}^+ = \beta_{j,l}^+$ for all $j$ and $l \in [H_j]$, $\bar\beta_{4,1}^- = \beta_{4,1}^-$, $\bar\beta_{3,1}^- = \beta_{3,1}^- = 0$, and $\bar p_{(11)} = p_{(11)}$. Since $\beta_{3,1}^- = 0$, $\beta_{3,2}^-$ is not defined (or equals 0). Then $\bar\TT\bar\pp = \TT\pp$ holds if and only if the following equations hold:
\begin{equation*}
\resizebox{1\hsize}{!}{$
\begin{cases}
 \bar p_{(00)} + \bar p_{(10)} + \bar p_{(01)}
    = p_{(00)} + p_{(10)} + p_{(01)};\\
\bar \beta_{1,1}^-[\bar p_{(00)} +  \bar p_{(01)}] + \beta_{1,1}^+[\bar p_{(10)} +   p_{(11)}] 
    = \beta_{1,1}^-[ p_{(00)} +   p_{(01)}] + \beta_{1,1}^+[ p_{(10)} +   p_{(11)}];\\
\bar \beta_{2,1}^-[\bar p_{(00)} +  \bar p_{(10)}] + \beta_{2,1}^+[\bar p_{(01)} +  p_{(11)}] 
    = \beta_{2,1}^-[ p_{(00)} +   p_{(10)}] + \beta_{2,1}^+[ p_{(01)} +   p_{(11)}];\\
\bar\beta_{1,1}^-\bar \beta_{2,1}^-\bar p_{(00)} +  \bar\beta_{1,1}^-\bar \beta_{2,1}^+\bar p_{(01)} +  
\beta_{1,1}^+\beta_{2,1}^-\bar p_{(10)} + 
\beta_{1,1}^+\beta_{2,1}^+ p_{(11)}
    =  \beta_{1,1}^-\beta_{2,1}^- p_{(00)} +   \beta_{1,1}^-\beta_{2,1}^+p_{(01)} +   \beta_{1,1}^+\beta_{2,1}^-p_{(10)} +   \beta_{1,1}^+\beta_{2,1}^+p_{(11)}.
\end{cases}
$}
\end{equation*}
There are four equations with five parameters ($p_{(00)}, p_{(10)}, p_{(01)}, \beta_{1,1}^-, \beta_{2,1}^-$), thus there are infinitely many solutions and the parameters are not identifiable.
\end{proof}

\subsubsection{Proof of Example~\ref{counterex2}}
The Sequential DINA model parameters with the following $\QQ$-matrix are identifiable: 
\begin{equation}
    \QQ = \begin{pmatrix}
    item \;1
    \begin{cases}
        1 & 0\\
    0 & 1\\
    \end{cases}\\
    \hdashline[2pt/2pt]
    item \;2
    \begin{cases}
    1 & 0\\
    0 & 1\\        
    \end{cases}\\
    \hdashline[2pt/2pt]
    item \;3
    \begin{cases}
    0 & 1\\
    1 & 0\\
    \end{cases}\\
    \hdashline[2pt/2pt]
    item \; 4
    \begin{cases}
    0 & 1\\
    1 & 0\\        
    \end{cases}
    \end{pmatrix} \;\text{and}\;\;\QQ^1 = \begin{pmatrix}
    1 & 0\\
    \hdashline[2pt/2pt]
    1 & 0\\
    \hdashline[2pt/2pt]
    0 & 1\\
    \hdashline[2pt/2pt]
    0 & 1\\
    \end{pmatrix}. 
\end{equation}
\begin{proof}
Consider equations $\TT^s_{\rr} \pp = \bar\TT^s_{\rr} \bar\pp$ with $\rr \in \{ \ee_1,\,2\ee_1,\,\;\ee_3, \; \ee_4 \}$ and their sums. Using a similar calculation, 
these equations give
\begin{align}\label{kk1}
\resizebox{1.1\hsize}{!}{$
    \begin{pmatrix} 
    \zero_4^\top & \bar\beta_{1,1}^-(\bar\beta_{1,2}^- - \beta_{1,2}^-)(\bar \beta_{3,1}^- - \beta_{3,1}^+) & \bar\beta_{1,1}^+(\bar\beta_{1,2}^- - \beta_{1,2}^-)(\bar \beta_{3,1}^- - \beta_{3,1}^+)& \bar\beta_{1,1}^-(\bar\beta_{1,2}^+ - \beta_{1,2}^-)(\bar \beta_{3,1}^+ - \beta_{3,1}^+)& \bar\beta_{1,1}^+(\bar\beta_{1,2}^+ - \beta_{1,2}^-) (\bar \beta_{3,1}^+ - \beta_{3,1}^+)\\
    \zero_4^\top &\bar\beta_{1,1}^-(\bar\beta_{1,2}^- - \beta_{1,2}^-)(\bar \beta_{3,1}^- - \beta_{3,1}^+)\bar \beta_{4,1}^-& \bar\beta_{1,1}^+(\bar\beta_{1,2}^- - \beta_{1,2}^-)(\bar \beta_{3,1}^- - \beta_{3,1}^+)\bar \beta_{4,1}^-& \bar\beta_{1,1}^-(\bar\beta_{1,2}^+ - \beta_{1,2}^-)(\bar \beta_{3,1}^+ - \beta_{3,1}^+)\bar \beta_{4,1}^+& \bar\beta_{1,1}^+(\bar\beta_{1,2}^+ - \beta_{1,2}^-) (\bar \beta_{3,1}^+ - \beta_{3,1}^+)\bar \beta_{4,1}^+
    \end{pmatrix}
    \Tilde{\pp} = \zero,
$}
\end{align}
where $\zero_4$ denotes all-zero vectors of dimension four and $\tilde\pp$ is \eqref{tilde pp}. Thus,
\begin{equation*}
        \resizebox{\linewidth}{!}{$ 
    \begin{pmatrix} 
\zero_5^\top & \bar\beta_{1,1}^-(\bar\beta_{1,2}^+ - \beta_{1,2}^-)(\bar \beta_{3,1}^+ - \beta_{3,1}^+)(\bar \beta_{4,1}^+ - \bar \beta_{4,1}^-)& \bar\beta_{1,1}^+(\bar\beta_{1,2}^+ - \beta_{1,2}^-) (\bar \beta_{3,1}^+ - \beta_{3,1}^+)(\bar \beta_{4,1}^+ - \bar \beta_{4,1}^-)
    \end{pmatrix}
    \Tilde{\pp} = \zero.
$}
\end{equation*}
Therefore, \[(\bar\beta_{1,1}^-\bar p_{01} + \bar\beta_{1,1}^+\bar p_{11})
(\bar\beta_{1,2}^+ - \beta_{1,2}^-) (\bar \beta_{3,1}^+ - \beta_{3,1}^+)(\bar \beta_{4,1}^+ - \bar \beta_{4,1}^-) = 0.\] Since $0 \leq \beta_{j,l}^- < \beta_{j,l}^+ \leq 1$, $(\bar\beta_{1,2}^+ - \beta_{1,2}^-) (\bar \beta_{3,1}^+ - \beta_{3,1}^+) = 0$. But if $\bar \beta_{3,1}^+ \neq \beta_{3,1}^+$ and $\bar\beta_{1,2}^+ = \beta_{1,2}^-$, we can swap the parameters $\bbb$ and $\bar\bbb$, and show that $\beta_{1,2}^+ = \bar\beta_{1,2}^-$ by symmetry. Yet this will indicate that $\beta_{1,2}^+ = \bar\beta_{1,2}^- < \bar\beta_{1,2}^+ = \beta_{1,2}^-$, which leads to a contradiction, thus we must have $\bar \beta_{3,1}^+ = \beta_{3,1}^+$. By symmetry we can also show that $\bar\beta_{4,1}^+ = \beta_{4,1}^+$, and similarly for $\bar\beta_{1,1}^+ = \beta_{1,1}^+$ and $\bar \beta_{2,1}^+ = \beta_{2,1}^+$. Taking these back to equation (\ref{kk1}), we have $\bar \beta_{1,2}^- = \beta_{1,2}^-$. Samely we obtain $\bar \beta_{j,2}^- = \beta_{j,2}^-$ for $j \in [4]$. Thus all the item parameters are identified, and according to Lemma~\ref{lem-ex2},
we know that when these parameters are known, the completeness of the $\QQ$-matrix will suffice to identify parameters $\pp$. Therefore, we must have $(\bbb^+, \bbb^-, \pp) = (\bar\bbb^+, \bar\bbb^-, \bar\pp)$, i.e., the model parameters are identifiable.
\end{proof}

\subsubsection{Proof of Example~\ref{counterex3}}
The Sequential DINA model parameters with the following $\QQ$-matrix are identifiable: 

\begin{equation*}
    \QQ = 
    \begin{pmatrix}
    item \; 1
    \begin{cases}
    1 & 0\\
    0 & 1\\        
    \end{cases}\\
    \hdashline[2pt/2pt]
    item \; 2
    \begin{cases}
    0 & 1\\
    1 & 0\\        
    \end{cases}\\
    \hdashline[2pt/2pt]
    item \; 3
    \begin{cases}
    1 & 1\\        
    \end{cases}\\
    \hdashline[2pt/2pt]
    item \; 4
    \begin{cases}
    1 & 1\\        
    \end{cases}\\
    \hdashline[2pt/2pt]
    item \; 5
    \begin{cases}
    1 & 1\\        
    \end{cases}
    \end{pmatrix} \;\text{and}\;\;\QQ^1 = \begin{pmatrix}
    1 & 0\\
    \hdashline[2pt/2pt]
    0 & 1\\
    \hdashline[2pt/2pt]
    1 & 1\\
    \hdashline[2pt/2pt]
    1 & 1\\
    \hdashline[2pt/2pt]
    1 & 1\\
    \end{pmatrix}. 
\end{equation*}
\begin{proof}
Consider the last three items, similar to the proof in Example \ref{counterex1}, 
according to Lemma~\ref{lem-ex1},
we have $p_{(11)} = \bar p_{(11)}$, and $\beta_{j,1}^+ = \bar\beta_{j,1}^+$, $\beta_{j,1}^- = \bar\beta_{j,1}^-$ for $j \in \{3,4,5\}$. 
Next using equations $\TT^s_{\rr} \pp = \bar\TT^s_{\rr} \bar\pp$ with $\rr = \ee_1, \ee_1 + \ee_3$, we have
\begin{equation*}
    \begin{pmatrix}
    \beta_{1,1}^- & \beta_{1,1}^+& \beta_{1,1}^-& \beta_{1,1}^+&\bar\beta_{1,1}^- & \bar\beta_{1,1}^+& \bar\beta_{1,1}^-& \bar\beta_{1,1}^+\\ 
    \beta_{1,1}^- \beta_{3,1}^-& \beta_{1,1}^+ \beta_{3,1}^-& \beta_{1,1}^- \beta_{3,1}^-& \beta_{1,1}^+ \beta_{3,1}^+ &\bar\beta_{1,1}^-\beta_{3,1}^-& \bar\beta_{1,1}^+ \beta_{3,1}^-& \bar\beta_{1,1}^-\beta_{3,1}^-& \bar\beta_{1,1}^+ \beta_{3,1}^+
    \end{pmatrix}
    \tilde \pp = \zero,
\end{equation*}
which gives \[
\begin{pmatrix}
   0&0& 0& \beta_{1,1}^+ (\beta_{3,1}^+ - \beta_{3,1}^-)&0& 0& 0& \bar\beta_{1,1}^+ (\beta_{3,1}^+ - \beta_{3,1}^-)
    \end{pmatrix}
    \Tilde{\pp} = 0,
\] 
where $\tilde \pp$ is given by \eqref{tilde pp}.
Since $p_{(11)} = \bar p_{(11)}$, we must have $\beta_{1,1}^+ = \bar\beta_{1,1}^+$. Next combining $\rr = 2\ee_1$ and $\rr = 2\ee_1 + \ee_3$, 
\begin{equation*}
\resizebox{.99\textwidth}{!}{$\begin{pmatrix}
    \beta_{1,1}^-\beta_{1,2}^- & \beta_{1,1}^+\beta_{1,2}^-& \beta_{1,1}^-\beta_{1,2}^+& \beta_{1,1}^+\beta_{1,2}^+ &\bar\beta_{1,1}^-\bar\beta_{1,2}^- & \bar\beta_{1,1}^+\bar\beta_{1,2}^-& \bar\beta_{1,1}^-\bar\beta_{1,2}^+& \bar\beta_{1,1}^+\bar\beta_{1,2}^+\\ 
    \beta_{1,1}^-\beta_{1,2}^- \beta_{3,1}^-& \beta_{1,1}^+\beta_{1,2}^- \beta_{3,1}^-& \beta_{1,1}^-\beta_{1,2}^+ \beta_{3,1}^-& \beta_{1,1}^+\beta_{1,2}^+ \beta_{3,1}^+ &\bar\beta_{1,1}^-\bar\beta_{1,2}^- \beta_{3,1}^-& \bar\beta_{1,1}^+\bar\beta_{1,2}^- \beta_{3,1}^-& \bar\beta_{1,1}^-\bar\beta_{1,2}^+\beta_{3,1}^-& \bar\beta_{1,1}^+\bar\beta_{1,2}^+ \beta_{3,1}^+
    \end{pmatrix}
    \tilde \pp = \zero,$}
\end{equation*}
which gives $\beta_{1,2}^+ = \bar\beta_{1,2}^+$. Similarly we can obtain 
$\beta_{2,1}^+ = \bar\beta_{2,1}^+$ and $\beta_{2,2}^+ = \bar\beta_{2,2}^+$. Next consider $\rr \in \{ \ee_1, 2\ee_1,\, \ee_1+\ee_2, 2\ee_1+\ee_2\}$, using similar calculations, we have,
% \begin{frame}{}
%     \resizebox{\linewidth}{!}{%
% \begin{equation*}
%     \begin{pmatrix}
%     \beta_{1,1}^- & \beta_{1,1}^+& \beta_{1,1}^-& \beta_{1,1}^+&\bar\beta_{1,1}^- & \beta_{1,1}^+& \bar\beta_{1,1}^-& \beta_{1,1}^+\\ 
%     \beta_{1,1}^-\beta_{1,2}^- & \beta_{1,1}^+\beta_{1,2}^-& \beta_{1,1}^-\beta_{1,2}^+& \beta_{1,1}^+\beta_{1,2}^+ &\bar\beta_{1,1}^-\bar\beta_{1,2}^- & \beta_{1,1}^+\bar\beta_{1,2}^-& \bar\beta_{1,1}^-\beta_{1,2}^+& \beta_{1,1}^+\beta_{1,2}^+\\ 
%     \beta_{1,1}^- \beta_{2,1}^-& \beta_{1,1}^+ \beta_{2,1}^-& \beta_{1,1}^- \beta_{2,1}^+& \beta_{1,1}^+ \beta_{2,1}^+ &\bar\beta_{1,1}^-\bar \beta_{2,1}^-& \beta_{1,1}^+ \bar \beta_{2,1}^-& \bar\beta_{1,1}^-\beta_{2,1}^+& \beta_{1,1}^+ \beta_{2,1}^+\\
%     \beta_{1,1}^-\beta_{1,2}^- \beta_{2,1}^-& \beta_{1,1}^+\beta_{1,2}^- \beta_{2,1}^-& \beta_{1,1}^-\beta_{1,2}^+ \beta_{2,1}^+& \beta_{1,1}^+\beta_{1,2}^+ \beta_{2,1}^+ &\bar\beta_{1,1}^-\bar\beta_{1,2}^- \bar \beta_{2,1}^-& \beta_{1,1}^+\bar\beta_{1,2}^- \bar \beta_{2,1}^-& \bar\beta_{1,1}^-\beta_{1,2}^+\beta_{2,1}^+& \beta_{1,1}^+\beta_{1,2}^+ \beta_{2,1}^+
%     \end{pmatrix}
%     \tilde \pp = \zero
% \end{equation*}
% }
% \end{frame}\\
% \Rightarrow
\begin{equation*}
\resizebox{1.1\hsize}{!}{$
    \begin{pmatrix}
    0 & 0& \beta_{1,1}^-(\beta_{1,2}^+ - \beta_{1,2}^-)& \beta_{1,1}^+(\beta_{1,2}^+ - \beta_{1,2}^-)&\bar\beta_{1,1}^-(\bar\beta_{1,2}^- - \beta_{1,2}^-)& \beta_{1,1}^+(\bar\beta_{1,2}^- - \beta_{1,2}^-)& \bar\beta_{1,1}^-(\beta_{1,2}^+ - \beta_{1,2}^-)& \beta_{1,1}^+(\beta_{1,2}^+ - \beta_{1,2}^-)\\ 
    0& 0& \beta_{1,1}^-(\beta_{1,2}^+ - \beta_{1,2}^-)\beta_{2,1}^+& \beta_{1,1}^+(\beta_{1,2}^+ - \beta_{1,2}^-)\beta_{2,1}^+ &\bar\beta_{1,1}^-(\bar\beta_{1,2}^- - \beta_{1,2}^-)\bar \beta_{2,1}^-& \beta_{1,1}^+(\bar\beta_{1,2}^- - \beta_{1,2}^-)\bar \beta_{2,1}^-& \bar\beta_{1,1}^-(\beta_{1,2}^+ - \beta_{1,2}^-)\beta_{2,1}^+& \beta_{1,1}^+(\beta_{1,2}^+ - \beta_{1,2}^-) \beta_{2,1}^+
    \end{pmatrix}
    \Tilde{\pp} = \zero,
$}
\end{equation*}
which gives \[
(\bar\beta_{1,1}^- \bar p_{(00)} +
\bar\beta_{1,1}^+ \bar p_{(10)})(\bar\beta_{1,2}^- - \beta_{1,2}^-)(\bar \beta_{2,1}^- - \beta_{2,1}^+) = 0,
\] 
thus we have $\bar\beta_{1,2}^- = \beta_{1,2}^-$.
Finally we identify $\beta_{2,1}^-$,
\begin{equation*}
\resizebox{1.1\hsize}{!}{$
    \begin{pmatrix}
    \beta_{1,1}^- & \beta_{1,1}^+& \beta_{1,1}^-& \beta_{1,1}^+&\bar\beta_{1,1}^- & \beta_{1,1}^+& \bar\beta_{1,1}^-& \beta_{1,1}^+\\ 
    \beta_{1,1}^-\beta_{1,2}^- & \beta_{1,1}^+\beta_{1,2}^-& \beta_{1,1}^-\beta_{1,2}^+& \beta_{1,1}^+\beta_{1,2}^+ &\bar\beta_{1,1}^-\bar\beta_{1,2}^- & \beta_{1,1}^+\bar\beta_{1,2}^-& \bar\beta_{1,1}^-\beta_{1,2}^+& \beta_{1,1}^+\beta_{1,2}^+\\ 
    \beta_{1,1}^- \beta_{2,1}^-& \beta_{1,1}^+ \beta_{2,1}^-& \beta_{1,1}^- \beta_{2,1}^+& \beta_{1,1}^+ \beta_{2,1}^+ &\bar\beta_{1,1}^-\bar \beta_{2,1}^-& \beta_{1,1}^+ \bar \beta_{2,1}^-& \bar\beta_{1,1}^-\beta_{2,1}^+& \beta_{1,1}^+ \beta_{2,1}^+\\
    \beta_{1,1}^-\beta_{1,2}^- \beta_{2,1}^-& \beta_{1,1}^+\beta_{1,2}^- \beta_{2,1}^-& \beta_{1,1}^-\beta_{1,2}^+ \beta_{2,1}^+& \beta_{1,1}^+\beta_{1,2}^+ \beta_{2,1}^+ &\bar\beta_{1,1}^-\bar\beta_{1,2}^- \bar \beta_{2,1}^-& \beta_{1,1}^+\bar\beta_{1,2}^- \bar \beta_{2,1}^-& \bar\beta_{1,1}^-\beta_{1,2}^+\beta_{2,1}^+& \beta_{1,1}^+\beta_{1,2}^+ \beta_{2,1}^+
    \end{pmatrix}
    \tilde \pp = \zero
$}
\end{equation*}
\vspace{-9mm}
\begin{equation*}
\resizebox{1.1\hsize}{!}{$
\Rightarrow
\begin{pmatrix}
    0& 0& \beta_{1,1}^- (\beta_{2,1}^+ - \beta_{2,1}^-)& \beta_{1,1}^+ (\beta_{2,1}^+ - \beta_{2,1}^-)&\bar\beta_{1,1}^- (\bar \beta_{2,1}^- - \beta_{2,1}^-)& \beta_{1,1}^+(\bar \beta_{2,1}^- - \beta_{2,1}^-)& \bar\beta_{1,1}^-(\beta_{2,1}^+ - \beta_{2,1}^-)& \beta_{1,1}^+ (\beta_{2,1}^+ - \beta_{2,1}^-)\\
    0& 0& \beta_{1,1}^-\beta_{1,2}^+ (\beta_{2,1}^+ - \beta_{2,1}^-)& \beta_{1,1}^+\beta_{1,2}^+ (\beta_{2,1}^+ - \beta_{2,1}^-)&\bar\beta_{1,1}^-\bar\beta_{1,2}^- (\bar \beta_{2,1}^- - \beta_{2,1}^-)& \beta_{1,1}^+\bar\beta_{1,2}^- (\bar \beta_{2,1}^- - \beta_{2,1}^-)& \bar\beta_{1,1}^-\beta_{1,2}^+(\beta_{2,1}^+ - \beta_{2,1}^-)& \beta_{1,1}^+\beta_{1,2}^+ (\beta_{2,1}^+ - \beta_{2,1}^-)
    \end{pmatrix}
    \Tilde{\pp} = \zero.
$}
\end{equation*}
Hence 
\[
\begin{pmatrix}
    0& 0& 0& 0&\bar\beta_{1,1}^-(\bar\beta_{1,2}^- - \beta_{1,2}^+)(\bar \beta_{2,1}^- - \beta_{2,1}^-)& \beta_{1,1}^+(\bar\beta_{1,2}^- - \beta_{1,2}^+) (\bar \beta_{2,1}^- - \beta_{2,1}^-)& 0& 0
    \end{pmatrix}
    \Tilde{\pp} = 0.
\] 
Therefore, we have \[(\bar\beta_{1,1}^- \bar p_{(00)} +
\bar\beta_{1,1}^+ \bar p_{(10)})(\bar\beta_{1,2}^- - \beta_{1,2}^+) (\bar \beta_{2,1}^- - \beta_{2,1}^-) = 0,\] thus $\bar \beta_{2,1}^- = \beta_{2,1}^-$.
Using the same strategy we can show that $\bar \beta_{2,2}^- = \beta_{2,2}^-$ and $\bar \beta_{1,1}^- = \beta_{1,1}^-$, which completes the proof.
\end{proof}

\end{document}